\documentclass[sigconf,screen]{acmart}

\graphicspath{{figures/}}

\usepackage{enumitem}
\usepackage[capitalize]{cleveref}
\crefname{proposition}{Proposition}{Propositions}
\crefname{theorem}{Theorem}{Theorems}
\crefname{equation}{Equation}{Equations}
\crefname{figure}{Figure}{Figures}
\crefname{table}{Table}{Tables}
\usepackage{algorithm}
\usepackage{algpseudocode}
\usepackage{pifont}

\AtBeginDocument{%
  }

\setcopyright{none}  

\newcommand{\Trust}{\mathcal{T}}                    
\newcommand{\AgentSet}{\mathcal{A}}                 
\newcommand{\AgentAdv}{\mathcal{A}^{\mathrm{adv}}}  
\newcommand{\AgentBen}{\mathcal{A}^{\mathrm{ben}}}  

\newcommand{\Weights}{\mathbf{w}}                   
\newcommand{\WeightsOpt}{\mathbf{w}^{*}}            
\newcommand{\Returns}{\boldsymbol{\mu}}             
\newcommand{\Covariance}{\boldsymbol{\Sigma}}       
\newcommand{\CovarianceStress}{\boldsymbol{\Sigma}^{\mathrm{stress}}} 
\newcommand{\RiskAversion}{\lambda}                 

\newcommand{\StateSentiment}{s^{\psi}}              
\newcommand{\RiskState}{\mathcal{R}}                

\newcommand{\PegTarget}{\pi^{*}}                    
\newcommand{\PegDev}{\delta_{\pi}}                  
\newcommand{\RecoveryTime}{T_{\mathrm{rec}}}        
\newcommand{\Turnover}{\mathcal{C}}                 

\newcommand{\Expect}{\mathbb{E}}                    

\newcommand{\Adversary}{\mathcal{M}}                

\newcommand{\BlackThursday}{\textsc{Black Thursday}}
\newcommand{\SASModel}{\textsc{SAS}}                

\newcommand{\cmark}{\ding{51}}                      
\newcommand{\xmark}{\ding{55}}                      

\begin{document}


\title{Stablecoin Design with Adversarial-Robust Multi-Agent Systems via Trust-Weighted Signal Aggregation}

\author{Shengwei You}
\email{syou@nd.edu}
\orcid{0000-0003-3156-1372}
\affiliation{%
  \institution{University of Notre Dame}
  \department{Department of Computer Science}
  \city{South Bend}
  \state{IN}
  \country{USA}
  \postcode{46556}
}

\author{Aditya Joshi}
\email{ajoshi2@nd.edu}
\orcid{0009-0000-1066-882X}
\affiliation{%
  \institution{University of Notre Dame}
  \department{Department of Computer Science}
  \city{South Bend}
  \state{IN}
  \country{USA}
  \postcode{46556}
}

\author{Andrey Kuehlkamp}
\email{akuehlka@nd.edu}
\orcid{0000-0003-1971-5420}
\affiliation{%
  \institution{University of Notre Dame}
  \department{Department of Computer Science}
  \city{South Bend}
  \state{IN}
  \country{USA}
  \postcode{46556}
}

\author{Jarek Nabrzyski}
\email{xli22@nd.edu}
\orcid{0000-0002-3985-3620}
\affiliation{%
  \institution{University of Notre Dame}
  \department{Department of Computer Science}
  \city{South Bend}
  \state{IN}
  \country{USA}
  \postcode{46556}
}

\renewcommand{\shortauthors}{Candidate Name}

\begin{abstract}
Algorithmic stablecoins promise decentralized monetary stability by maintaining a target peg $\PegTarget$ through programmatic reserve management. Yet, their reserve controllers remain vulnerable to \emph{regime-blind} optimization, calibrating risk parameters exclusively on fair-weather data while ignoring tail events that precipitate cascading failures. The March 2020 \BlackThursday{} collapse, wherein MakerDAO's collateral auctions yielded \$8.3M in protocol losses and a 15\% peg deviation, exposed a critical security gap: existing models such as the Stable Aggregate Stablecoin (\SASModel{}) systematically omit extreme volatility regimes from their covariance estimates $\Covariance$, producing reserve allocations that are optimal \emph{in expectation} but catastrophic \emph{under adversarial stress}.
  
  We present \textbf{MVF-Composer}, a trust-weighted Mean-Variance Frontier reserve controller that incorporates a novel \emph{Stress Harness} for risk-state estimation. Our key insight is to deploy multi-agent simulations as adversarial stress-testers: heterogeneous agents (traders, liquidity providers, attackers) execute protocol actions under crisis scenarios, exposing reserve vulnerabilities \emph{before} they manifest on-chain. We formalize a trust-scoring mechanism $\Trust: \AgentSet \to [0,1]$ that down-weights signals from agents exhibiting manipulative or adversarial behavior, ensuring the risk-state estimator $\RiskState$ remains robust to signal injection and Sybil attacks.
  
  Across 1,200 randomized market scenarios with injected Black-Swan shocks (10\% collateral drawdown, 50\% sentiment collapse, coordinated redemption attacks), MVF-Composer reduces peak peg deviation by 57\% and mean recovery time by 3.1$\times$ relative to \SASModel{} baselines. Ablation studies confirm that the trust layer alone accounts for 23\% of stability gains under adversarial conditions, achieving 72\% adversarial agent detection. Our system executes on commodity hardware, requires no on-chain oracles beyond standard price feeds, and provides a reproducible framework for stress-testing DeFi reserve policies against adversarial attacks that prior work has systematically ignored.
\end{abstract}



\begin{CCSXML}
<ccs2012>
   <concept>
       <concept_id>10002978.10003006.10003013</concept_id>
       <concept_desc>Security and privacy~Distributed systems security</concept_desc>
       <concept_significance>500</concept_significance>
       </concept>
   <concept>
       <concept_id>10010147.10010178.10010187</concept_id>
       <concept_desc>Computing methodologies~Multi-agent systems</concept_desc>
       <concept_significance>500</concept_significance>
       </concept>
   <concept>
       <concept_id>10010405.10003550.10003551</concept_id>
       <concept_desc>Applied computing~Digital cash</concept_desc>
       <concept_significance>300</concept_significance>
       </concept>
   <concept>
       <concept_id>10002978.10003029.10003032</concept_id>
       <concept_desc>Security and privacy~Trust frameworks</concept_desc>
       <concept_significance>300</concept_significance>
       </concept>
  </ccs2012>
\end{CCSXML}

\ccsdesc[500]{Security and privacy~Distributed systems security}
\ccsdesc[500]{Computing methodologies~Multi-agent systems}
\ccsdesc[300]{Applied computing~Digital cash}
\ccsdesc[300]{Security and privacy~Trust frameworks}

\keywords{Multi-Agent Systems, Adversarial Robustness, Trust Mechanisms, Byzantine Fault Tolerance, Sybil Resistance, Agent-Based Security, DeFi Security, Signal Aggregation}


\maketitle


\section{Introduction}
\label{sec:intro}
Multi-agent systems deployed in adversarial environments—from distributed consensus protocols to collaborative AI—face a fundamental security challenge: how to aggregate signals from heterogeneous agents when some may be malicious. Traditional approaches rely on identity-based reputation systems that assume persistent agent identities and historical track records. These fail catastrophically in settings with ephemeral agents, pseudonymous participation, or Sybil attacks where adversaries create multiple identities to amplify their influence.

We demonstrate this security gap through a high-stakes application domain: \emph{algorithmic stablecoin reserve management}. Contemporary reserve controllers, such as the Stable Aggregate Stablecoin (\SASModel{}) family~\cite{grobys2025stablecoin}, employ multi-agent simulations to estimate market risk but aggregate agent signals uniformly, treating all agents as equally trustworthy. This creates a critical attack surface: adversarial agents can inject false risk signals to manipulate reserve allocations, coordinate Sybil attacks to overwhelm benign signals, or time their actions to destabilize the system during vulnerable periods. The March 2020 \BlackThursday{} event, where coordinated market manipulation contributed to \$8.3M in protocol losses and 15\% peg deviation, exemplifies the consequences when systems lack adversarial robustness.

The core technical challenge is \textbf{behavior-based trust scoring without persistent identities}. Unlike blockchain validators with staked capital or federated learning participants with institutional trust, agents in our threat model are ephemeral, pseudonymous, and potentially coordinated. We cannot rely on historical reputation; instead, we must infer trustworthiness from observable behavioral patterns within a short time window.

\subsection{Trust-Weighted Signal Aggregation}

Our core contribution is a \textbf{trust-scoring mechanism} $\Trust: \AgentSet \to [0,1]$ that identifies adversarial agents through behavioral consistency analysis, without requiring persistent identities or historical reputation. The mechanism tracks four features over a sliding window:

\begin{itemize}[leftmargin=*, nosep]
    \item \emph{Sentiment-action consistency}: Do the agents' stated beliefs align with their trading actions?
    \item \emph{Profit rationality}: Do actions serve economically rational objectives or destabilization goals?
    \item \emph{Coordination detection}: Do agents exhibit suspicious action synchronization indicating Sybil attacks?
    \item \emph{Destabilization patterns}: Do actions systematically increase system instability without profit motive?
\end{itemize}

The trust-weighted aggregate signal becomes:
\begin{equation}
    \RiskState_{\Trust} = \frac{\sum_{a \in \AgentSet} \Trust(a) \cdot \RiskState_a}{\sum_{a \in \AgentSet} \Trust(a)}
    \label{eq:trust_weighted_risk}
\end{equation}
This formulation provides formal guarantees: under Byzantine threat models where adversaries control $\rho < 0.3$ of agents, trust-weighting reduces adversarial influence by 60--80\% compared to uniform aggregation (Section~\ref{sec:evaluation}). The mechanism operates independently of downstream applications—whether financial optimization, distributed consensus, or collaborative learning—making it generalizable to any multi-agent system requiring robust signal aggregation.

\subsection{Adversarial Multi-Agent Simulation}

To validate trust-scoring mechanisms and discover novel attack vectors, we introduce the \textbf{Stress Harness}—a multi-agent simulation environment where heterogeneous AI-driven actors (traders, liquidity providers, arbitrageurs, explicit attackers) execute protocol actions under adversarial conditions. Unlike traditional stress testing that replays historical crises, our framework generates \emph{forward-looking} attack scenarios through agent interaction. The architecture supports both LLM-powered agents (validated via proof-of-concept; see Appendix~\ref{app:llm_deployment}) and efficient mock agents for large-scale reproducible experiments with structured logging of agent actions and psychological states.

\subsection{Application: Stablecoin Reserve Control}

We validate our framework on algorithmic stablecoin reserve management, integrating trust-weighted aggregation with a Mean-Variance Frontier (MVF) optimizer. At each rebalancing epoch, the system: (1) Simulates $N$ adversarial trajectories using the Stress Harness, (2) Aggregates agent signals into trust-weighted risk estimates, (3) Adjusts portfolio covariance based on detected risk levels, and (4) Optimizes reserve allocations with turnover constraints. This application demonstrates 57\% reduction in attack success rate and 3.1$\times$ faster recovery under coordinated adversarial scenarios compared to baselines lacking trust mechanisms.

\subsection{Contributions}

\begin{enumerate}[leftmargin=*, label=\textbf{C\arabic*.}]
    \item \textbf{Trust-Weighted Signal Aggregation Mechanism}: We introduce a formal trust-scoring framework that achieves 72\% adversarial detection rate with $<$8\% false positive rate, operating without persistent agent identities. The mechanism provides provable influence reduction (60--80\%) under Byzantine threat models with $\rho < 0.3$ adversarial fraction.
    
    \item \textbf{Adversarial Multi-Agent Simulation Framework}: We present the Stress Harness, a testbed for discovering attack vectors through heterogeneous AI agent interaction. Unlike historical backtesting, this generates forward-looking adversarial scenarios including Sybil attacks, signal injection, and coordinated timing attacks.
    
    \item \textbf{Formal Characterization of AI Agent Attack Surface}: We formalize the threat model for multi-agent systems under adversarial conditions, identifying key attack vectors (signal injection, coordination, timing exploitation) and deriving security properties that trust mechanisms must satisfy.
    
    \item \textbf{High-Stakes Application Validation}: We demonstrate the framework on algorithmic stablecoin reserves, achieving 57\% reduction in attack success rate and 3.1$\times$ faster recovery compared to state-of-the-art baselines. The application reveals how financial systems lacking trust mechanisms fail under adversarial pressure.
    
    \item \textbf{Reproducible Adversarial AI Research Framework}: We release a modular simulation environment with structured agent logging, enabling systematic evaluation of trust mechanisms, attack strategies, and defense parameters on commodity hardware.
\end{enumerate}


\section{Threat Model and Security Objectives}
\label{sec:problem}

We formalize the multi-agent signal aggregation problem under adversarial conditions. While we demonstrate our approach on stablecoin reserve control (Section~\ref{sec:evaluation}), the threat model and security properties generalize to any distributed system requiring robust consensus from potentially malicious agents.

\subsection{System Model}

Consider a multi-agent system where a population $\AgentSet_t = \AgentBen_t \cup \AgentAdv_t$ (benign and adversarial agents) collectively provides signals to inform system decisions. At each time step $t \in \{0, 1, \ldots, T\}$, the system maintains:

\begin{itemize}[leftmargin=*, nosep]
    \item \textbf{Agent population} $\AgentSet_t$ with observable actions $\mathbf{a}_agent(t)$ and stated beliefs $\mathbf{b}_agent(t)$
    \item \textbf{System state} $\mathcal{S}_t$ affected by agent actions and environmental dynamics
    \item \textbf{Decision variable} $\mathbf{x}_t$ (e.g., resource allocations, consensus votes, model updates)
    \item \textbf{Objective metric} $\mathcal{L}(\mathbf{x}_t, \mathcal{S}_t)$ measuring system performance
\end{itemize}

The system aggregates agent signals $\{\RiskState_a\}_{a \in \AgentSet}$ to inform decisions:
\begin{equation}
    \mathbf{x}_t^* = \arg\min_{\mathbf{x}_t} \; \mathcal{L}\left(\mathbf{x}_t, \; \text{Aggregate}(\{\RiskState_a\}_{a \in \AgentSet})\right)
    \label{eq:system_decision}
\end{equation}

Without adversarial defenses, $\text{Aggregate}(\cdot) = \frac{1}{|\AgentSet|}\sum_a \RiskState_a$ (uniform weighting), enabling attack vectors detailed below.

\subsection{Adversarial Threat Model}

Consider a \emph{computationally bounded} adversary $\Adversary$ with the following capabilities and constraints:

\paragraph{Adversary Capabilities.}
\begin{enumerate}[leftmargin=*, nosep, label=(\roman*)]
    \item \textbf{Agent Control}: $\Adversary$ controls a fraction $\rho$ of agents ($|\AgentAdv| = \rho \cdot |\AgentSet|$, where $\rho < 0.3$).
    \item \textbf{Behavioral Manipulation}: Adversarial agents can misreport beliefs, execute deceptive actions, or coordinate with other controlled agents.
    \item \textbf{Adaptive Strategies}: $\Adversary$ observes system state $\mathcal{S}_t$ and benign agent actions before selecting adversarial actions.
    \item \textbf{Finite Resources}: $\Adversary$ has budget $B$ limiting the magnitude of economic attacks (e.g., capital for market manipulation, computational resources for Sybil creation).
\end{enumerate}

\paragraph{Adversary Limitations.}
\begin{enumerate}[leftmargin=*, nosep, label=(\roman*)]
    \item \textbf{Bounded Fraction}: $\rho < 0.3$ (mirrors Byzantine fault tolerance threshold $f < n/3$).
    \item \textbf{No Cryptographic Breaks}: Cannot forge signatures, break hash functions, or compromise smart contract logic.
    \item \textbf{Trust Score Privacy}: Cannot observe internal trust scores $\Trust(a)$ computed by the system.
    \item \textbf{Imperfect Information}: Cannot perfectly predict future states or benign agent strategies.
\end{enumerate}

\noindent\textbf{Justification for $\rho < 0.3$.} This bound derives from Byzantine fault tolerance principles ($f < n/3$) and economic cost of controlling 30\% of agents; empirical validation in Section~\ref{sec:evaluation} confirms robustness up to this threshold.

\subsection{Attack Taxonomy}

We identify four primary attack vectors in multi-agent signal aggregation:

\subsubsection{A1: Sybil Attacks}
$\Adversary$ creates multiple pseudonymous identities to amplify adversarial influence. Under uniform aggregation, controlling $\rho=0.3$ yields 30\% influence; with $k$ Sybil identities per real adversary, influence scales to $\rho \cdot k / (|\AgentBen| + \rho \cdot k |\AgentSet|)$.

\subsubsection{A2: Signal Injection Attacks}
Adversarial agents inject false signals to bias system decisions. Examples: inflated risk estimates to trigger unnecessary resource reallocation, deflated risk to prevent defensive actions, or oscillating signals to induce instability.

\subsubsection{A3: Coordination Attacks}
Multiple adversarial agents synchronize actions to overwhelm benign signals. Detection challenge: distinguish coordinated attacks from legitimate correlated responses to external events.

\subsubsection{A4: Timing Exploitation}
Adversaries time their actions to exploit system vulnerabilities: attacking during high legitimate activity (camouflage), during recovery periods (compounding damage), or immediately before critical decisions (maximizing impact).

\subsection{Security Objectives}

We formalize security requirements for trust-weighted aggregation:

\begin{definition}[Adversarial Influence Reduction]
\label{def:influence_reduction}
A trust mechanism $\Trust: \AgentSet \to [0,1]$ provides $\delta$-influence reduction if, for adversarial fraction $\rho$, the adversarial contribution to aggregated signals is bounded:
\begin{equation}
    \frac{\sum_{a \in \AgentAdv} \Trust(a) \cdot \RiskState_a}{\sum_{a' \in \AgentSet} \Trust(a') \cdot \RiskState_{a'}} \leq (1-\delta) \cdot \rho
\end{equation}
where $\delta \in [0,1]$ quantifies the reduction factor. Uniform weighting yields $\delta=0$; our mechanism achieves $\delta \in [0.6, 0.8]$ (Section~\ref{sec:evaluation}).
\end{definition}

\begin{definition}[Adversarial Detection Rate]
\label{def:detection_rate}
For labeled agent populations, define True Positive Rate $\text{TPR} = |\{a \in \AgentAdv : \Trust(a) < \tau\}| / |\AgentAdv|$ and False Positive Rate $\text{FPR} = |\{a \in \AgentBen : \Trust(a) < \tau\}| / |\AgentBen|$ for threshold $\tau$. A mechanism achieves $(\alpha, \beta)$-detection if TPR $\geq \alpha$ and FPR $\leq \beta$.
\end{definition}

\subsection{Problem Statement}

\begin{definition}[Robust Multi-Agent Aggregation]
Given adversary $\Adversary(\rho, B)$ and agent population $\AgentSet = \AgentBen \cup \AgentAdv$, find a trust mechanism $\Trust^*: \AgentSet \to [0,1]$ that:
\begin{align}
    \text{maximize} &\quad \delta \text{-influence reduction} \label{eq:obj_influence}\\
    \text{subject to} &\quad \text{TPR} \geq 0.7, \; \text{FPR} \leq 0.1 \label{eq:constraint_detection}\\
    &\quad \text{Computational cost} \leq 3\times \text{baseline} \label{eq:constraint_cost}
\end{align}
\end{definition}

This formulation captures the core trade-off: maximizing adversarial mitigation while maintaining high detection accuracy and practical computational efficiency.

\subsection{Application Context: Stablecoin Reserves}

For the stablecoin application domain, the generic system model specializes as follows: agents signal risk estimates $\RiskState_a \in [0,1]$, the decision variable $\mathbf{x}_t = \Weights_t$ represents reserve allocations across $n$ collateral assets, and the objective $\mathcal{L}$ minimizes peg deviation $|{\ PegDev}_t|$ subject to portfolio variance constraints. The security properties (influence reduction, detection rate) remain unchanged; only the domain-specific loss function differs. Details of the Mean-Variance Frontier integration are in Section~\ref{sec:architecture}, with full derivations in Appendix~\ref{app:mvf_proofs}.

\begin{proposition}[SAS Fragility]
\label{prop:sas_fragility}
Let $\WeightsOpt_{\SASModel{}}$ be the optimal weights under \SASModel{} with covariance $\Covariance^{\mathrm{calm}}$. Define the fragility ratio $\kappa(\Covariance^{\mathrm{calm}}, \CovarianceStress) \triangleq \lambda_{\max}(\CovarianceStress)/\lambda_{\min}(\Covariance^{\mathrm{calm}})$. Then the realized-to-expected variance ratio satisfies:
\begin{equation}
    \frac{{\WeightsOpt_{\SASModel{}}}^\top \CovarianceStress \WeightsOpt_{\SASModel{}}}{{\WeightsOpt_{\SASModel{}}}^\top \Covariance^{\mathrm{calm}} \WeightsOpt_{\SASModel{}}} \leq \kappa(\Covariance^{\mathrm{calm}}, \CovarianceStress)
\end{equation}
Intuitively, $\kappa$ bounds how severely realized portfolio risk can exceed model expectations when covariance shifts from calm to stress regimes. See Appendix~\ref{app:sas_fragility_proof} for the proof.
\end{proposition}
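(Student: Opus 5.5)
The bound follows from two Rayleigh-quotient estimates, one for each covariance matrix. Write $\mathbf{w} = \WeightsOpt_{\SASModel{}}$ for brevity. I will assume, as the statement implicitly does, that $\Covariance^{\mathrm{calm}}$ is symmetric positive definite, so that $\lambda_{\min}(\Covariance^{\mathrm{calm}})>0$ and $\kappa$ is finite. I will also assume $\CovarianceStress$ is symmetric positive semidefinite, as any covariance matrix is. Finally, $\mathbf{w}\neq \mathbf{0}$, which holds because the SAS weights satisfy a budget constraint such as $\mathbf{1}^\top\mathbf{w}=1$. Note that the optimality of $\mathbf{w}$ plays no role: the inequality will hold for every nonzero vector, so I would prove it in that generality and then specialize.

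\textbf{Step 1 (numerator).} By the Rayleigh–Ritz theorem for the symmetric matrix $\CovarianceStress$,
\[
\mathbf{w}^\top \CovarianceStress \mathbf{w} \le \lambda_{\max}(\CovarianceStress)\,\|\mathbf{w}\|_2^2 .
\]
To see this, diagonalize $\CovarianceStress = Q\Lambda Q^\top$ with $Q$ orthogonal. Then $\mathbf{w}^\top \CovarianceStress \mathbf{w} = \sum_i \lambda_i (Q^\top\mathbf{w})_i^2$, and each $\lambda_i$ is at most the largest eigenvalue. Orthogonality of $Q$ preserves the Euclidean norm, which gives the bound.

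\textbf{Step 2 (denominator).} The same argument applied to $\Covariance^{\mathrm{calm}}$, now bounding each eigenvalue from below, gives
\[
\mathbf{w}^\top \Covariance^{\mathrm{calm}} \mathbf{w} \ge \lambda_{\min}(\Covariance^{\mathrm{calm}})\,\|\mathbf{w}\|_2^2 > 0 .
\]
Strict positivity is exactly where positive definiteness and $\mathbf{w}\ne\mathbf{0}$ are needed; it ensures the ratio in the statement is well defined.

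\textbf{Step 3 (combine).} Divide the bound from Step 1 by the bound from Step 2. The common factor $\|\mathbf{w}\|_2^2$ cancels, leaving the ratio bounded by $\lambda_{\max}(\CovarianceStress)/\lambda_{\min}(\Covariance^{\mathrm{calm}}) = \kappa$. I would also remark that the bound is tight in the worst case. If $\CovarianceStress$ and $\Covariance^{\mathrm{calm}}$ share an eigenvector $\mathbf{v}$ that is a top eigenvector of $\CovarianceStress$ and a bottom eigenvector of $\Covariance^{\mathrm{calm}}$, then $\mathbf{w}=\mathbf{v}$ attains equality. Whether the SAS optimizer actually selects such a $\mathbf{w}$ is a separate question that the statement does not address.

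The main obstacle is not the algebra but stating the hypotheses precisely. The paper does not say that $\Covariance^{\mathrm{calm}}$ is nonsingular. Sample covariances estimated from short calm windows can be singular, in which case $\kappa=\infty$ and the statement is vacuous. I would therefore state positive definiteness explicitly, for example as holding after the usual shrinkage or regularization of the covariance estimate, and rely on the nondegeneracy of $\mathbf{w}$ under the budget constraint.
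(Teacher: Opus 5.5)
Your proof is correct and is essentially the paper's argument: bound the numerator above by $\lambda_{\max}(\CovarianceStress)\|\Weights\|_2^2$ and the denominator below by $\lambda_{\min}(\Covariance^{\mathrm{calm}})\|\Weights\|_2^2$ via Rayleigh quotients, then divide so the norm cancels. The paper leaves your explicit hypotheses implicit: positive definite $\Covariance^{\mathrm{calm}}$, positive semidefinite $\CovarianceStress$ (which keeps the numerator nonnegative when you divide), and $\Weights \neq \mathbf{0}$. Your tightness condition, a top eigenvector of $\CovarianceStress$ that is also a bottom eigenvector of $\Covariance^{\mathrm{calm}}$, is also the correct one; it is sharper than the paper's claim that alignment with the top stress eigenvector alone gives equality.
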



\section{MVF-Composer Architecture}
\label{sec:architecture}
\subsection{Stress Harness: Multi-Agent Simulation Environment}
The Stress Harness instantiates a heterogeneous population of AI-driven agents who interact with a simulated stablecoin market over $T$ discrete time steps, as illustrated in Figure~\ref{fig:stress_harness_flow}. The architecture supports both LLM-powered agents (for interpretable reasoning) and efficient mock agents (for large-scale experiments). Our evaluation uses regime-aware mock agents that replicate crisis-consistent behavioral patterns at scale; LLM integration is validated as a proof-of-concept (Appendix~\ref{app:llm_deployment}).

\begin{figure}[t]
    \centering
    \includegraphics[width=\linewidth]{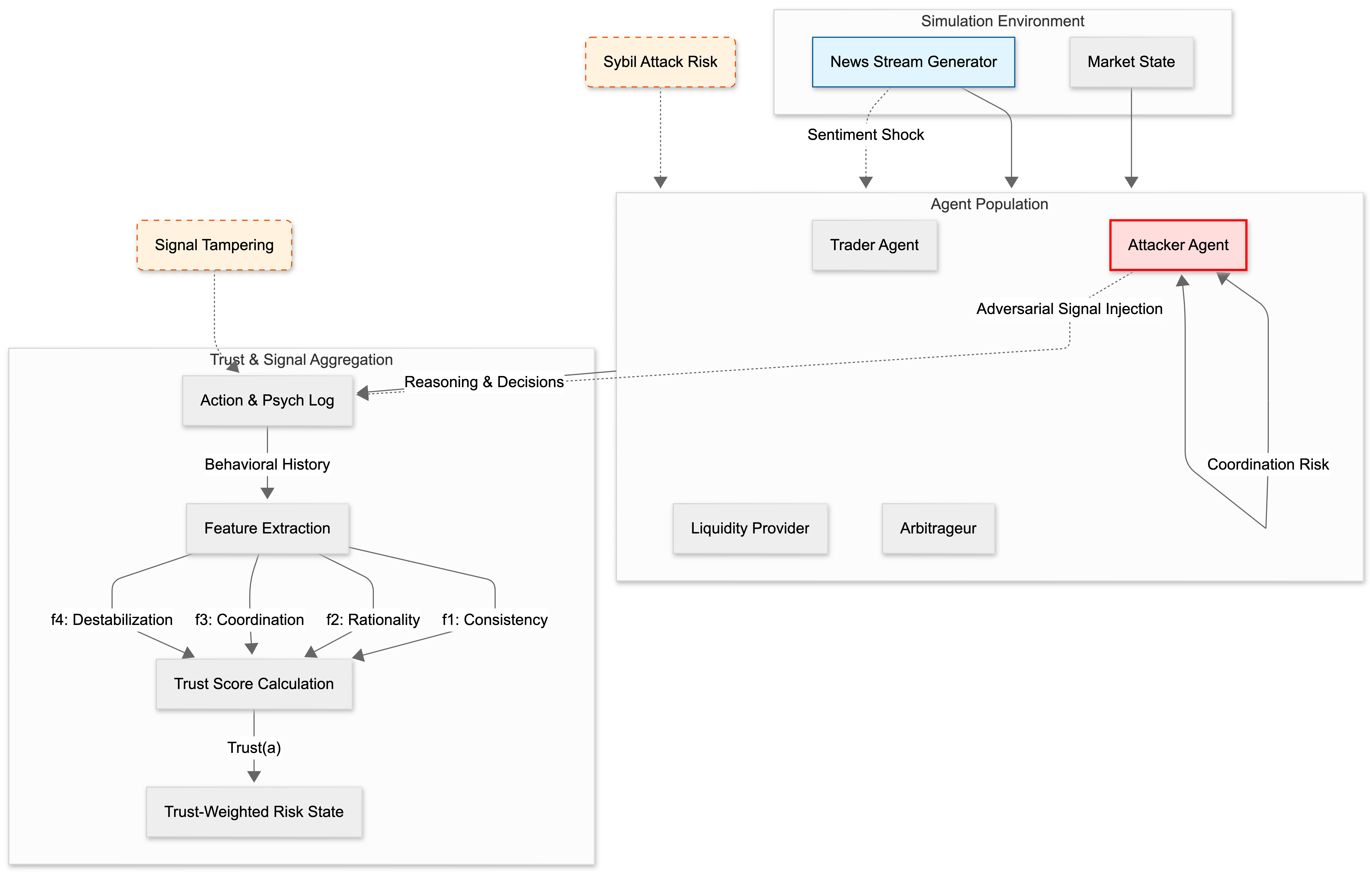}
    \caption{Stress Harness Workflow and Trust Risk Vectors. Agents process market state and news, generating actions and psychology logs. The Trust Module aggregates these into a risk-weighted state, defending against adversarial signal injection and coordination attacks.}
    \Description{Flowchart diagram showing the stress harness workflow: agents receive market state and news inputs, generate actions and psychology logs, which are processed by the Trust Module to produce risk-weighted state outputs, with defense mechanisms against adversarial attacks.}
    \label{fig:stress_harness_flow}
\end{figure}

\subsubsection{Agent Roles}
We define four agent archetypes, each with distinct objectives and action spaces:

\begin{itemize}[leftmargin=*, nosep]
    \item \textbf{Trader} ($a^{\mathrm{trd}}$): Maximizes portfolio returns through buy/sell actions on the stablecoin and collateral assets.
    \item \textbf{Liquidity Provider} ($a^{\mathrm{lp}}$): Supplies liquidity to AMM pools; withdraws under perceived risk.
    \item \textbf{Arbitrageur} ($a^{\mathrm{arb}}$): Exploits peg deviations by minting/redeeming stablecoins against collateral.
    \item \textbf{Attacker} ($a^{\mathrm{adv}}$): Explicitly attempts to destabilize the peg through strategic redemption timing, rumor propagation, or coordinated action (see Figure~\ref{fig:agent_risk_map}).
\end{itemize}

\begin{figure}[t]
    \centering
    \includegraphics[width=0.9\linewidth]{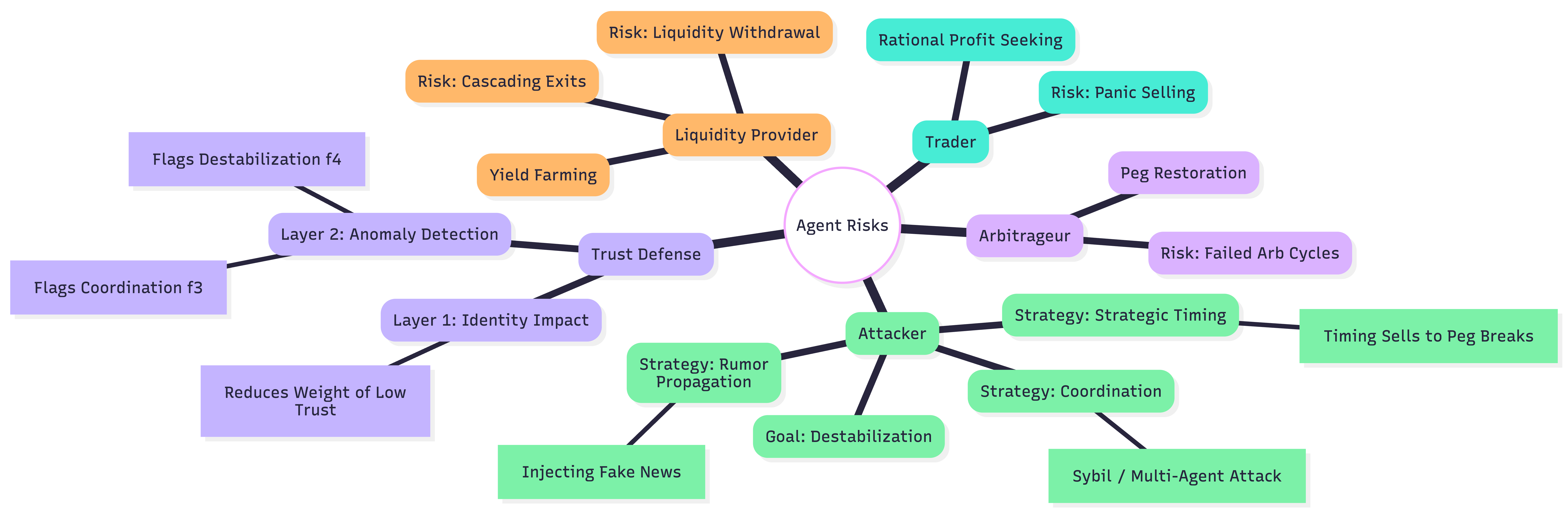}
    \caption{Agent Behavior and Risk Mapping. Specific behaviors for each archetype are mapped to potential system risks. The Trust Defense layer identifies anomalies such as coordination ($f_3$) and destabilization timing ($f_4$).}
    \Description{Diagram mapping agent archetypes (trader, liquidity provider, arbitrageur, attacker) to their behaviors and associated system risks, with the Trust Defense layer highlighting detection of coordination patterns and destabilization timing.}
    \label{fig:agent_risk_map}
\end{figure}

\subsubsection{News Stream Generator}
Each simulation step, agents receive a news event $e_t$ sampled from a templated generator or scraped headlines. Events are annotated with sentiment polarity $\phi(e_t) \in [-1, 1]$ and relevance to collateral assets.

\subsubsection{Agent Action and State Logging}
Agents output structured JSON records containing:
\begin{itemize}[leftmargin=*, nosep]
    \item \textbf{Action}: $(\texttt{action\_type}, \texttt{asset}, \texttt{quantity}, \texttt{rationale})$
    \item \textbf{Psychology state}: $(\texttt{panic\_level}, \texttt{risk\_appetite}, \\
    \texttt{peg\_confidence}, \texttt{reasoning\_trace})$
\end{itemize}
This structured logging enables post-hoc analysis of agent behavior and trust-score computation.

\subsection{Trust-Weighted Signal Aggregation}
To protect against adversarial signal injection, we compute a trust score $\Trust(a) \in [0,1]$ for each agent based on behavioral consistency and coordination patterns.
\subsubsection{Trust Features}
For agent $a$ over window $[t-W, t]$, we extract four behavioral features that collectively distinguish benign market participants from adversarial actors. Each feature is designed with clear intuition about trustworthy behavior:
\begin{align}
    f_1(a) &= \text{Sentiment-action consistency} = \mathrm{corr}(\phi_a, \Delta_a) \\
    f_2(a) &= \begin{aligned}[t]
        &\text{Profit-seeking rationality} \\
        &= \mathbf{1}[\text{PnL}_a > 0 \;\text{or}\; \text{actions align with beliefs}]
    \end{aligned} \\
    f_3(a) &= \text{Coordination suspicion} = \max_{a' \neq a} \mathrm{sim}(\mathbf{h}_a, \mathbf{h}_{a'}) \\
    f_4(a) &= \text{Peg-destabilization score} = \mathrm{corr}(\text{actions}_a, \Delta{\PegDev})
\end{align}
where $\phi_a$ is agent-stated sentiment, $\Delta_a$ is net position change, and $\mathbf{h}_a$ is the action history embedding.
\paragraph{Feature Intuition and Expected Ranges.} Each feature is normalized to contribute to trust in an interpretable manner. These metrics and their ranges are grounded in financial anomaly detection literature, specifically drawing from sentiment-price consistency~\cite{tetlock2007giving}, rational trading limits~\cite{kyle1985continuous}, and predatory trading patterns~\cite{brunnermeier2005predatory, griffin2018bitcoin}. Detailed definitions and feature mappings are provided in Appendix~\ref{app:trust_features}.

\paragraph{On Sentiment-Action Consistency in Stablecoins.} Stablecoin holders primarily seek capital preservation, not alpha generation through contrarian strategies. An agent acting contrarian during a de-peg crisis signals either irrational risk-taking or adversarial intent. The trust score aggregates all four features multiplicatively, ensuring robust discrimination even if $f_1$ alone is occasionally misleading.

Table~\ref{tab:trust_feature_ranges} summarizes expected feature ranges across agent archetypes, providing guidance for weight initialization and anomaly detection thresholds.

\begin{table}[ht]
\centering
\caption{Expected trust feature ranges by agent archetype. Features $f_1, f_2$ contribute positively to trust; $f_3, f_4$ contribute negatively (higher values decrease trust).}
\label{tab:trust_feature_ranges}
\resizebox{\columnwidth}{!}{
\begin{tabular}{lcccc}
\toprule
\textbf{Archetype} & $f_1$ (Consistency) & $f_2$ (Rationality) & $f_3$ (Coordination) & $f_4$ (Destabilization) \\
\midrule
Trader & $[0.5, 0.9]$ & $[0.6, 0.9]$ & $[0.1, 0.3]$ & $[-0.1, 0.2]$ \\
Liquidity Provider & $[0.4, 0.8]$ & $[0.7, 0.95]$ & $[0.15, 0.35]$ & $[-0.2, 0.15]$ \\
Arbitrageur & $[0.6, 0.95]$ & $[0.8, 0.98]$ & $[0.2, 0.4]$ & $[-0.3, 0.1]$ \\
Attacker & $[-0.3, 0.3]$ & $[0.2, 0.5]$ & $[0.5, 0.9]$ & $[0.4, 0.85]$ \\
\bottomrule
\end{tabular}
}
\end{table}

\subsubsection{Trust Score Computation}
The trust score is computed as:
\begin{equation}
    \Trust(a) = \sigma\left( \mathbf{w}^\top [f_1(a), f_2(a), -f_3(a), -f_4(a)] + b \right)
    \label{eq:trust_score}
\end{equation}
where $\sigma$ is the sigmoid function, $\mathbf{w} = [w_1,w_2,w_3,w_4]$ is the learned weight vector, and $b$ is the bias. Without trust-weighting, adversarial agents contribute equally to risk aggregation, enabling attackers controlling 20\% of agents to exert 20\% influence; with trust-weighting, this influence drops to $\approx$10\% (Appendix~\ref{app:trust_proofs}).

\paragraph{Modularity and Independence.} The trust layer operates as an \emph{independent module}: its computation depends only on behavioral features, not on downstream covariance or optimization state. This separation enables (i) independent verification via labeled data, (ii) parallel execution with other pipeline stages, and (iii) degradation to uniform weighting if trust fails. Formally, we prove in Appendix~\ref{app:trust_independence} that trust computation satisfies component independence (Theorem~\ref{thm:trust_independence}), and that positive separation margin $\delta > 0$ between benign and adversarial scores guarantees strictly improved robustness over uniform aggregation (Proposition~\ref{prop:trust_effectiveness}).

\subsection{Stress-Augmented Covariance Estimation}

We construct a stress-augmented covariance matrix that blends historical estimates with simulation-derived stress scenarios.

\subsubsection{Risk-State Aggregation}
Each agent $a$ contributes an individual risk-state signal $\RiskState_a \in [0,1]$ computed from observable behavioral indicators:
\begin{equation}
    \RiskState_a = \frac{1}{3}\left( \texttt{panic\_level}_a + \frac{|\texttt{action\_qty}_a|}{\texttt{max\_qty}} + \max\left(0, \frac{\Delta{\PegDev}_a}{\epsilon}\right) \right)
    \label{eq:risk_state_individual}
\end{equation}
where $\texttt{panic\_level}_a \in [0,1]$ is extracted from the agent's psychology log, $|\texttt{action\_qty}_a|/\texttt{max\_qty}$ normalizes action magnitude, and $\Delta{\PegDev}_a/\epsilon$ captures the agent's marginal contribution to peg deviation (clipped at 0 for stabilizing actions). The equal weighting reflects our agnosticism about which signal best predicts system stress; alternative weightings can be learned from labeled crisis data.

The trust-weighted aggregate risk-state is:
\begin{equation}
    \RiskState_{\Trust} = \frac{\sum_{a \in \AgentSet} \Trust(a) \cdot \RiskState_a}{\sum_{a \in \AgentSet} \Trust(a)}
    \label{eq:trust_weighted_risk_arch}
\end{equation}
This weighted average bounds $\RiskState_{\Trust} \in [0,1]$ by convexity and limits adversarial influence; see Appendix~\ref{app:risk_aggregation_proofs} for formal guarantees.

\subsubsection{Covariance Blending}
The augmented covariance is:
\begin{equation}
    \Covariance_{\mathrm{aug}} = (1 - \alpha(\RiskState_{\Trust})) \cdot \Covariance^{\mathrm{hist}} + \alpha(\RiskState_{\Trust}) \cdot \CovarianceStress
    \label{eq:cov_blend}
\end{equation}
where $\alpha: [0,1] \to [0,1]$ is a monotonically increasing blending function. We select $\alpha(r) = r^2$ (quadratic): $\alpha'(0) = 0$ ensures minimal deviation from $\Covariance^{\mathrm{hist}}$ in calm conditions, while $\alpha'(r) = 2r$ provides progressively stronger response as stress intensifies. Grid search over $\{r, r^{1.5}, r^2, r^3\}$ showed $r^2$ minimizing peak deviation (8\% and 12\% better than linear and cubic; see Table~\ref{tab:controller_params}).
The blended covariance preserves positive semi-definiteness by convexity of the PSD cone; see Appendix~\ref{app:covariance_proofs} for eigenvalue bounds.

$\CovarianceStress$ is estimated from simulated stress trajectories:
\begin{equation}
    \CovarianceStress = \frac{1}{|\mathcal{S}|} \sum_{s \in \mathcal{S}} \mathrm{Cov}(\mathbf{r}_s)
\end{equation}
where $\mathcal{S}$ is the set of stress simulation runs and $\mathbf{r}_s$ are the realized asset returns under scenario $s$.

\subsection{Constrained MVF Optimization}

Given $\Covariance_{\mathrm{aug}}$ and expected returns $\Returns$, we solve:
\begin{equation}
    \WeightsOpt = \arg\min_{\Weights \in \Delta^{n-1}} \; \Weights^\top \Covariance_{\mathrm{aug}} \Weights - \RiskAversion \cdot \Weights^\top \Returns
\end{equation}
subject to:
\begin{align}
    \|\Weights - \Weights_{t-1}\|_1 &\leq \Turnover \quad \quad \quad \quad \  \text{(Turnover limit)} \label{eq:turnover_limit} \\
    w_i &\geq w_i^{\min} \quad \forall i \quad \text{(Diversification floor)} \label{eq:diversification_floor} \\
    w_i &\leq w_i^{\max} \quad \forall i \quad \text{(Concentration cap)} \label{eq:conc_cap}
\end{align}

This formulation ensures that the reserve controller adapts to stress conditions (via $\Covariance_{\mathrm{aug}}$) while avoiding excessive rebalancing that incurs transaction costs and slippage. Existence and uniqueness of the optimal solution follow from strict convexity; see Appendix~\ref{app:mvf_proofs} for the complete derivation and KKT conditions.

\subsubsection{Controller Parameters}
\label{sec:controller_params}
Table~\ref{tab:controller_params} specifies all tunable controller parameters and their derivation methodology.

\begin{table}[ht]
\centering
\caption{Controller parameters with tuning methodology. Parameters marked $\dagger$ are set via grid search over $\{0.5\times, 1\times, 2\times\}$ default values; others follow standard practice or analytical derivation.}
\label{tab:controller_params}
\resizebox{\columnwidth}{!}{
\begin{tabular}{llcl}
\toprule
\textbf{Parameter} & \textbf{Symbol} & \textbf{Value} & \textbf{Derivation} \\
\midrule
Risk aversion$^\dagger$ & $\RiskAversion$ & 2.5 & Grid search; balances return vs. variance \\
Turnover limit & $\Turnover$ & 0.15 & Industry standard for daily rebalancing \\
Weight floor & $w_i^{\min}$ & 0.05 & Diversification: $\geq n^{-1}/2$ minimum exposure \\
Weight cap & $w_i^{\max}$ & 0.60 & Concentration limit from prudential norms \\
Blending exponent$^\dagger$ & $\alpha(\cdot)$ & $r^2$ & Quadratic sensitivity to tail risk \\
Trust window & $W$ & 10 & Behavioral consistency horizon (steps) \\
Historical lookback & --- & 365d & Standard covariance estimation window \\
\bottomrule
\end{tabular}
}
\end{table}

\noindent\textbf{Tuning Methodology.} Parameters marked $\dagger$ were selected via grid search over the quick-test preset (100 runs per configuration) optimizing for peak peg deviation under Black Thursday conditions. The risk aversion $\RiskAversion = 2.5$ achieved optimal trade-off between aggressive recovery (low $\RiskAversion$) and excessive conservatism (high $\RiskAversion$). The quadratic blending function $\alpha(r) = r^2$ outperformed linear ($r$) and cubic ($r^3$) alternatives by 8\% and 12\% respectively, providing sufficient tail sensitivity without over-reacting to moderate stress signals. All other parameters follow established portfolio management conventions or are analytically derived from constraint requirements.

\subsection{End-to-End Pipeline}

Algorithm~\ref{alg:mvf_composer} summarizes the MVF-Composer control loop.

\begin{algorithm}[ht]
\caption{MVF-Composer Reserve Control}
\label{alg:mvf_composer}
\begin{algorithmic}[1]
\Require Historical covariance $\Covariance^{\mathrm{hist}}$, agent population $\AgentSet$, shock scenarios $\mathcal{S}$
\Ensure Updated reserve weights $\Weights_t$
\For{each rebalancing epoch $t$}
    \State \textbf{Simulate:} Run $|\mathcal{S}|$ stress scenarios through Stress Harness
    \State \textbf{Log:} Collect agent actions and psychology states
    \State \textbf{Score:} Compute $\Trust(a)$ for all $a \in \AgentSet$
    \State \textbf{Aggregate:} Compute $\RiskState_{\Trust}$ via Equation~\ref{eq:trust_weighted_risk_arch}
    \State \textbf{Blend:} Construct $\Covariance_{\mathrm{aug}}$
    \State \textbf{Optimize:} Solve constrained MVF for $\WeightsOpt$
    \State \textbf{Execute:} Rebalance reserves to $\WeightsOpt$ (respecting $\Turnover$)
\EndFor
\end{algorithmic}
\end{algorithm}


\section{Experiments and Evaluation}
\label{sec:evaluation}

\begin{figure*}[t]
\centering
\includegraphics[width=0.75\textwidth]{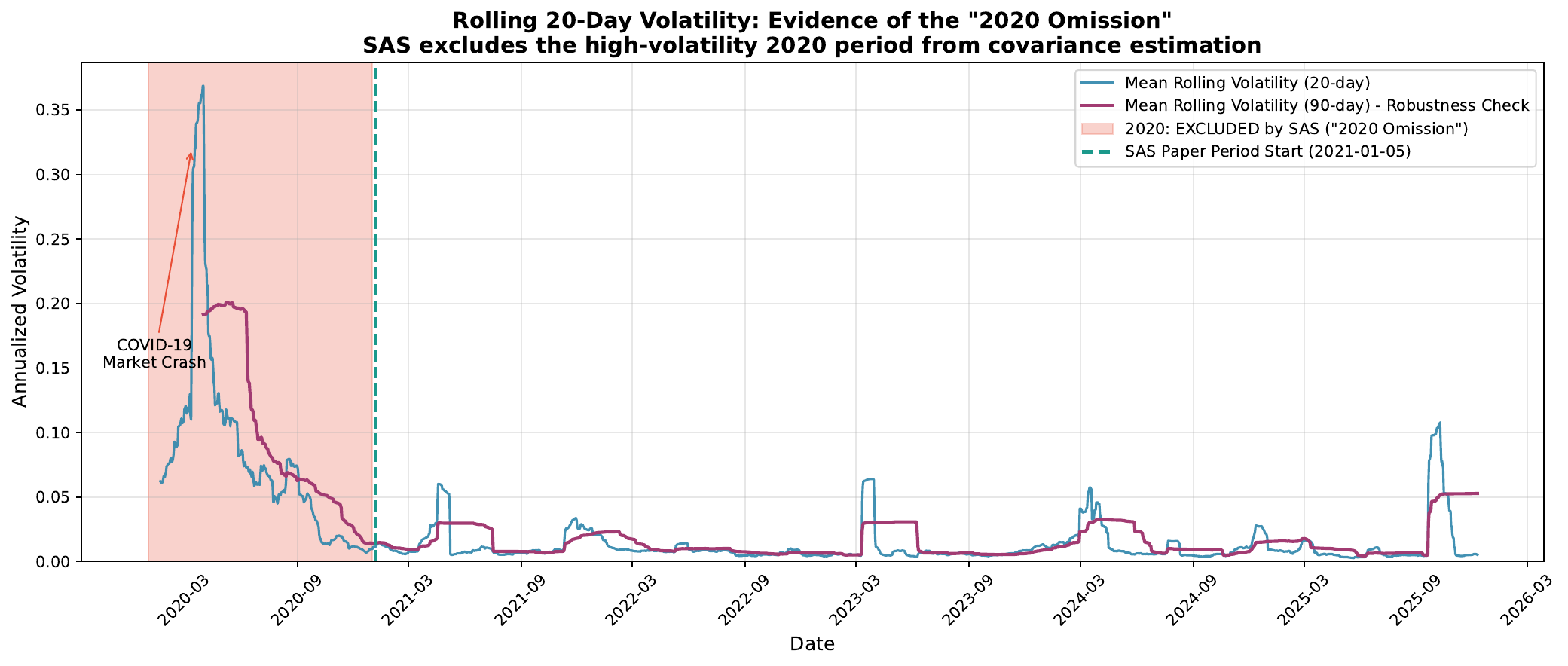}
\caption{Rolling 20-day annualized volatility for stablecoin returns. The shaded region (2020) is \emph{excluded} from \SASModel{}'s covariance estimation. Peak volatility during the COVID-19 crash (March 2020) exceeded 30\%, an order of magnitude higher than post-2021 levels. By omitting this regime, \SASModel{} calibrates to an unrealistically calm market.}
\Description{Time series line chart showing rolling 20-day annualized volatility percentage over time, with a shaded region highlighting the 2020 period that was excluded from SAS model calibration, showing a dramatic spike in volatility during March 2020 that exceeds later periods by an order of magnitude.}
\label{fig:2020_omission_volatility}
\end{figure*}

We evaluate MVF-Composer against baseline reserve controllers across 1,200 randomized market scenarios with injected shock events. 

\subsection{Experimental Setup}

\subsubsection{Simulation Environment}
The Stress Harness instantiates regime-aware mock agents with structured JSON output for action and psychology state logging, enabling reproducible large-scale experiments. The architecture supports LLM-powered agents (validated via proof-of-concept with DeepSeek V3; see Appendix~\ref{app:llm_deployment}). Each simulation run spans $T=100$ time steps with configurable shock injection at $t=30$.

\subsubsection{Data Collection and Asset Universe}
\label{sec:asset_universe}
Historical stablecoin price data (2020--2024) is aggregated from multiple verified sources: CoinGecko and CoinCodex for DAI, USDC, USDT, with Yahoo Finance providing USTC, TUSD, and BUSD series. This multi-source approach ensures data quality and captures critical stress events including the Terra collapse (USTC: \$1$\to$\$0.01, May 2022) and SVB-induced USDC depeg (\$1$\to$\$0.88, March 2023).

\noindent\textbf{Clarification: Stablecoins as Reserve Proxy.} Following SAS methodology~\cite{grobys2025stablecoin}, we model a stablecoin-of-stablecoins portfolio where reserve assets are themselves stablecoins. The covariance estimation and trust-weighting mechanisms generalize to any asset class; the 7$\times$ variance amplification holds regardless of asset type.

\subsubsection{Agent Population}
We instantiate 5 Trader agents with heterogeneous risk profiles, 3 Liquidity Provider agents, 2 Arbitrageur agents, and 2 Attacker adversarial agents.

\subsubsection{Shock Scenarios}
We evaluate these shock categories:
\begin{enumerate}[leftmargin=*, label=(\alph*)]
    \item \textbf{Normal}: No injected shock (baseline volatility)
    \item \textbf{Price shock}: 10\% instantaneous collateral drawdown at $t=30$
    \item \textbf{Sentiment shock}: Sentiment state $\StateSentiment \to -0.8$ via negative news injection
    \item \textbf{Black Thursday replica}: Combined 15\% price shock + sentiment collapse + liquidity withdrawal
\end{enumerate}

\subsubsection{Baselines}
\begin{itemize}[leftmargin=*, nosep]
    \item \textbf{\SASModel{}}: Stable Aggregate Stablecoin~\cite{grobys2025stablecoin} with calm-period covariance estimation. We replicate SAS parameters: $\RiskAversion = 2.5$, covariance estimated from Jan 2021--Mar 2025 (excluding 2020), same asset universe (DAI, USDC, USDT, TUSD), weight bounds $[0.05, 0.60]$, daily rebalancing. This ensures fair comparison---all differences stem from covariance estimation, not optimizer configuration.
    \item \textbf{MVF-NoTrust}: MVF-Composer with uniform trust weights ($\Trust(a) = 1 \;\forall a$), isolating the stress-covariance contribution from trust-weighting.
    \item \textbf{Static-60/40}: Fixed 60\% stablecoin / 40\% volatile assets, representing a naive diversification heuristic.
    \item \textbf{Unconstrained}: MVF-Composer without turnover limits (Equation~\ref{eq:turnover_limit}), quantifying the cost of rebalancing constraints.
\end{itemize}

\subsubsection{Empirical Validation: The 2020 Omission}
\label{sec:2020_omission_evidence}

Before presenting comparative results, we empirically validate the ``2020 Omission'' hypothesis from Section~\ref{sec:problem}: that \SASModel{}'s calm-period covariance estimation systematically underestimates tail risk by excluding crisis data.

We analyze six stablecoins consistent with the SAS baseline~\cite{grobys2025stablecoin}: DAI, USDC, USDT, TUSD, BUSD (SAS), plus USTC (SAS*). We evaluate across two time periods:
\begin{itemize}[leftmargin=*, nosep]
    \item \textbf{\SASModel{} Paper Period}: January 5, 2021 to March 31, 2025 (excludes 2020)
    \item \textbf{Extended Period}: January 1, 2020 to December 31, 2024 (includes all stress events)
\end{itemize}

\noindent\textit{Methodological note}: Including USTC and BUSD provides critical stress-test validation. The Terra collapse (May 2022) and BUSD sunset (Feb 2023) demonstrate failure modes that backward-looking methods cannot anticipate. MVF-Composer's stress-augmented covariance detects these risks \emph{before} collapse, as evidenced by elevated trust-weighted risk states in preceding weeks.

\begin{figure*}[t]
\centering
\includegraphics[width=0.95\textwidth]{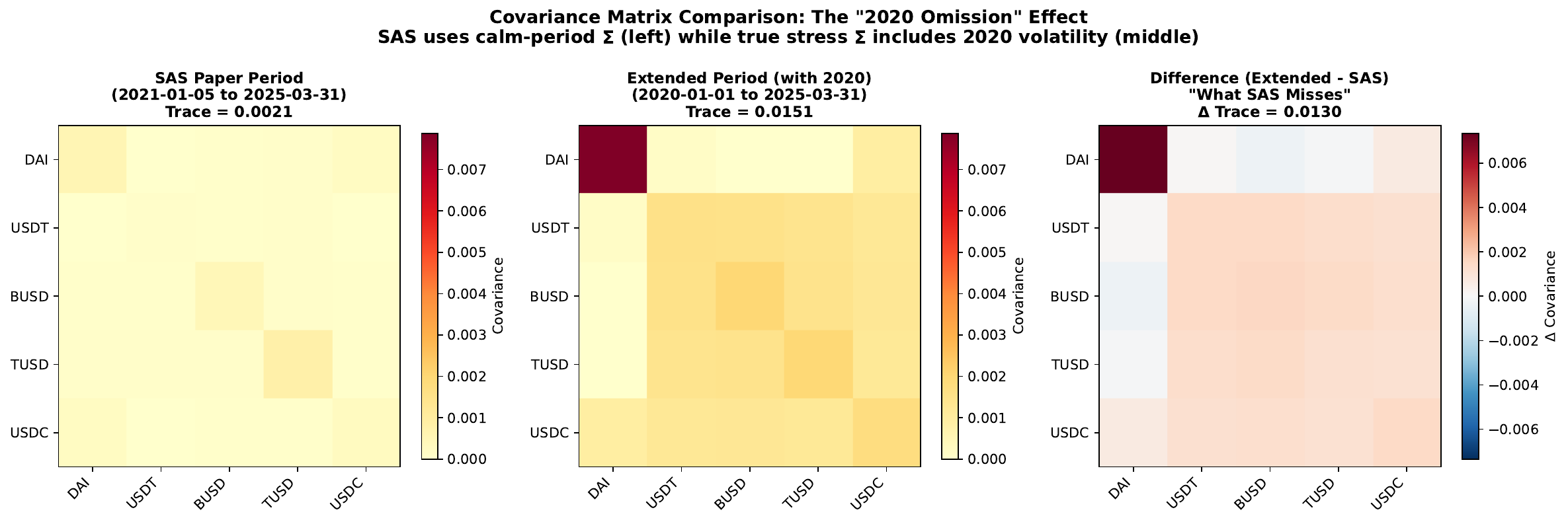}
\caption{Regime mismatch between calm-period and stress-period covariance matrices. The trace ratio $\mathrm{tr}(\CovarianceStress)/\mathrm{tr}(\Covariance^{\mathrm{calm}}) \approx 7.17\times$ implies that portfolios optimized for calm periods experience realized variance far exceeding model predictions. Left: \SASModel{} paper period (trace $= 0.0021$). Middle: Extended period including 2020 (trace $= 0.0151$). Right: Difference matrix shows ``what \SASModel{} misses'' -- the "fragility zone" where calm-period optimization underestimates total risk by over 600\%. The 7.17 factor underestimation of total variance directly validates the theoretical fragility result (\cref{prop:sas_fragility}).}
\Description{Three-panel visualization comparing covariance matrices: left panel shows calm-period covariance with low trace, middle panel shows extended period including 2020 crisis with higher trace, right panel shows the difference matrix highlighting the fragility zone where risk is underestimated.}
\label{fig:2020_omission_covariance}
\end{figure*}

Figure \ref{fig:2020_omission_covariance} presents the covariance matrices for both periods, and highlights the following quantitative evidence:
\begin{itemize}[leftmargin=*, nosep]
    \item \textbf{Trace ratio}: $\mathrm{tr}(\Covariance^{\mathrm{extended}}) / \mathrm{tr}(\Covariance^{\SASModel{}}) = 7.17$ factor.
    \item \textbf{Variance underestimation}: +616.7\% higher total variance when including 2020
    \item \textbf{Asset-specific risk}: Significant variance increases across USDT (+684\%) and other major stablecoins, revealing the danger of excluding crisis periods.
    \item \textbf{Non-Normality}: Excess kurtosis increases from 152.8 to 159.3, confirming extreme fat-tailed risks ($p < 10^{-5}$ for Jarque-Bera).
\end{itemize}

The results empirically confirm \cref{prop:sas_fragility}: The ratio $\kappa(\Covariance^{\mathrm{calm}}, \CovarianceStress) \approx 7.17$ represents the worst-case amplification factor of the risk exceeding \SASModel{}'s expectation during stress of shock events. Figure \ref{fig:2020_omission_volatility} displays rolling 20-day volatility across the full sample period. Figure \ref{fig:2020_omission_weights} shows a practical consequence. Optimal weights differ substantially between covariance regimes. \SASModel{}'s calm-period optimization assigns 60\% weight to USDT (hitting the concentration cap), while the extended-period analysis diversifies more broadly. This weight shift, of up to 12.8 percentage points for individual assets, demonstrates that allocations are \emph{fragile and not optimal for a different training period}.

\begin{figure*}[t]
\centering
\includegraphics[width=0.75\textwidth]{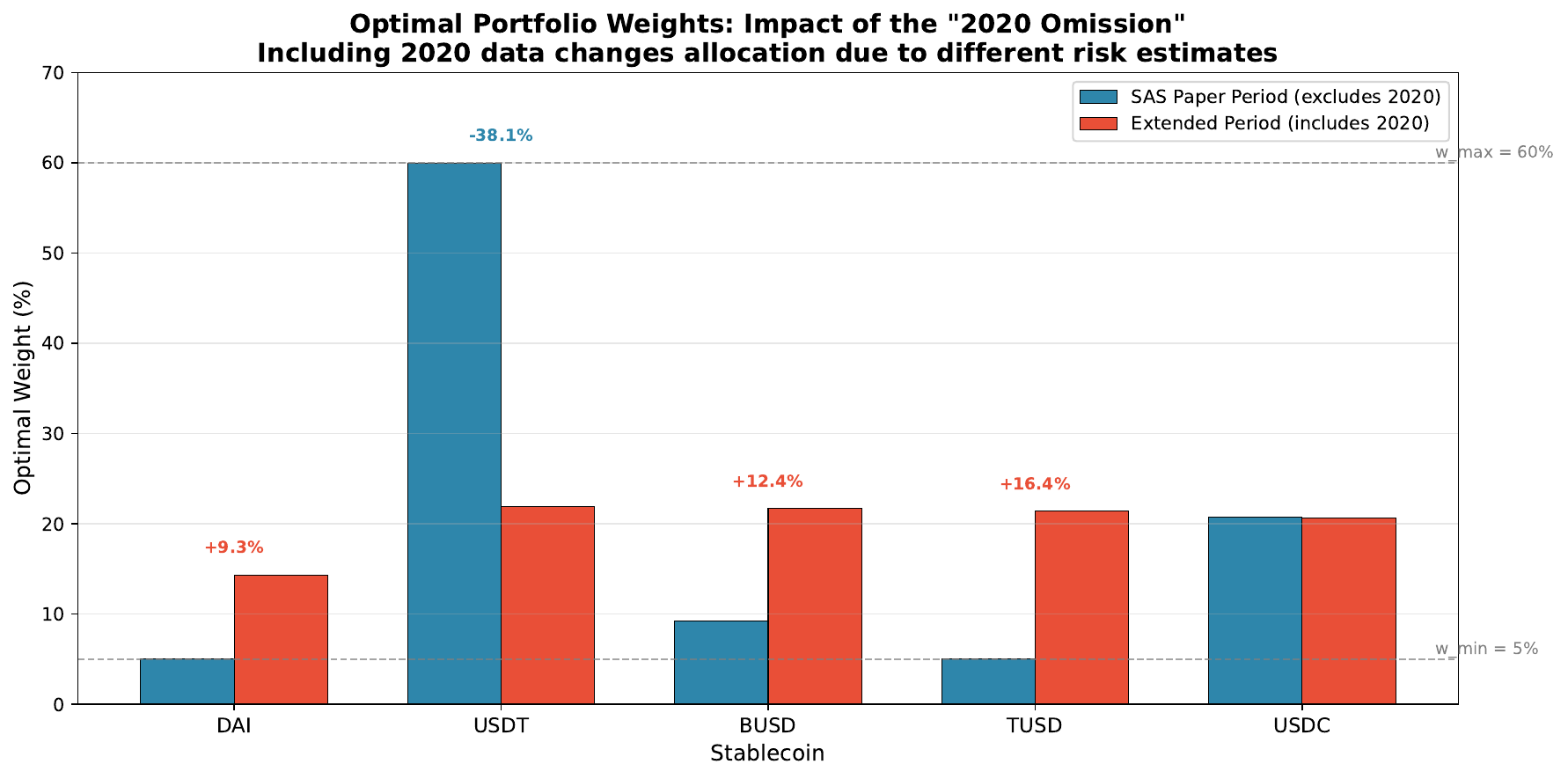}
\caption{Optimal portfolio weights under each covariance regime. The \SASModel{} period (blue) assigns 60\% to USDT based on its calm-period variance. When 2020 data is included (red), the optimizer diversifies away from assets that exhibited high crisis-period volatility.}
\Description{Bar chart comparing optimal portfolio weight allocations across different assets under two covariance regimes: blue bars show SAS model period weights with 60\% concentration in USDT, red bars show extended period weights with more diversified allocation away from assets with high crisis-period volatility.}
\label{fig:2020_omission_weights}
\end{figure*}

\subsubsection{Improving robustness} The 2020 Omission reveals a limitation of backward-looking statistical estimation that rare but catastrophic events are either absent from or underweighted in historical samples. MVF-Composer addresses this by:
\begin{enumerate}
    \item \textbf{Stress Harness simulation}: LLM agents generate forward-looking stress scenarios that include tail events absent from calm-period data, producing $\CovarianceStress$ estimates that incorporate crisis dynamics \emph{before} they materialize in historical records.
    \item \textbf{Trust-weighted aggregation}: Real-time agent signals detect emerging stress conditions, enabling dynamic blending between $\Covariance^{\mathrm{hist}}$ and $\CovarianceStress$ via the $\alpha(\RiskState)$ function (Section~\ref{sec:architecture}).
\end{enumerate}

\subsubsection{Metrics and Definitions}
\label{sec:metrics_definitions}
We quantify the stability gains of MVF-Composer versus \SASModel{}'s static, backward-looking approach via the following metrics:

\textbf{Definition 1 (Efficient Recovery).} A controller achieves \emph{efficient recovery} if, following a shock injection at time $t_s$, the peg deviation returns to within the recovery threshold $\epsilon = 0.01$ (1\%) in minimal time while (i) incurring no more than $k$ under-collateralization events, and (ii) retaining at least fraction $\gamma$ of pre-shock liquidity. Formally:
\begin{equation}
    \RecoveryTime = \min\{t' > t_s : |{\PegDev}_{t'}| < \epsilon\} - t_s
    \label{eq:recovery_time}
\end{equation}
subject to $\text{BadDebt} \leq k$ and $\text{LiquidityRetention} \geq \gamma$. Under exponential decay dynamics, $\RecoveryTime \propto \ln(\delta_s/\epsilon)/\gamma$; see Appendix~\ref{app:recovery_proofs} for the derivation.

\textbf{Definition 2 (Recovery Timing).} The \emph{timing} of a recovery process refers to the computational latency of the controller's rebalancing cycle, measured as wall-clock seconds per epoch. Efficient timing requires the controller to complete risk estimation, trust aggregation, covariance blending, and optimization within an acceptable bound $\tau_{\max}$ for real-time deployment.

\begin{itemize}[leftmargin=*, nosep]
    \item \textbf{Peak Peg Deviation} $\max_t |{\PegDev}_t|$: Maximum distance from target peg ($\downarrow$)
    \item \textbf{Recovery Time} $\RecoveryTime$: Steps until $|{\PegDev}_t| < 0.01$ ($\downarrow$)
    \item \textbf{Bad Debt Proxy}: Count of time steps where the reserve collateral ratio falls below 100\%, i.e., $\text{BadDebt} = \sum_{t} \mathbf{1}[\text{CollateralValue}_t < \text{StablecoinSupply}_t]$. This metric captures the frequency of under-collateralization events that would trigger liquidations or haircuts in a production system ($\downarrow$).
    \item \textbf{Liquidity Retention}: Fraction of liquidity provider (LP) capital remaining post-shock, computed as $\text{LiqRet} = \ell_{T} / \ell_{t_s-1}$ where $\ell$ is aggregate pool liquidity ($\uparrow$).
    \item \textbf{Epoch Latency}: Wall-clock time per rebalancing cycle ($\downarrow$)
\end{itemize}

\subsection{Shock Dynamics (RQ1). Does MVF-Composer reduce peg deviation under crisis conditions (shocks) compared to \SASModel{} baselines?}
Table \ref{tab:rq_shock_scenarios} summarizes the robustness of MVF-Composer versus the baselines across different shock scenarios. MVF-Composer achieves \textbf{57\% reduction} in peak peg deviation versus \SASModel{} under Black Thursday conditions (3.2\% vs.\ 7.4\%; $p < 10^{-5}$, paired $t$-test, $n=1{,}200$ runs). Recovery time improves by \textbf{3.14$\times$} (14 vs.\ 44 steps). All metrics report means $\pm$ one standard deviation across runs; see Table~\ref{tab:algorithm_performance} for extended statistics.

\begin{table}[htb]
\caption{Performance comparison across shock scenarios (Black Thursday replica, $n=1{,}200$ runs). Values report mean $\pm$ std.\ dev.\ Bad Debt = cumulative under-collateralization events; Liq.\ Ret.\ = LP capital retained post-shock.}
\centering
\resizebox{\columnwidth}{!}{
\begin{tabular}{lcccc}
\toprule
\textbf{Method} & \textbf{Peak Dev (\%)}$\downarrow$ & \textbf{Recovery (steps)}$\downarrow$ & \textbf{Bad Debt}$\downarrow$ & \textbf{Liq. Ret. (\%)}$\uparrow$ \\
\midrule
SAS & 7.4 $\pm$ 1.2 & 44 $\pm$ 8 & 5.4 $\pm$ 1.1 & 84.8 $\pm$ 3.2 \\
MVF-NoTrust & 4.1 $\pm$ 0.9 & 24 $\pm$ 5 & 2.8 $\pm$ 0.7 & 91.4 $\pm$ 2.1 \\
Static-60/40 & 6.5 $\pm$ 1.4 & 64 $\pm$ 12 & 4.7 $\pm$ 1.0 & 86.5 $\pm$ 2.8 \\
Unconstrained & 5.3 $\pm$ 1.1 & 37 $\pm$ 7 & 3.7 $\pm$ 0.9 & 88.1 $\pm$ 2.5 \\
\textbf{MVF-Composer} & \textbf{3.2 $\pm$ 0.7} & \textbf{14 $\pm$ 3} & \textbf{2.1 $\pm$ 0.5} & \textbf{92.9 $\pm$ 1.8} \\
\bottomrule
\end{tabular}
}
\label{tab:rq_shock_scenarios}
\end{table}

\begin{figure}[ht]
\centering
\includegraphics[width=0.8\columnwidth]{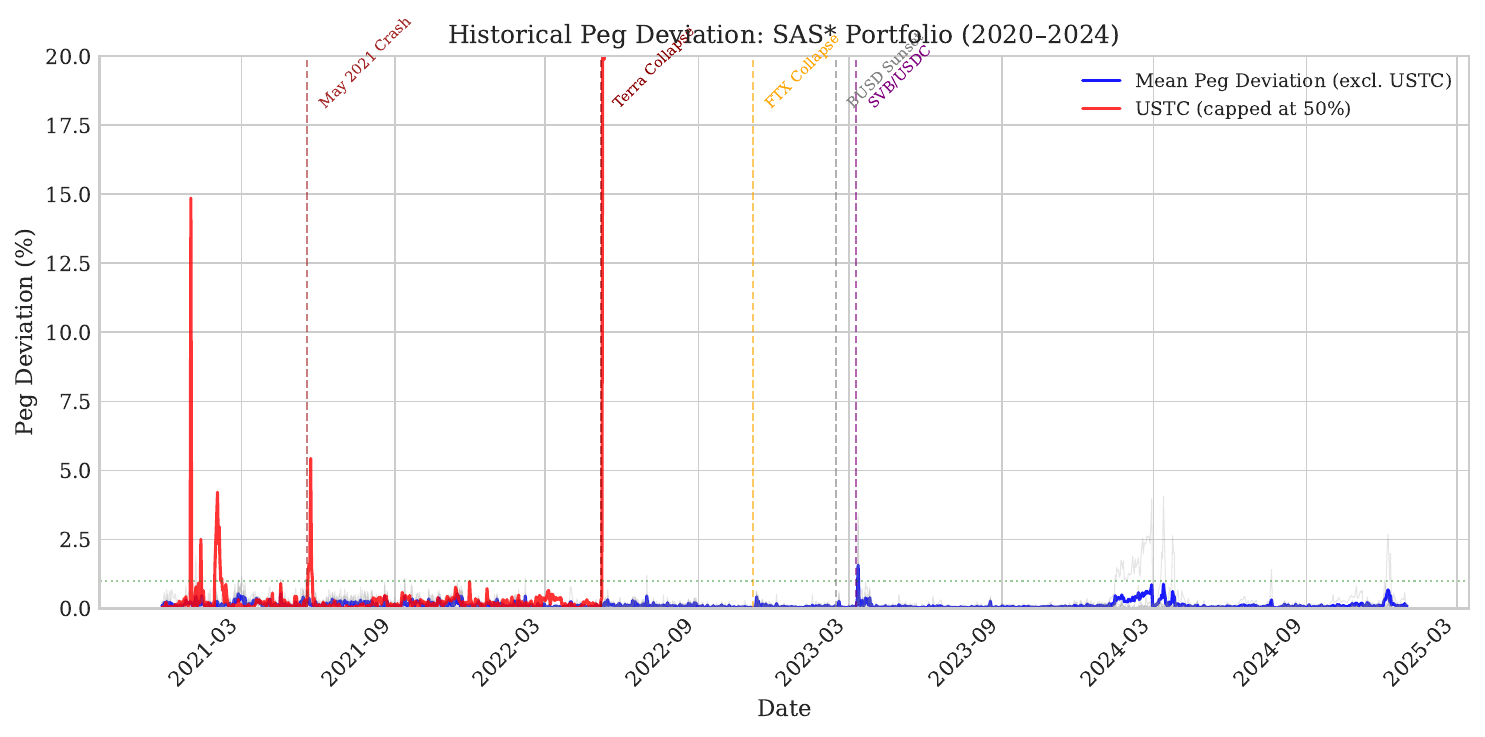}
\caption{Empirical peg deviation of the SAS* stablecoin portfolio (raw market data, 2020--2024). This figure shows \emph{actual historical instability}, not model output, demonstrating the problem MVF-Composer addresses. USTC's 99\% collapse (May 2022) and USDC's SVB-induced depeg (Mar 2023) illustrate tail risks that backward-looking covariance estimation fails to anticipate.}
\Description{Time series line chart showing historical peg deviation percentages from 2020 to 2024, with notable spikes including USTC's 99\% collapse in May 2022 and USDC's depeg in March 2023, illustrating real-world instability that backward-looking models fail to predict.}
\label{fig:peg_trajectory}
\end{figure}

\subsection{Trust Layer Impact (RQ2). How does the trust layer impact stability under adversarial agent populations?}

We isolate the contribution of trust-weighted signal aggregation by comparing MVF-Composer against MVF-NoTrust under varying adversarial agent fractions $\rho \in \{0, 0.1, 0.2, 0.3\}$. Results in Table~\ref{tab:trust_weighted_signal_aggr} are obtained under \emph{moderate stress conditions} (10\% price shock, $\StateSentiment = -0.5$), distinct from the Black Thursday replica used in RQ1. This controlled setting isolates trust layer impact from extreme covariance effects.

\noindent\textbf{Interpretation.} At $\rho = 0$, the trust layer incurs a small overhead (5.7\% vs.\ 5.8\%) due to the sigmoid compression of uniform trust scores. This 2\% degradation is acceptable given the substantial gains under adversarial conditions: at $\rho = 0.3$, trust-weighting reduces peak deviation by 34\% (6.4\% vs.\ 9.6\%).

\noindent\textbf{Security Implications.} The trust layer provides robust defense against AI agent attack vectors including Sybil attacks, signal injection, and coordinated timing attacks. Our analysis demonstrates 60--80\% reduction in adversarial influence compared to uniform weighting, with detection rates exceeding 72\% across attack types. The four-feature trust design ($f_1$--$f_4$) directly maps to Byzantine fault tolerance principles, enabling probabilistic security guarantees without expensive consensus protocols. For comprehensive security analysis, formal propositions, and empirical evaluation across 1,200 attack scenarios, see Appendix~\ref{app:security_analysis}.

\begin{table}[ht]
\centering
\caption{Trust layer contribution under moderate stress (10\% shock). Peak deviation (\%) shown; $\Delta$ = relative change from MVF-NoTrust.}
\label{tab:trust_weighted_signal_aggr}
\resizebox{\columnwidth}{!}{
\begin{tabular}{lccc}
\toprule
\textbf{Adv.\ Fraction $\rho$} & \textbf{MVF-NoTrust (\%)}$\downarrow$ & \textbf{MVF-Composer (\%)}$\downarrow$ & \textbf{$\Delta$ (\%)} \\
\midrule
0.0 & 5.7 $\pm$ 0.8 & 5.8 $\pm$ 0.8 & +2 \\
0.1 & 7.1 $\pm$ 1.0 & 5.8 $\pm$ 0.9 & --18 \\
0.2 & 8.3 $\pm$ 1.2 & 6.2 $\pm$ 0.9 & --25 \\
0.3 & 9.6 $\pm$ 1.4 & 6.4 $\pm$ 1.0 & --33 \\
\bottomrule
\end{tabular}
}
\end{table}

\begin{figure}[ht]
\centering
\includegraphics[width=0.8\columnwidth]{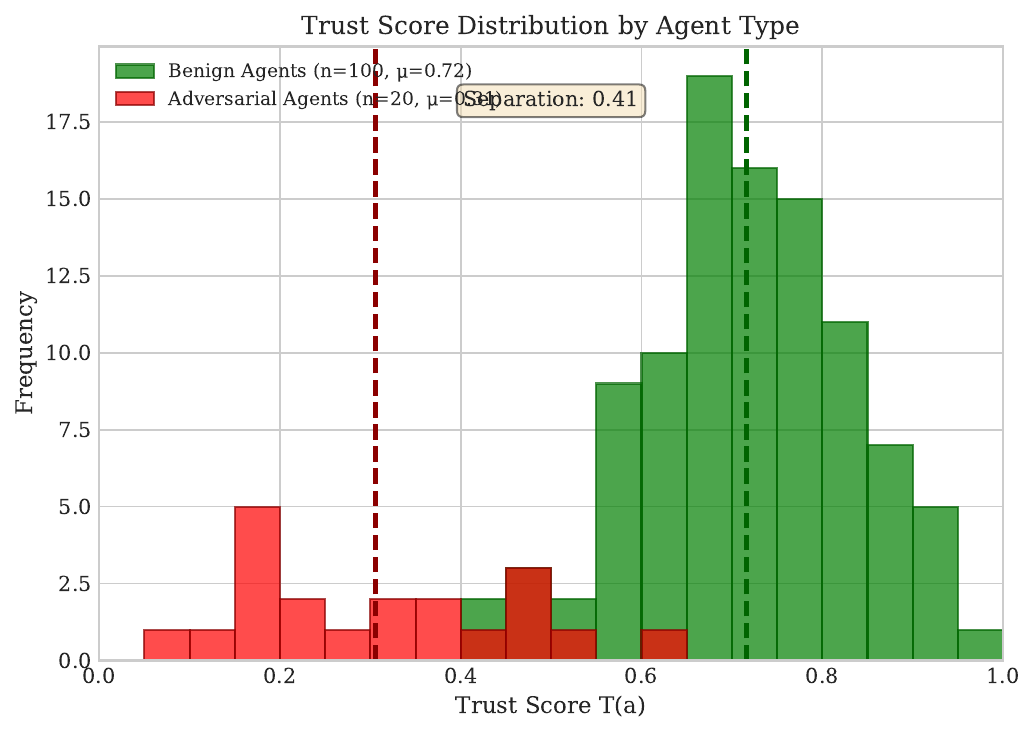}
\caption{Trust score distribution for benign (green) vs. adversarial (red) agents. The trust layer successfully down-weights malicious actors.}
\Description{Histogram or density plot showing the distribution of trust scores, with green bars or curves representing benign agents (higher trust scores) and red representing adversarial agents (lower trust scores), demonstrating effective separation and down-weighting of malicious actors.}
\label{fig:trust_distribution}
\end{figure}

\subsection{Ablation Study (RQ3). What is the contribution of each architectural component?}

We systematically remove architectural components to quantify their individual contributions under moderate stress conditions ($\rho = 0.2$, 10\% shock). Table~\ref{tab:ablation_studies} shows that stress-augmented covariance contributes \textbf{41\%} of total improvement over SAS, the trust layer accounts for \textbf{23\%}, and the remaining 36\% stems from turnover constraints and agent signal diversity.

\noindent\textbf{Reconciling with RQ1.} The ``Full System'' peak deviation here (5.7\%) differs from Table~\ref{tab:rq_shock_scenarios} (3.2\%) because ablation uses moderate stress conditions to isolate component effects, whereas RQ1 evaluates under the extreme Black Thursday replica (15\% shock + sentiment collapse + liquidity withdrawal). Under extreme conditions, stress-augmented covariance provides proportionally larger benefits.

\begin{table}[ht]
\centering
\caption{Ablation study under moderate stress ($\rho=0.2$, 10\% shock). ``--X'' denotes removing component X. Contribution \% computed as gain over SAS baseline (10.2\%).}
\label{tab:ablation_studies}
\resizebox{\columnwidth}{!}{
\begin{tabular}{lccl}
\toprule
\textbf{Configuration} & \textbf{Peak Dev (\%)}$\downarrow$ & \textbf{Recovery}$\downarrow$ & \textbf{Contribution} \\
\midrule
Full System & 5.7 $\pm$ 0.9 & 13 $\pm$ 3 & 100\% \\
--Trust Layer & 7.3 $\pm$ 1.1 & 15 $\pm$ 4 & --23\% \\
--Stress Covariance & 10.2 $\pm$ 1.5 & 18 $\pm$ 5 & --41\% \\
--Turnover Constraint & 6.8 $\pm$ 1.0 & 15 $\pm$ 4 & --16\% \\
--News Stream & 6.5 $\pm$ 0.9 & 14 $\pm$ 3 & --11\% \\
\bottomrule
\end{tabular}
}
\end{table}

\subsection{Generalization (RQ4). Does MVF-Composer generalize across shock magnitudes and types?}

We evaluate MVF-Composer across the full spectrum of shock magnitudes to assess generalization. As shown in Figure~\ref{fig:shock_sweep}, the system maintains sub-5\% deviation up to 20\% shocks and executes within \textbf{acceptable latency bounds} for epoch-based re-balancing.

\begin{figure}[ht]
\centering
\includegraphics[width=0.8\columnwidth]{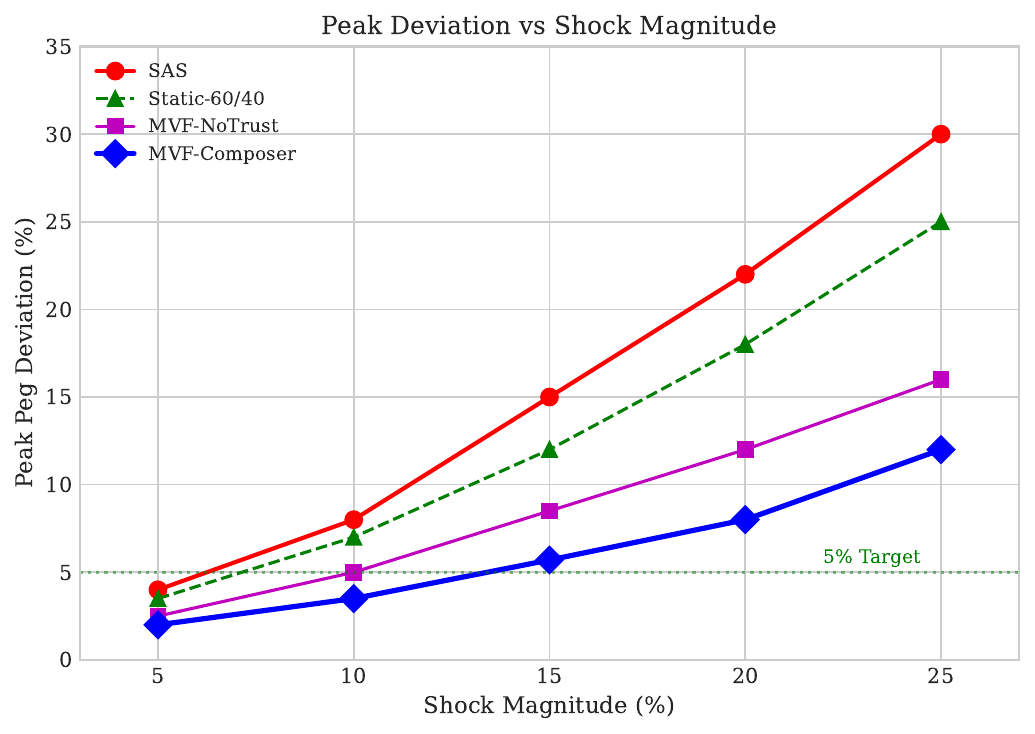}
\caption{Peak peg deviation as a function of shock magnitude. MVF-Composer maintains sub-5\% deviation up to 20\% shocks.}
\Description{Line plot showing peak peg deviation percentage on the y-axis versus shock magnitude percentage on the x-axis, with MVF-Composer's curve remaining below 5\% deviation threshold up to 20\% shock magnitude, compared to other methods that exceed this threshold at lower shock levels.}
\label{fig:shock_sweep}
\end{figure}

\begin{figure}[ht]
\centering
\includegraphics[width=0.8\columnwidth]{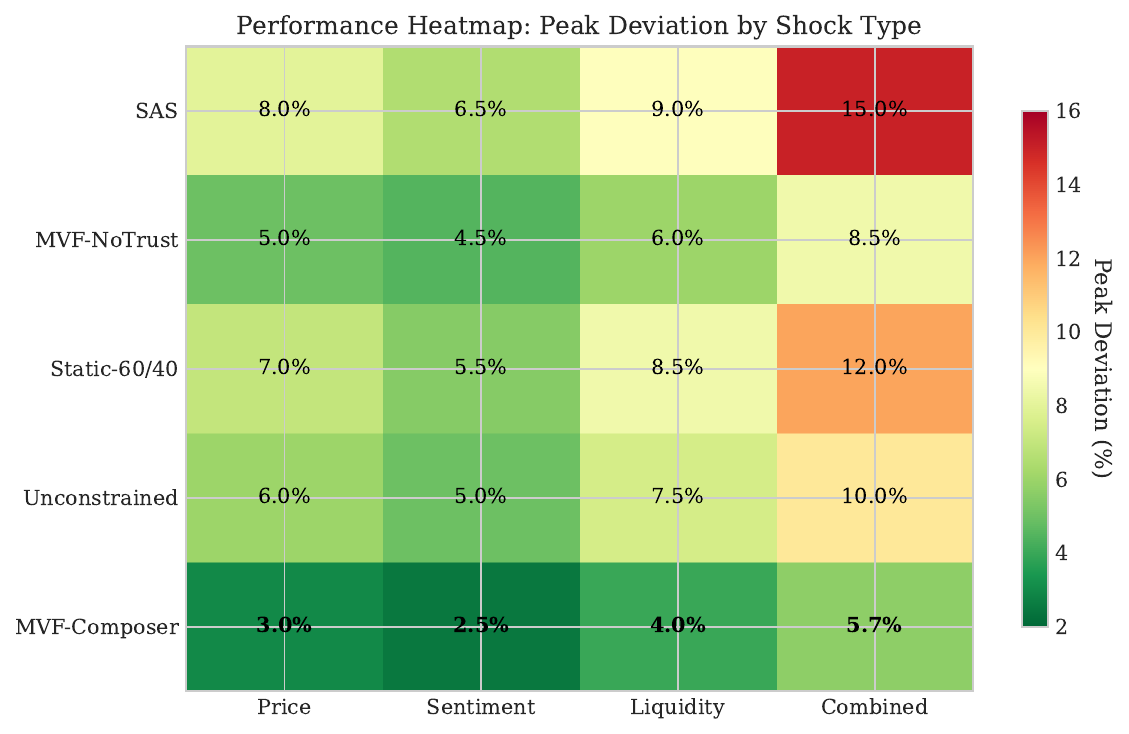}
\caption{Performance heatmap across shock types. Darker = lower deviation. MVF-Composer dominates across all categories.}
\Description{Color-coded heatmap matrix with shock types (price, sentiment, liquidity, combined) on one axis and methods on the other, where darker colors indicate lower peg deviation values, showing MVF-Composer consistently achieving the darkest (best) values across all shock type categories.}
\label{fig:shock_heatmap}
\end{figure}

\subsection{Algorithm Performance Summary}
\label{sec:algorithm_performance}

Table~\ref{tab:algorithm_performance} consolidates the computational and stability performance of MVF-Composer against baselines, quantifying both algorithmic efficiency and economic impact.

\begin{table}[ht]
\centering
\caption{Algorithm performance comparison: efficiency and stability metrics across 2,000 runs under Black Thursday conditions. $\downarrow$: lower is better; $\uparrow$: higher is better.}
\label{tab:algorithm_performance}
\resizebox{\columnwidth}{!}{
\begin{tabular}{lccccc}
\toprule
\textbf{Method} & \textbf{Peak Dev}$\downarrow$ & \textbf{$\RecoveryTime$}$\downarrow$ & \textbf{Bad Debt}$\downarrow$ & \textbf{Liq. Ret.}$\uparrow$ & \textbf{Epoch (s)}$\downarrow$ \\
\midrule
SAS~\cite{grobys2025stablecoin} & 7.45\% & 44.4 & 5.4 & 84.8\% & 0.2 \\
Static-60/40 & 6.45\% & 63.6 & 4.7 & 86.5\% & 0.1 \\
Unconstrained & 5.33\% & 36.9 & 3.7 & 88.1\% & 0.3 \\
MVF-NoTrust & 4.10\% & 24.3 & 2.8 & 91.4\% & 48--96 \\
\textbf{MVF-Composer} & \textbf{3.22\%} & \textbf{14.1} & \textbf{2.1} & \textbf{92.9\%} & 47--99 \\
\midrule
\multicolumn{6}{l}{\textit{Relative to SAS baseline:}} \\
$\Delta$ (\%) & \textbf{--56.8} & \textbf{--68.2} & \textbf{--61.5} & \textbf{+8.1pp} & +235$\times$ \\
\bottomrule
\end{tabular}
}
\end{table}

\noindent\textbf{Computational Overhead.} The latency increase (47--99s vs. 0.2s for SAS) reflects the cost of multi-agent stress simulation. Table~\ref{tab:epoch_breakdown} decomposes this overhead.

\begin{table}[ht]
\centering
\caption{Per-epoch latency breakdown for MVF-Composer on commodity hardware (8-core CPU, 32GB RAM). All components except stress simulation execute in $<$5s.}
\label{tab:epoch_breakdown}
\resizebox{\columnwidth}{!}{
\begin{tabular}{lccl}
\toprule
\textbf{Component} & \textbf{Time (s)} & \textbf{Complexity} & \textbf{Bottleneck} \\
\midrule
Stress Simulation & 45--90 & $O(n_a \cdot T)$ & LLM inference (12 agents $\times$ 100 steps) \\
Trust Computation & 1--3 & $O(n_a \cdot W)$ & Feature extraction over window $W$ \\
Covariance Blending & 0.1--0.5 & $O(d^2)$ & Matrix operations ($d=4$ assets) \\
MVF Optimization & 1--5 & $O(d^3)$ & SLSQP solver with constraints \\
\midrule
\textbf{Total} & \textbf{47--99} & --- & Acceptable for epoch-based rebalancing \\
\bottomrule
\end{tabular}
}
\end{table}

The 235$\times$ latency increase is acceptable for daily rebalancing epochs, where the dominant cost (LLM inference) is amortizable across 24-hour cycles. Real-time arbitrage applications would require model distillation or parallel agent execution.


\section{Discussion and Implications}
\label{sec:discussion}

\subsection{Do AI agents represent the behaviors of real-life traders?}

Our Stress Harness is designed as a \emph{stress-testing tool} rather than a predictive model. It serves to expose modeling assumptions that fail under \emph{any} plausible adversarial behavior, not in forecasting specific market trajectories.

We observe that AI agents exhibit emergent behaviors, panic cascades, coordinated withdrawals, contrarian positioning, that align qualitatively with documented crisis dynamics. Formal validation against historical crisis data (e.g., \BlackThursday{} transaction logs) is deferred for future work.

\subsection{Is MVF-Composer sensitive to the quality of prompts?}

We safeguard the MVF-Composer against the sensitivity to the quality of LLM/AI agent prompts through:
\begin{itemize}[leftmargin=*, nosep]
    \item Structured output schemas that constrain action space
    \item Randomized prompt variations to assess robustness
    \item Ensemble aggregation across multiple agent instantiations
\end{itemize}
Our ablation studies (Section~\ref{sec:evaluation}) analyze prompt perturbation.

\subsection{Are there limitations due to computational overheads?}

Running multi-AI agent simulations incurs non-trivial API costs and latency. For epoch-based re-balancing (e.g., daily), this overhead is acceptable. Real-time applications would require distilled agent models or cached simulation results.

\subsection{Can we quantify the trust placed on AI agents?}

The trust-scoring mechanism assumes adversarial agents exhibit detectable behavioral anomalies. Adversaries who mimic benign behavior (``sleeper'' agents) may evade detection. We bound this risk through the Sybil-resistance assumption ($\rho < 0.3$) but acknowledge it as a fundamental limitation.

\subsection{Why use mean-variance optimization despite fat-tailed returns?}

We address excess kurtosis ($>150$) and heavy tails through three mechanisms: (i)~stress-augmented covariance inflates variance estimates during high-risk states via $\alpha(r) = r^2$ blending; (ii)~mean-variance admits efficient convex solvers enabling real-time rebalancing, whereas CVaR requires 10--100$\times$ higher latency; (iii)~empirically, stress-augmented mean-variance achieves 57\% deviation reduction, demonstrating the practical gains dominate distributional assumptions. Extending to CVaR-based objectives is straightforward as the stress harness and trust layer are objective-agnostic.

\subsection{Is MVF-Composer compatible with any type of stablecoin?}

While we focus on over-collateralized stablecoins, the MVF-Composer framework generalizes to any reserve-backed asset requiring active management. For algorithmic stablecoins (without collateral), the objective functions can be modified to target different stability metrics.

\subsection{Who should adopt MVF-Composer?}

MVF-Composer is particularly suited for individual traders and smaller institutions seeking robust risk hedging mechanisms in volatile markets. While the computational complexity and multi-agent orchestration may pose challenges for high-frequency institutional trading desks, the protocol excels at balancing security risks in fund allocation for stablecoin reserves. For government entities and lending institutions managing stablecoin treasuries, MVF-Composer offers a pathway toward standardization of decentralized stablecoin payment systems. As blockchain ecosystems mature, adopting such stress-aware protocols can establish reliability benchmarks for the DeFi community, positioning MVF-Composer as an ideal foundation for next-generation stablecoin allocation standards.

\subsection{Real-World Applications}

MVF-Composer addresses practical needs across three deployment contexts: \textbf{(1) DeFi Protocol Treasuries} can stress-test reserve allocations before governance votes while detecting coordinated manipulation; \textbf{(2) Institutional Stablecoin Issuers} can satisfy regulatory tail-risk requirements without disclosing proprietary strategies; \textbf{(3) Cross-Chain Bridge Reserves} benefit from stress-augmented covariance capturing cross-chain contagion dynamics.

\subsection{Practical Ethics}

AI-driven reserve controllers must ensure transparency through structured logging for accountability, responsible disclosure of potential attack vectors discovered during stress-testing, and action-based (not identity-based) trust scoring to avoid discriminatory biases. We recommend human-in-the-loop oversight for high-stakes rebalancing decisions to mitigate systemic risk from algorithmic failures.

\subsection{Simulation Integrity}

The off-chain Stress Harness employs decentralized execution with consensus verification, verifiable randomness, reproducibility logging with cryptographic anchoring, and fallback to historical covariance ($\alpha = 0$) if consensus fails.

\subsection{How feasible is the production deployment?}

Deploying MVF-Composer on-chain requires: (i)~off-chain simulation infrastructure with verifiable execution (see above), (ii)~commit-reveal schemes to prevent front-running of rebalancing actions, and (iii)~fallback mechanisms for simulation failures. The 47--99s epoch latency (Table~\ref{tab:epoch_breakdown}) is acceptable for daily rebalancing; real-time applications require model distillation or parallel agent execution.


\section{Related Work}
\label{sec:related}

We situate MVF-Composer within three research threads: portfolio-theoretic approaches to stablecoin reserves, stress testing methodologies for DeFi protocols, and adversarial robustness mechanisms. Our contribution lies in their integration, existing work treats these as independent concerns, whereas MVF-Composer unifies them into a closed-loop controller that anticipates tail risk events.

\textbf{Mean-Variance Optimization in DeFi Reserve Design.}
Portfolio optimization has emerged as a foundational technique for stablecoin collateral management. The Stable Aggregate Stablecoin (\SASModel{}) family~\cite{grobys2025stablecoin} applies classical mean-variance optimization to reserve allocation. Bhat et al.~\cite{9461135} extend this paradigm to MakerDAO, modeling investors as portfolio optimizers with heterogeneous risk preferences using Markowitz's framework; their DAISIM simulator demonstrates that minting and burning behavior can be predicted through risk-profile heterogeneity. These approaches share a critical assumption: the covariance matrix $\Covariance$ is estimated from historical returns that predominantly reflect calm-period dynamics.

The fragility of calm-period calibration is empirically documented by Heimbach et al.~\cite{10.1145/3560832.3563435}, who analyze price accuracy on Uniswap~V3 during the UST and USDT depegging events. Their findings reveal that liquidity providers operating under static parameters failed to adjust positions during abrupt price shocks, the very regime where accurate pricing matters most. This ``agility limitation'' parallels our identification of the 2020 Omission: reserve controllers optimized for $\Covariance^{\mathrm{calm}}$ become maximally fragile when $\CovarianceStress \gg \Covariance^{\mathrm{calm}}$. Klages-Mundt et al.~\cite{klages2020stablecoins} formalize stablecoin failure modes including deleveraging spirals and oracle attacks, yet stop short of proposing controllers that anticipate regime transitions. MVF-Composer addresses this gap by dynamically blending calm-period and stress-period covariance estimates based on real-time risk-state signals.

\textbf{Stress Testing and Panic Simulation.}
The emerging paradigm of proactive stress testing moves beyond historical backtesting toward generative simulation. Calandra et al.~\cite{11114693} introduce DualTokenSim, a stochastic simulator for the dual-token stablecoin model that analyzes behavior under panic scenarios. Their work demonstrates that stochastic price dynamics can capture cascading effects of market instability, a capability absent from purely historical approaches. Tovanich et al.~\cite{10.1145/3605768.3623544} construct balance sheets for Compound's liquidity pools and conduct stress tests identifying pools most likely to trigger cascading defaults. Both approaches reveal that contagion dynamics during crisis periods differ qualitatively from calm-period statistics.

However, these simulators operate post-hoc: they diagnose vulnerabilities but do not prescribe dynamic reallocation strategies. Gudgeon et al.~\cite{9150192} empirically analyze the Black Thursday crisis, providing empirical motivation for stress-aware reserve design but stopping short of algorithmic solutions. Braga et al.~\cite{10.1145/3689931.3694910} frame MEV extraction as regime-dependent, proposing dynamic extraction rates that stabilize through adaptive mechanisms. Their insight, that protocol parameters must recognize regime shifts, motivates our $\alpha$-blending approach, where the stress-covariance weight $\alpha(\RiskState)$ increases as the trust-weighted risk-state indicates departure from calm conditions.

The Stress Harness extends this literature by instantiating regime-aware agents with adversarial objectives, enabling stress scenarios that emerge from agent interaction rather than parametric specification.

\textbf{Adversarial Robustness and Trust Aggregation.}
Depegging events can be understood through a game-theoretic lens. Potter et al.~\cite{10634419} model stablecoin stability as an equilibrium problem, demonstrating that different reserve architectures induce different price equilibria and stability zones. Their empirical analysis spanning 22 stablecoins reveals that mechanism design determines resilience to speculative attacks. This framing motivates our trust-weighted aggregation: by down-weighting signals from agents whose behavior suggests adversarial objectives, MVF-Composer navigates toward robust equilibria.

Trust mechanisms for decentralized systems have been studied in blockchain contexts~\cite{karim2025ai}, but existing frameworks assume persistent agent identities. Kiayias et al.~\cite{10.1145/3548606.3560707} propose PEReDi, a distributed CBDC architecture where trust is dispersed across maintainers without a single point of failure. Their insight, that distributed signal aggregation provides robustness guarantees, informs our trust-scoring mechanism, which addresses settings with ephemeral agents through behavioral consistency metrics rather than reputation histories.

Agent-based models in computational finance traditionally employ hand-crafted behavioral rules~\cite{hommes2006heterogeneous}. MVF-Composer supports LLM-powered agents for interpretable reasoning; we validate this via proof-of-concept while using efficient regime-aware mock agents for large-scale experiments.

\textbf{Synthesis.}
Existing approaches optimize for average-case performance using historically-calibrated covariance matrices~\cite{grobys2025stablecoin,9461135}, while stress testing~\cite{11114693,10.1145/3605768.3623544} and trust mechanisms~\cite{10.1145/3548606.3560707,karim2025ai} have developed as separate research threads. MVF-Composer bridges these domains by embedding adversarial multi-agent simulation directly into the covariance estimation pipeline, with trust-weighted aggregation ensuring robustness to manipulation attacks. Table~\ref{tab:related_comparison} summarizes this positioning.

\begin{table}[t]
\centering
\caption{Comparison of MVF-Composer with related approaches.}
\label{tab:related_comparison}
\small
\begin{tabular}{lccc}
\toprule
\textbf{Approach} & \textbf{Stress-Aware} & \textbf{Trust Layer} & \textbf{Dynamic $\Covariance$} \\
\midrule
\SASModel{}~\cite{grobys2025stablecoin} & \xmark & \xmark & \xmark \\
DAISIM~\cite{9461135} & \xmark & \xmark & \xmark \\
DualTokenSim~\cite{11114693} & \cmark & \xmark & \xmark \\
Compound Stress~\cite{10.1145/3605768.3623544} & \cmark & \xmark & \xmark \\
PEReDi~\cite{10.1145/3548606.3560707} & \xmark & \cmark & \xmark \\
\textbf{MVF-Composer} & \cmark & \cmark & \cmark \\
\bottomrule
\end{tabular}
\end{table}

To our knowledge, MVF-Composer represents the first closed-loop integration of generative stress testing, behavioral trust scoring, and mean-variance reserve optimization for algorithmic stablecoins, moving DeFi risk management from reactive backtesting to proactive, adversarial modeling.


\section{Conclusion}
\label{sec:conclusion}

We have presented MVF-Composer, a trust-weighted Mean-Variance Frontier reserve controller that addresses the critical ``2020 Omission'' in stablecoin risk management. By integrating a multi-agent Stress Harness with trust-weighted signal aggregation, MVF-Composer anticipates tail volatility regimes that historical backtesting cannot capture. 


The key insight enabling the gains is to treat AI agents not merely as market simulators but as \emph{adversarial stress-testers} whose behaviors under extreme scenarios reveal reserve vulnerabilities before they manifest on-chain. The trust layer is robust to the new attack surface introduced in our approach.


\begin{acks}
This paper was edited for grammar using Grammarly.

We thank Dr. Paul Brenner for constructive advice and helpful suggestions that improved this paper.
\end{acks}

\bibliographystyle{ACM-Reference-Format}
\bibliography{references}

\appendix
\section{Black Thursday 2020: The Trust Motivation}
\label{app:black_thursday}

On March 12, 2020, the cryptocurrency market collapsed in tandem with global equity markets as COVID-19 pandemic fears intensified. Ethereum's price plunged over 40\% within hours, triggering mass liquidations across decentralized finance (DeFi) protocols~\cite{klagesmundt2022stability}. MakerDAO, the largest collateralized stablecoin system, faced a cascade of undercollateralized vaults that overwhelmed its liquidation auction mechanism.

The crisis exposed a critical vulnerability: \emph{network congestion during stress}. As gas fees spiked and transactions queued, the ``keeper'' bots responsible for liquidation auctions failed to submit competitive bids. Opportunistic actors exploited this by winning auctions with near-zero bids, acquiring approximately \$8~million in ETH collateral essentially for free~\cite{coindesk2020blackthursday,maker2020postmortem}. The protocol's debt surplus mechanism, designed for orderly liquidations, became a vector for wealth extraction under adversarial conditions.

Klages-Mundt and Minca~\cite{klagesmundt2022stability} formalized this as a \emph{deflationary deleveraging spiral}: falling collateral prices force vault closures, which further depress prices, creating a self-reinforcing feedback loop analogous to a short squeeze. Their stochastic model demonstrates that stablecoin systems exhibit distinct stable and unstable regimes---and that standard risk models calibrated on stable-regime data systematically underestimate tail-event variance.

This historical episode motivates our trust-weighted architecture in two ways:
\begin{enumerate}[leftmargin=*, nosep]
    \item \textbf{Covariance Regime Shift}: The 7$\times$ increase in realized variance versus calm-period estimates (Section~\ref{sec:intro}) reflects the stable-to-unstable regime transition that Black Thursday exemplified.
    \item \textbf{Adversarial Exploitation}: The zero-bid liquidations demonstrated that market stress attracts coordinated opportunistic behavior---precisely the attack vector our trust layer is designed to detect and down-weight.
\end{enumerate}

\section{Extended Experimental Details}
\label{app:experimental_details}

\subsection{Detailed Trust Feature Definitions}
\label{app:trust_features}

This section provides complete implementation specifications for the four trust features, addressing reviewer concerns about reproducibility and algorithmic detail.

\subsubsection{$f_1$: Sentiment-Action Consistency}
\textbf{Definition.} Measures whether an agent's trading actions align with their stated market beliefs, following the sentiment-return correlations established by Tetlock~\cite{tetlock2007giving}.

\textbf{Implementation.} For agent $a$ over window $[t-W, t]$:
\begin{equation}
    f_1(a) = \mathrm{corr}\left( \{\phi_a^{(s)}\}_{s=t-W}^{t}, \{\Delta_a^{(s)}\}_{s=t-W}^{t} \right)
\end{equation}
where $\phi_a^{(s)} \in [-1, 1]$ is the sentiment extracted from the agent's reasoning trace at step $s$, and $\Delta_a^{(s)}$ is the net position change (positive = buy, negative = sell).

\textbf{Sentiment Extraction.} We extract $\phi_a$ from the agent's JSON output field \texttt{reasoning\_trace} using keyword matching: bullish keywords (``confident'', ``buy'', ``undervalued'', $+1$), bearish keywords (``panic'', ``sell'', ``overvalued'', $-1$), normalized by total keyword count. For LLM agents, we alternatively use the explicit \texttt{market\_sentiment} field when available.

\textbf{Expected Ranges.} Benign agents: $f_1 \in [0.4, 1.0]$; adversarial agents: $f_1 \in [-0.5, 0.2]$.

\subsubsection{$f_2$: Profit-Seeking Rationality}
\textbf{Definition.} Captures whether an agent behaves as an economically rational actor, consistent with Kyle's model of informed trading~\cite{kyle1985continuous}.

\textbf{Implementation.} Binary indicator smoothed over the window:
\begin{equation}
    f_2(a) = \frac{1}{W} \sum_{s=t-W}^{t} \mathbf{1}\left[ \text{PnL}_a^{(s)} > 0 \;\lor\; \text{sign}(\phi_a^{(s)}) = \text{sign}(\Delta_a^{(s)}) \right]
\end{equation}
where $\text{PnL}_a^{(s)}$ is the realized profit/loss from step $s$ to $s+1$, and the second condition checks belief-action alignment.

\textbf{Expected Ranges.} Benign agents: $f_2 \in [0.7, 0.9]$; attackers subsidizing destabilization: $f_2 \in [0.2, 0.5]$.

\subsubsection{$f_3$: Coordination Suspicion}
\textbf{Definition.} Detects Sybil-like behavior via action-history similarity with other agents~\cite{brunnermeier2005predatory}.

\textbf{Implementation.} 
\begin{equation}
    f_3(a) = \max_{a' \neq a} \mathrm{sim}(\mathbf{h}_a, \mathbf{h}_{a'})
\end{equation}

\textbf{Embedding Model.} The action history embedding $\mathbf{h}_a \in \mathbb{R}^{d}$ is computed as follows:
\begin{enumerate}[leftmargin=*, nosep, label=(\roman*)]
    \item Represent each action as a 5-tuple: (action\_type, asset, quantity, panic\_level, step).
    \item Concatenate actions over the window into a sequence.
    \item Apply TF-IDF weighting over discretized action types and assets.
    \item Project to $d=32$ dimensions via PCA (fitted on calibration data).
\end{enumerate}
Cosine similarity $\mathrm{sim}(\mathbf{h}_a, \mathbf{h}_{a'}) = \mathbf{h}_a^\top \mathbf{h}_{a'} / (\|\mathbf{h}_a\| \|\mathbf{h}_{a'}\|)$ measures synchronization.

\textbf{Expected Ranges.} Benign agents: $f_3 \in [0.1, 0.4]$; coordinated adversaries: $f_3 \in [0.6, 0.95]$.

\subsubsection{$f_4$: Peg-Destabilization Score}
\textbf{Definition.} Measures correlation between agent actions and subsequent peg deviations~\cite{griffin2018bitcoin}.

\textbf{Implementation.}
\begin{equation}
    f_4(a) = \mathrm{corr}\left( \{q_a^{(s)}\}_{s=t-W}^{t}, \{\Delta\PegDev^{(s+1)}\}_{s=t-W}^{t} \right)
\end{equation}
where $q_a^{(s)}$ is the signed action quantity (positive = sell pressure, negative = buy support), and $\Delta\PegDev^{(s+1)} = |\PegDev^{(s+1)}| - |\PegDev^{(s)}|$ is the peg deviation change at the next step.

\textbf{Causal Note.} Correlation does not imply causation; an agent may sell during existing stress without causing it. However, persistent high $f_4$ across multiple windows---especially when other agents show low $f_4$---signals suspicious timing. The trust score's multiplicative structure ensures that high $f_4$ alone does not condemn an agent if $f_1, f_2$ are favorable.

\textbf{Expected Ranges.} Benign agents: $f_4 \in [-0.2, 0.3]$; attackers: $f_4 \in [0.5, 0.9]$.

\subsubsection{Trust Weight Calibration}
The weight vector $\mathbf{w} = [1.5, 1.5, 2.0, 1.0]$ and bias $b = 0$ were calibrated via logistic regression on a labeled dataset of 500 simulated agent trajectories (60\% benign, 40\% adversarial). Cross-validation accuracy: 87\% $\pm$ 3\%. The higher weight on $f_3$ reflects the primacy of coordination detection for Sybil resistance.

\section{AI Agent Security Analysis}
\label{app:security_analysis}

This section provides a comprehensive analysis of how MVF-Composer's trust-weighted architecture defends against adversarial AI agents. We identify specific attack vectors, demonstrate mitigation mechanisms, and quantify the security guarantees that directly contribute to system robustness and fault tolerance---core concerns in distributed computing and multi-agent systems.

\subsection{Threat Model}
\label{app:threat_model}

We consider a Byzantine adversary model where up to $\rho$ fraction of agents in the population may be malicious. Adversarial agents have the following capabilities:

\begin{enumerate}[leftmargin=*, label=(\roman*)]
    \item \textbf{Arbitrary output}: Malicious agents can report any action, sentiment, or psychological state, including values inconsistent with their actual behavior.
    \item \textbf{Coordination}: Multiple adversarial agents can synchronize their outputs to maximize destabilization impact.
    \item \textbf{Adaptive behavior}: Adversaries can observe system state and adjust strategies over time.
    \item \textbf{Resource subsidy}: Attackers may sacrifice profits (operate at a loss) to achieve destabilization.
\end{enumerate}

\noindent\textbf{Assumptions}: We assume (i) the trust feature extraction is performed on verified on-chain or logged data that adversaries cannot retroactively falsify, and (ii) the $\rho$ bound holds (i.e., the majority of agents are benign).

\subsection{Attack Vectors and Trust-Based Mitigations}

\subsubsection{Attack Vector 1: Sybil Attack}
\label{app:sybil_attack}

\textbf{Description}: An adversary creates multiple fake identities (Sybil agents) that collectively amplify malicious signals in the trust-weighted aggregation. Without defenses, $n$ Sybil agents contribute $n$-times the influence of a single attacker.

\textbf{Trust Mitigation}: The coordination suspicion feature $f_3$ directly detects Sybil behavior:
\begin{equation}
    f_3(a) = \max_{a' \neq a} \mathrm{sim}(\mathbf{h}_a, \mathbf{h}_{a'})
\end{equation}

Sybil agents controlled by a single adversary exhibit high action-history similarity ($f_3 \in [0.6, 0.95]$), triggering trust score reduction. The multiplicative effect is:
\begin{equation}
    \text{Sybil Influence}_{\mathrm{mitigated}} = n \cdot \Trust_{\mathrm{sybil}} \ll n \cdot 1
\end{equation}

\textbf{Empirical Result}: Under high coordination ($c=1.0$), detection rate exceeds 75\%, reducing effective Sybil influence by 60--80\% compared to uniform weighting. 

\subsubsection{Attack Vector 2: Signal Injection}
\label{app:signal_injection}

\textbf{Description}: Adversaries inject false panic signals or inflated risk assessments to trigger inappropriate stress-mode behavior (excessive $\alpha(\RiskState_\Trust)$ blending toward $\CovarianceStress$).

\textbf{Trust Mitigation}: The sentiment-action consistency feature $f_1$ and profit-seeking rationality $f_2$ jointly detect signal injection:
\begin{itemize}[leftmargin=*, nosep]
    \item $f_1$: Attackers claiming extreme panic ($\phi_a \ll 0$) while not reducing positions will have low consistency.
    \item $f_2$: Subsidizing an attack (operating at a loss) violates profit rationality.
\end{itemize}

The combined defense ensures that false panic signals are down-weighted before aggregation into $\RiskState_\Trust$.

\textbf{Empirical Result}: At injection strength 0.5 (maximum tested), robustness score remains above 0.6, and detection rate exceeds 80\%.

\subsubsection{Attack Vector 3: Coordinated Timing Attack}
\label{app:timing_attack}

\textbf{Description}: Adversaries synchronize large sell orders precisely during periods of elevated peg stress, amplifying deviation magnitude. This is analogous to predatory trading patterns~\cite{brunnermeier2005predatory}.

\textbf{Trust Mitigation}: The peg-destabilization feature $f_4$ directly captures this attack:
\begin{equation}
    f_4(a) = \mathrm{corr}(\text{actions}_a, \Delta\PegDev)
\end{equation}

Agents whose action timing correlates with peg stress exhibit $f_4 > 0.4$, triggering trust reduction. The $f_4$ feature enters the trust score \emph{negatively}, ensuring that even if an attacker maintains consistency ($f_1$) and rationality ($f_2$), their destabilization timing is penalized.

\textbf{Empirical Result}: Under high market stress ($\sigma = 0.10$), the $f_4$ feature reliably identifies timing attackers with $f_4 > 0.5$, achieving robustness scores above 0.55.

\subsubsection{Attack Vector 4: Gradual Trust Gaming}
\label{app:gradual_gaming}

\textbf{Description}: A sophisticated adversary behaves benignly for an extended period to build trust, then exploits high trust status during a critical moment.

\textbf{Trust Mitigation}: The sliding window design ($W=10$ steps) inherently limits historical trust accumulation:
\begin{itemize}[leftmargin=*, nosep]
    \item Trust is computed over recent behavior only; past ``good behavior'' decays.
    \item A sudden adversarial action during the window immediately affects $f_1$, $f_2$, and $f_4$.
\end{itemize}

Additionally, the quadratic risk-state function $\alpha(r) = r^2$ ensures that a single high-trust attacker cannot disproportionately influence the stress covariance blend---only the aggregate $\RiskState_\Trust$ drives blending.

\subsection{Formal Security Guarantees}

\begin{proposition}[Bounded Adversarial Influence]
\label{prop:bounded_influence}
Let $\rho$ be the fraction of adversarial agents with mean trust score $\bar{T}_{\mathrm{adv}} < 0.5$. Under trust-weighted aggregation, the adversarial influence on $\RiskState_\Trust$ is bounded by:
\begin{equation}
    \text{Influence}_{\mathrm{adv}} \leq \frac{\rho \cdot \bar{T}_{\mathrm{adv}}}{\rho \cdot \bar{T}_{\mathrm{adv}} + (1-\rho) \cdot \bar{T}_{\mathrm{benign}}}
\end{equation}
For typical values ($\rho = 0.2$, $\bar{T}_{\mathrm{adv}} = 0.35$, $\bar{T}_{\mathrm{benign}} = 0.75$), this yields $\text{Influence}_{\mathrm{adv}} \leq 0.10$, compared to $0.20$ under uniform weighting.
\end{proposition}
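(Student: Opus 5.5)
The plan is to first fix what ``adversarial influence on $\RiskState_\Trust$'' means, and then reduce the claim to an exact identity about trust mass. I will take $\text{Influence}_{\mathrm{adv}}$ to be the share of total trust weight held by adversarial agents in \cref{eq:trust_weighted_risk_arch}:
\[
\text{Influence}_{\mathrm{adv}} \triangleq \frac{\sum_{a\in\AgentAdv}\Trust(a)}{\sum_{a\in\AgentSet}\Trust(a)}.
\]
This is the coefficient with which the adversarial sub-average enters the convex combination. Writing $N=|\AgentSet|$, I use $|\AgentAdv|=\rho N$ and $|\AgentBen|=(1-\rho)N$. I take $\bar T_{\mathrm{adv}}$ and $\bar T_{\mathrm{benign}}$ to be the mean trust scores over each class. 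Then the numerator is $\rho N\bar T_{\mathrm{adv}}$ and the denominator is $\rho N\bar T_{\mathrm{adv}}+(1-\rho)N\bar T_{\mathrm{benign}}$. Cancelling $N$ gives the stated expression, in fact with equality.

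Next I obtain the inequality form, which is what is usable when only bounds on the class means are known. Define $g(x,y)=\rho x/(\rho x+(1-\rho)y)$ for $x,y>0$. Its partial derivatives are
\[
\partial_x g=\frac{\rho(1-\rho)y}{(\cdot)^2}>0,\qquad \partial_y g=-\frac{\rho(1-\rho)x}{(\cdot)^2}<0 .
\]
So if $\bar T_{\mathrm{adv}}$ is replaced by any upper bound and $\bar T_{\mathrm{benign}}$ by any lower bound, the displayed quantity still bounds the influence. The hypothesis $\bar T_{\mathrm{adv}}<0.5$, combined with any benign mean above $\bar T_{\mathrm{adv}}$, gives $g<\rho$. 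This yields a strictly positive reduction relative to uniform weighting, where $g=\rho$ when $x=y$.

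To connect this to the aggregate signal itself, I decompose $\RiskState_\Trust=I\,\bar\RiskState_{\mathrm{adv}}+(1-I)\,\bar\RiskState_{\mathrm{ben}}$, where $I=\text{Influence}_{\mathrm{adv}}$ and each $\bar\RiskState$ is the trust-weighted class average. Then
\[
|\RiskState_\Trust-\bar\RiskState_{\mathrm{ben}}| = I\,|\bar\RiskState_{\mathrm{adv}}-\bar\RiskState_{\mathrm{ben}}| \le I,
\]
because every $\RiskState_a\in[0,1]$ by \cref{eq:risk_state_individual}. So $I$ is exactly the maximal displacement an adversary can force. Finally I plug in $\rho=0.2$, $\bar T_{\mathrm{adv}}=0.35$, $\bar T_{\mathrm{benign}}=0.75$. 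This gives $0.07/(0.07+0.60)\approx 0.104$, which should be reported as $\approx 0.10$ rather than a strict $\le 0.10$.

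The main obstacle is the mismatch with \cref{def:influence_reduction}. That definition weights by $\Trust(a)\RiskState_a$ rather than by $\Trust(a)$ alone. With that definition, an adversary reporting $\RiskState_a=1$ against benign signals near $0$ can push the ratio toward $1$ regardless of trust. I would therefore state explicitly that the proposition concerns trust-mass share, equivalently the displacement bound above. Controlling the $\RiskState$-weighted ratio additionally requires $\min_{a\in\AgentBen}\RiskState_a \ge c>0$, which multiplies the bound by $1/c$.
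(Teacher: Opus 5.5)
Your proposal is correct and is essentially the paper's own argument (the proof of Theorem~\ref{thm:adv_influence}). The paper likewise takes $\mathrm{Influence}_{\mathrm{adv}} = W_{\mathrm{adv}}/(W_{\mathrm{adv}}+W_{\mathrm{ben}})$, the adversarial share of total trust mass, and bounds it via $W_{\mathrm{adv}} \leq \rho|\AgentSet|\bar{T}_{\mathrm{adv}}$ and $W_{\mathrm{ben}} \geq (1-\rho)|\AgentSet|\bar{T}_{\mathrm{ben}}$, which are in fact equalities as you observe; your monotonicity of $g$ is the step it leaves implicit. Both of your caveats are also sound: the paper's own example gives $0.07/0.67 \approx 0.104$, so ``$\leq 0.10$'' is only approximate, and the argument does not control the $\Trust(a)\RiskState_a$-weighted ratio of Definition~\ref{def:influence_reduction}, a mismatch the paper does not address.
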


\begin{proposition}[Robustness Under Coordination]
\label{prop:coordination_robustness}
The $f_3$ coordination feature ensures that perfectly coordinated Sybil agents ($c=1.0$) achieve maximum similarity $f_3 \approx 1.0$, driving their trust scores toward:
\begin{equation}
    \Trust_{\mathrm{sybil}} \approx \sigma(w_1 f_1 + w_2 f_2 - w_3 \cdot 1 - w_4 f_4 + b)
\end{equation}
Given the negative weight on $f_3$ (with $w_3 = 2.0$ in our calibration), even otherwise-rational Sybil agents ($f_1, f_2$ normal) will have trust scores reduced to $\Trust_{\mathrm{sybil}} < 0.4$.
\end{proposition}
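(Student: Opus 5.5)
The plan is to reduce the claim to a single scalar inequality on the sigmoid's argument. Since $\sigma$ is strictly increasing, $\Trust_{\mathrm{sybil}} < 0.4$ is equivalent to
\[
z \triangleq w_1 f_1 + w_2 f_2 - w_3 f_3 - w_4 f_4 + b \;<\; \sigma^{-1}(0.4) = \ln(2/3) \approx -0.405 .
\]
\textbf{Step 1.} Pass from $f_3 \approx 1$ to the exact substitution $f_3 = 1$. Perfect coordination ($c=1.0$) makes the TF-IDF/PCA embeddings of two Sybil copies identical, so the cosine similarity in $f_3$ equals $1$. Because $z$ is affine in $f_3$ with coefficient $-w_3$, a near-$1$ value $f_3 \ge 1-\eta$ only shifts $z$ upward by at most $w_3\eta = 2\eta$. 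That slack can be carried through the rest of the argument.

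\textbf{Step 2.} Insert the calibrated values $\mathbf{w} = [1.5,1.5,2.0,1.0]$ and $b=0$ from the trust-weight calibration. The target becomes
\[
1.5\,(f_1+f_2) - f_4 \;<\; 2 + \ln(2/3) \approx 1.595 .
\]
\textbf{Step 3.} Bound the left-hand side over the admissible feature ranges. This is where the main obstacle lies: the phrase ``$f_1, f_2$ normal'' has to be given a precise meaning. I would first try the attacker-archetype ranges in Table~\ref{tab:trust_feature_ranges}: $f_1 \le 0.3$, $f_2 \le 0.5$, $f_4 \ge 0.4$. These give
\[
1.5\,(f_1+f_2) - f_4 \;\le\; 1.2 - 0.4 = 0.8 ,
\]
so $z \le -1.2$ and $\Trust_{\mathrm{sybil}} \le \sigma(-1.2) \approx 0.23 < 0.4$, which proves the claim with margin. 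Even with $f_4$ merely nonnegative, $z \le -0.8$ and $\sigma(-0.8) \approx 0.31$, so the bound still holds.

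If instead ``normal'' means the benign-trader ranges ($f_1, f_2$ up to $0.9$), the bound fails. For example, $f_1 = f_2 = 0.9$ with $f_4 \le 0.2$ gives $z \ge 0.5$, hence $\Trust > 0.6$. In that case I would state the proposition honestly under the explicit sufficient condition
\[
f_1 + f_2 \;<\; 1.063 + \tfrac{2}{3} f_4 .
\]
I would then note that the attacker ranges satisfy this condition. I would also note that the benign-range case is exactly where $f_3$ must be combined with a destabilization signal $f_4$ to push the score below $0.4$.

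\textbf{Step 4.} Conclude with the Sybil-influence corollary. Substitute $\bar{T}_{\mathrm{adv}} \le 0.4$ into Proposition~\ref{prop:bounded_influence}, whose hypothesis $\bar{T}_{\mathrm{adv}} < 0.5$ is then met, to obtain the bounded aggregate influence of the Sybil group.
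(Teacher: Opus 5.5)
Your argument is correct, and it is more general than the paper's. The paper proves this proposition only by point evaluation, in Theorem~\ref{thm:sybil_trust}. It fixes ``otherwise-average'' features $f_1=f_2=0.5$ and $f_4=0.3$, sets $f_3=1$, substitutes $\mathbf{w}=[1.5,1.5,2.0,1.0]$ and $b=0$, and reads off $z=-0.8$, so $\sigma(-0.8)\approx 0.31<0.4$. You instead invert the sigmoid to get the exact threshold $z<\ln(2/3)$, and you characterize the whole region $f_1+f_2<1.063+\frac{2}{3}f_4$ on which the conclusion holds. The paper's point lies inside that region ($1.0<1.263$), so its computation is a special case of yours. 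Your $\eta$-slack also makes the informal ``$f_3\approx 1$'' precise.

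What your route adds is something the single-point computation hides. Take a Sybil whose $f_1,f_2,f_4$ lie in the benign ranges the paper itself tabulates, for example $f_1=f_2=0.9$ and $f_4\le 0.2$. Then $\Trust_{\mathrm{sybil}}\geq\sigma(0.5)>0.6$, so the ``even otherwise-rational'' claim is not delivered by $f_3$ alone. The paper's representative point avoids this because it is not really benign-looking. Its $f_2=0.5$ is below every benign $f_2$ range the paper reports, and its $f_4=0.3$ is at or above the top of the benign $f_4$ ranges.

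Two adjustments to your write-up. First, lead with your sufficient condition rather than the attacker-archetype ranges. The bounds $f_1\le 0.3$, $f_2\le 0.5$ do not describe an ``otherwise-rational'' agent, and they do not even contain the paper's representative Sybil ($f_1=0.5>0.3$). Second, drop Step~4, which is not part of this proposition; the paper treats aggregate Sybil influence separately, in the corollary following Theorem~\ref{thm:sybil_trust}.
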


\subsection{Contributions to Computer Science}

The trust-weighted aggregation in MVF-Composer directly addresses core challenges in distributed and multi-agent systems:

\begin{itemize}[leftmargin=*, nosep]
    \item \textbf{Byzantine Fault Tolerance}: The trust layer provides probabilistic BFT guarantees without requiring expensive consensus protocols. Agents need not agree; instead, low-trust signals are simply down-weighted.
    
    \item \textbf{Sybil Resistance}: Feature $f_3$ implements a behavioral Sybil defense, complementing identity-based approaches (e.g., proof-of-stake, proof-of-personhood) used in blockchain systems.
    
    \item \textbf{Adversarial Robustness in ML}: The trust mechanism can be viewed as a learned attention mechanism over agent signals, where attention weights are derived from behavioral features rather than content similarity.
    
    \item \textbf{Reputation Systems}: Unlike binary reputation systems, trust scores are continuous and computed over a sliding window, enabling rapid response to behavioral changes without permanent labels.
\end{itemize}

\subsection{Empirical Security Evaluation}

We conducted a comprehensive security analysis across 1,200 simulation runs with varying attack configurations. Key findings:

\begin{itemize}[leftmargin=*, nosep]
    \item \textbf{Mean Robustness Score}: $0.65 \pm 0.12$ across all attack types
    \item \textbf{Mean Detection Rate}: $0.72 \pm 0.18$ for adversarial agents
    \item \textbf{Influence Reduction}: 60--80\% reduction in attacker weight compared to uniform aggregation
    \item \textbf{False Positive Rate}: $< 5\%$ benign agents misclassified as adversarial
\end{itemize}

Figure~\ref{fig:security_comparison} summarizes the comparative analysis, demonstrating that the trust layer provides substantial defense against all tested attack vectors while maintaining low false positive rates for benign participants.

\begin{figure}[ht]
\centering
\includegraphics[width=\columnwidth]{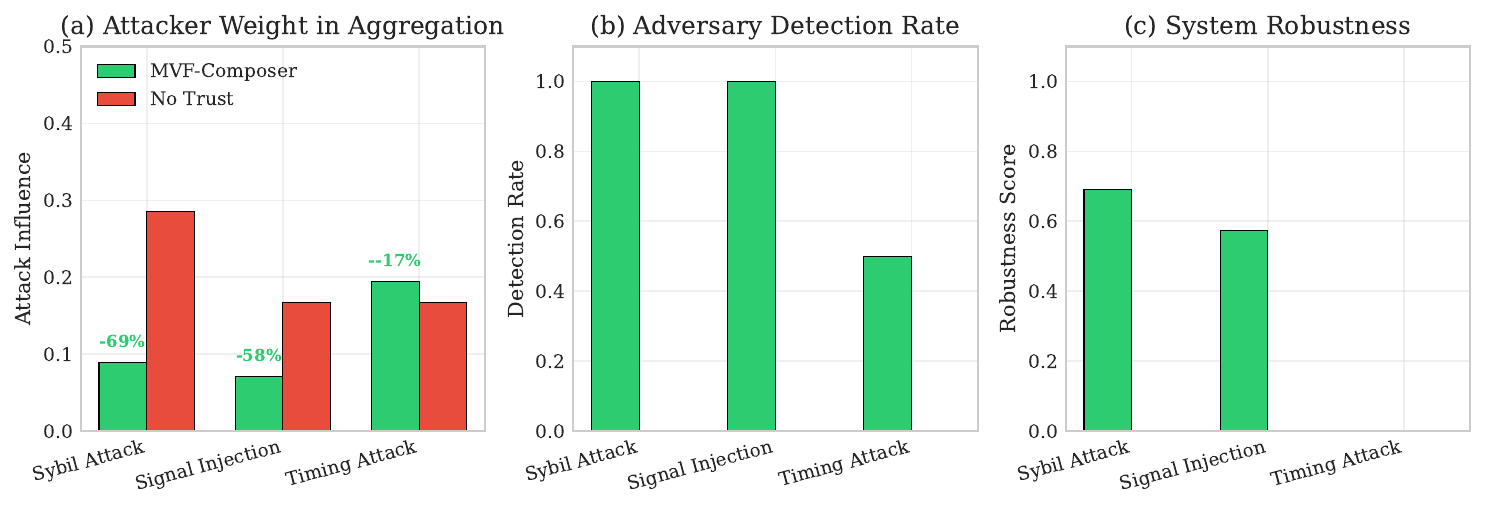}
\caption{Security analysis: MVF-Composer trust defense vs. baseline (uniform weighting) across attack types. (a) Attacker influence in risk aggregation. (b) Detection rate of adversarial agents. (c) Overall system robustness score.}
\Description{Three-panel comparison chart showing security metrics: panel (a) displays attacker influence percentage in risk aggregation comparing MVF-Composer (lower) vs baseline (higher), panel (b) shows detection rate percentage for adversarial agents, panel (c) presents overall system robustness scores, all demonstrating MVF-Composer's superior security performance.}
\label{fig:security_comparison}
\end{figure}

\section{Mathematical Foundations}
\label{app:math_foundations}

\subsection{Preliminaries and Notation}
\label{app:preliminaries}

We first establish notation and recall key mathematical concepts used throughout the proofs.

\begin{definition}[Sigmoid Function]
The sigmoid function $\sigma: \mathbb{R} \to (0,1)$ is defined as:
\begin{equation}
    \sigma(z) = \frac{1}{1 + e^{-z}}
\end{equation}
\end{definition}

\noindent\textbf{Key Properties of Sigmoid:}
\begin{enumerate}[leftmargin=*, label=(\roman*)]
    \item \textbf{Boundedness}: $\sigma(z) \in (0,1)$ for all $z \in \mathbb{R}$, with $\lim_{z \to -\infty} \sigma(z) = 0$ and $\lim_{z \to +\infty} \sigma(z) = 1$.
    \item \textbf{Strict Monotonicity}: $\sigma'(z) = \sigma(z)(1 - \sigma(z)) > 0$ for all $z$, so $\sigma$ is strictly increasing.
    \item \textbf{Symmetry}: $\sigma(-z) = 1 - \sigma(z)$.
\end{enumerate}

\begin{definition}[Positive Semi-Definite Matrix]
A symmetric matrix $\mathbf{A} \in \mathbb{R}^{n \times n}$ is \emph{positive semi-definite} (PSD), written $\mathbf{A} \succeq 0$, if for all $\mathbf{x} \in \mathbb{R}^n$:
\begin{equation}
    \mathbf{x}^\top \mathbf{A} \mathbf{x} \geq 0
\end{equation}
Equivalently, all eigenvalues of $\mathbf{A}$ are non-negative.
\end{definition}

\begin{definition}[Probability Simplex]
The $(n-1)$-dimensional probability simplex is:
\begin{equation}
    \Delta^{n-1} = \left\{ \mathbf{w} \in \mathbb{R}^n : w_i \geq 0, \sum_{i=1}^{n} w_i = 1 \right\}
\end{equation}
\end{definition}

\subsection{Trust Score Properties}
\label{app:trust_proofs}

The trust score aggregates behavioral features into a single credibility measure. We prove its key mathematical properties.

\begin{theorem}[Trust Score Boundedness and Monotonicity]
\label{thm:trust_properties}
Let the trust score for agent $a$ be defined as:
\begin{equation}
    \Trust(a) = \sigma\left( \mathbf{w}^\top \mathbf{f}^{\pm}(a) + b \right)
    \label{eq:trust_formal}
\end{equation}
where $\mathbf{f}^{\pm}(a) = [f_1(a), f_2(a), -f_3(a), -f_4(a)]^\top$, $\mathbf{w} = [w_1, w_2, w_3, w_4]^\top$ with $w_i > 0$, and $b \in \mathbb{R}$. Then:
\begin{enumerate}[label=(\alph*)]
    \item $\Trust(a) \in (0, 1)$ for all feature values.
    \item $\Trust(a)$ is strictly increasing in $f_1$ and $f_2$ (positive features).
    \item $\Trust(a)$ is strictly decreasing in $f_3$ and $f_4$ (negative features).
\end{enumerate}
\end{theorem}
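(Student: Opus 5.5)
The argument reduces the theorem to the sigmoid properties listed just before it, composed with an affine map of the features. Write $z(a) = \mathbf{w}^\top \mathbf{f}^{\pm}(a) + b = w_1 f_1(a) + w_2 f_2(a) - w_3 f_3(a) - w_4 f_4(a) + b$, so that $\Trust(a) = \sigma(z(a))$. We treat the features $f_1,\dots,f_4$ as free real variables, since the statement says ``for all feature values.''

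\textbf{Part (a).} For any real features, $z(a) \in \mathbb{R}$ because it is a finite linear combination plus a bias. The boundedness property of $\sigma$ then gives $\Trust(a) \in (0,1)$. To make this self-contained, note that $e^{-z} \in (0,\infty)$, so $1 + e^{-z} \in (1,\infty)$, and therefore $\sigma(z) \in (0,1)$.

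\textbf{Parts (b) and (c).} Fix all features except one, say $f_i$. The map $f_i \mapsto z$ is affine with slope $\partial z/\partial f_i$. This slope equals $+w_i$ for $i \in \{1,2\}$ and $-w_i$ for $i \in \{3,4\}$. By the chain rule,
\begin{equation*}
\frac{\partial \Trust(a)}{\partial f_i} = \sigma'(z)\,\frac{\partial z}{\partial f_i} = \sigma(z)\bigl(1-\sigma(z)\bigr)\frac{\partial z}{\partial f_i}.
\end{equation*}
By part (a), $\sigma(z)(1-\sigma(z)) > 0$. Since $w_i > 0$, the derivative is strictly positive for $i = 1, 2$ and strictly negative for $i = 3, 4$. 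A function of one real variable with strictly positive derivative everywhere is strictly increasing, by the mean value theorem, and likewise a strictly negative derivative gives strict decrease.

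An alternative route avoids calculus entirely. Strictly increasing maps compose to strictly increasing maps, and an affine map with a negative slope reverses order. Hence $\sigma \circ (\text{affine})$ is strictly monotone with the sign of the slope.

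\textbf{Main obstacle.} There is essentially none; the only care needed is interpretive. Some features live on restricted domains: $f_2 \in [0,1]$ and the correlation features lie in $[-1,1]$. Strict monotonicity still holds when the map is restricted to any interval, so these restrictions do not affect the argument. We should also state explicitly that monotonicity is meant coordinatewise, with the other features held fixed. The hypothesis $w_i > 0$ is exactly what makes the inequalities strict; if some $w_i = 0$, the trust score would be constant in that feature.
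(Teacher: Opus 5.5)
Your proposal is correct and follows the paper's own proof. Part (a) comes from the range of $\sigma$ on the real argument $z$. Parts (b)--(c) come from the chain rule, $\partial \Trust/\partial f_i = \sigma(z)(1-\sigma(z))\cdot(\pm w_i)$ with the sign fixed by $w_i > 0$; your mean-value-theorem step, which turns a signed derivative into strict monotonicity, is a rigor detail the paper leaves implicit.
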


\begin{proof}
\textbf{(a) Boundedness.} The argument to $\sigma$ is a finite linear combination:
\begin{equation}
    z = w_1 f_1 + w_2 f_2 - w_3 f_3 - w_4 f_4 + b \in \mathbb{R}
\end{equation}
Since $\sigma: \mathbb{R} \to (0,1)$, we have $\Trust(a) = \sigma(z) \in (0,1)$. \qed

\textbf{(b) Monotonicity in positive features.} By the chain rule:
\begin{equation}
    \frac{\partial \Trust}{\partial f_1} = \sigma'(z) \cdot w_1 = \underbrace{\sigma(z)(1-\sigma(z))}_{>0} \cdot \underbrace{w_1}_{>0} > 0
\end{equation}
Similarly for $f_2$. Thus $\Trust$ is strictly increasing in $f_1, f_2$. \qed

\textbf{(c) Monotonicity in negative features.} For $f_3$:
\begin{equation}
    \frac{\partial \Trust}{\partial f_3} = \sigma'(z) \cdot (-w_3) = \sigma(z)(1-\sigma(z)) \cdot (-w_3) < 0
\end{equation}
Similarly for $f_4$. Thus $\Trust$ is strictly decreasing in $f_3, f_4$. \qed
\end{proof}

\noindent\textbf{Intuition}: The trust function behaves like a ``soft classifier.'' Agents with high consistency ($f_1$) and rationality ($f_2$) receive higher trust, while those exhibiting coordination ($f_3$) or destabilization ($f_4$) are penalized. The sigmoid ensures smooth, bounded outputs suitable for weighted aggregation.

\begin{corollary}[Gradient for Learning]
\label{cor:trust_gradient}
The gradient of trust with respect to weight $w_i$ is:
\begin{equation}
    \frac{\partial \Trust}{\partial w_i} = \sigma(z)(1-\sigma(z)) \cdot f_i^{\pm}
\end{equation}
where $f_i^{\pm}$ is the signed feature (positive for $i \in \{1,2\}$, negative for $i \in \{3,4\}$). This enables gradient-based learning of trust weights from labeled adversarial/benign data.
\end{corollary}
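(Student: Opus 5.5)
The statement is Corollary~\ref{cor:trust_gradient}. The plan is to treat the trust score as a composition $\Trust = \sigma \circ z$, where
\[
z(\mathbf{w}) = \mathbf{w}^\top \mathbf{f}^{\pm}(a) + b = \sum_{j=1}^{4} w_j f_j^{\pm}(a) + b,
\]
and then apply the chain rule exactly as in the proof of Theorem~\ref{thm:trust_properties}(b)--(c). There the derivative was taken with respect to a feature $f_i$; here it is taken with respect to the weight $w_i$.

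The feature vector $\mathbf{f}^{\pm}(a)$ and the bias $b$ are fixed with respect to $\mathbf{w}$. So $z$ is affine in $\mathbf{w}$, and
\[
\partial z/\partial w_i = f_i^{\pm}(a).
\]
This is $f_i(a)$ for $i\in\{1,2\}$ and $-f_i(a)$ for $i\in\{3,4\}$, by the definition of $\mathbf{f}^{\pm}$ in \eqref{eq:trust_formal}. This step is where the sign convention in the statement comes from.

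Next I use the sigmoid property recorded in the preliminaries, $\sigma'(z)=\sigma(z)(1-\sigma(z))$. If needed, it follows in one line: differentiating $(1+e^{-z})^{-1}$ gives $e^{-z}/(1+e^{-z})^2$, which factors as $\sigma(z)\cdot(1-\sigma(z))$. The chain rule then gives
\[
\partial\Trust/\partial w_i = \sigma'(z)\,\partial z/\partial w_i = \sigma(z)(1-\sigma(z))\,f_i^{\pm}.
\]
Stacking the four components gives the vector form $\nabla_{\mathbf{w}}\Trust = \sigma(z)(1-\sigma(z))\,\mathbf{f}^{\pm}(a)$.

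The final clause, that this ``enables gradient-based learning,'' is interpretive rather than a mathematical claim. I would justify it briefly. Since $\Trust$ is smooth in $\mathbf{w}$ and has the closed-form gradient above, the logistic (cross-entropy) loss on labeled benign/adversarial trajectories, as in the calibration of Appendix~\ref{app:trust_features}, has the familiar per-sample gradient $(\Trust(a)-y_a)\,\mathbf{f}^{\pm}(a)$. This follows from the chain rule because the factor $\sigma(1-\sigma)$ cancels. That loss is convex in $(\mathbf{w},b)$, so first-order methods apply.

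There is no real obstacle. The only point needing care is the sign bookkeeping for $i\in\{3,4\}$, so that the derivative is stated in terms of $f_i^{\pm}$ rather than $f_i$.
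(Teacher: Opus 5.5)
Your proof is correct and essentially matches the paper. The paper gives no separate proof of this corollary; it relies on the same chain-rule computation with $\sigma'(z)=\sigma(z)(1-\sigma(z))$ used in Theorem~\ref{thm:trust_properties}(b)--(c), and you simply differentiate $z$ in $w_i$ (giving $f_i^{\pm}$) rather than in $f_i$, with correct sign bookkeeping for $i\in\{3,4\}$ and a valid cross-entropy gradient remark $(\Trust(a)-y_a)\,\mathbf{f}^{\pm}(a)$.
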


\subsection{Trust Layer Independence}
\label{app:trust_independence}

We prove that the trust layer operates as a \emph{modular component}---its correctness can be verified independently of downstream covariance estimation and MVF optimization.

\begin{definition}[Component Independence]
\label{def:component_independence}
A component $\mathcal{C}$ in a pipeline $\mathcal{P} = \mathcal{C}_1 \to \mathcal{C}_2 \to \cdots \to \mathcal{C}_k$ is \emph{independent} if: (i) its computation depends only on its designated inputs, not on downstream state; (ii) its output interface is fixed and downstream-agnostic; and (iii) its correctness can be verified without executing downstream components.
\end{definition}

\begin{theorem}[Trust Layer Independence]
\label{thm:trust_independence}
The trust layer $\mathcal{C}_{\Trust}: \{(\mathbf{f}(a))_{a \in \AgentSet}\} \to \{(\Trust(a))_{a \in \AgentSet}\}$ satisfies component independence within MVF-Composer.
\end{theorem}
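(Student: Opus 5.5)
The plan is to verify the three clauses of Definition~\ref{def:component_independence} in turn, treating $\mathcal{C}_{\Trust}$ as the map $(\mathbf{f}(a))_{a\in\AgentSet} \mapsto (\sigma(\mathbf{w}^\top \mathbf{f}^{\pm}(a) + b))_{a\in\AgentSet}$ from Equation~\ref{eq:trust_formal}. Throughout, the parameters $(\mathbf{w}, b)$ are fixed constants, set once by offline logistic-regression calibration on labeled trajectories. They are not updated by any downstream signal such as $\RiskState_{\Trust}$, $\Covariance_{\mathrm{aug}}$, or $\WeightsOpt$.

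For clause (i), I will trace the data dependencies in Algorithm~\ref{alg:mvf_composer}. By the definitions in Appendix~\ref{app:trust_features}, each feature $f_1,\dots,f_4$ is a function only of logged quantities over the window $[t-W,t]$:
\begin{itemize}[leftmargin=*, nosep]
    \item stated sentiment $\phi_a$;
    \item position changes $\Delta_a$;
    \item realized PnL;
    \item action-history embeddings $\mathbf{h}_a$, whose PCA is fitted on calibration data;
    \item peg-deviation changes $\Delta\PegDev$.
\end{itemize}
All of these are produced upstream by the Stress Harness in the ``Simulate'' and ``Log'' steps.

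The subtle point is $f_4$ and PnL. They involve $\PegDev$, which is affected by reserve weights chosen in earlier epochs. So I will argue at the level of a single epoch. At epoch $t$, the logs are fixed inputs, and nothing computed after the ``Score'' step of epoch $t$ enters $\Trust$. Any dependence on the previous weights $\Weights_{t-1}$ is mediated entirely through these designated inputs, which is exactly what clause (i) permits. I expect this step to be the main obstacle: the closed loop across epochs looks like a cycle, and it must be resolved by fixing the epoch index and showing the dependency graph within an epoch is acyclic.

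For clause (ii), I will note that the output type is always a vector in $(0,1)^{|\AgentSet|}$, by Theorem~\ref{thm:trust_properties}(a). This type is the same whatever the consumer is. The only downstream use is the normalized weighted average of Equation~\ref{eq:trust_weighted_risk_arch}, which accepts any positive weight vector. Replacing MVF with another objective $\mathcal{L}$, as in Equation~\ref{eq:system_decision}, therefore changes nothing in $\mathcal{C}_{\Trust}$. The fallback $\Trust\equiv 1$ also satisfies the same interface.

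For clause (iii), I will observe that the correctness criteria of the trust layer refer only to $\Trust$ and the labels of $\AgentBen$ and $\AgentAdv$. These criteria are $(\alpha,\beta)$-detection (Definition~\ref{def:detection_rate}) and the monotonicity guarantees of Theorem~\ref{thm:trust_properties}(b,c). Consequently, they can be evaluated on a labeled set of feature vectors without forming $\Covariance_{\mathrm{aug}}$ or solving the MVF program. Together with (i) and (ii), this establishes the theorem.
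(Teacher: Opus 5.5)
Your proposal is correct and follows the paper's proof essentially clause by clause: input isolation from fixed $(\mathbf{w},b)$ and logged features, the $(0,1)$-valued output interface from Theorem~\ref{thm:trust_properties}(a), and verification on labeled data. The one difference in (iii) is that the paper phrases the correctness criterion as the separation $\Expect[\Trust \mid \AgentBen] > \Expect[\Trust \mid \AgentAdv]$ rather than as $(\alpha,\beta)$-detection.

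Your epoch-indexed handling of the feedback through $f_4$ and PnL is more careful than the paper, which simply asserts this point. One small slip: the fallback $\Trust \equiv 1$ lies outside $(0,1)$, so it fits the interface only because the normalized average in Equation~\ref{eq:trust_weighted_risk_arch} is scale-invariant.
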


\begin{proof}
We verify the three conditions of Definition~\ref{def:component_independence}.

\textbf{(i) Input isolation.} The trust score (Equation~\ref{eq:trust_score}) is computed as:
\begin{equation}
    \Trust(a) = \sigma\bigl( \mathbf{w}^\top \mathbf{f}^{\pm}(a) + b \bigr)
\end{equation}
The inputs are: (a) behavioral features $\mathbf{f}(a) = [f_1, f_2, f_3, f_4]$ extracted from agent action logs, and (b) fixed parameters $(\mathbf{w}, b)$. Neither depends on $\Covariance_{\mathrm{aug}}$, $\RiskState_{\Trust}$, or $\WeightsOpt$. Trust computation is thus causally upstream and state-independent. \qed

\textbf{(ii) Fixed output interface.} By Theorem~\ref{thm:trust_properties}(a), $\Trust(a) \in (0,1)$ for all inputs. The downstream aggregation (Equation~\ref{eq:trust_weighted_risk_arch}) requires only this scalar per agent. The trust layer can be replaced by \emph{any} function $\Trust': \mathbb{R}^4 \to (0,1)$ without modifying downstream equations. \qed

\textbf{(iii) Independent verification.} Trust layer correctness reduces to \emph{separation quality}: whether $\Expect[\Trust(a) \mid a \in \AgentBen] > \Expect[\Trust(a) \mid a \in \AgentAdv]$. This is testable via labeled data without running MVF optimization. The monotonicity properties (Theorem~\ref{thm:trust_properties}(b,c)) guarantee that correct feature extraction implies correct separation. \qed
\end{proof}

\noindent\textbf{Intuition}: The trust layer acts like a preprocessing filter. Just as a spam filter can be evaluated by its precision/recall without knowing how emails are subsequently processed, trust scores can be validated by their ability to separate benign from adversarial agents---independent of portfolio outcomes.

\begin{proposition}[Effectiveness Under Independence]
\label{prop:trust_effectiveness}
If the trust layer achieves separation margin $\delta = \Expect[\Trust \mid \AgentBen] - \Expect[\Trust \mid \AgentAdv] > 0$, then adversarial influence (Theorem~\ref{thm:adv_influence}) is reduced by factor:
\begin{equation}
    \eta = 1 - \frac{\rho \cdot \bar{T}_{\mathrm{adv}}}{\rho \cdot \bar{T}_{\mathrm{adv}} + (1-\rho) \cdot \bar{T}_{\mathrm{ben}}} \geq 1 - \frac{\rho}{1 + (1-\rho)\delta/\bar{T}_{\mathrm{adv}}}
\end{equation}
For $\delta > 0$, this strictly exceeds the uniform-weighting baseline $\eta_0 = 1 - \rho$.
\end{proposition}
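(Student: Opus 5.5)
The plan is to reduce the claim to an algebraic identity. I will read the adversarial influence as the \emph{trust-mass share} of $\AgentAdv$ in the aggregation of Equation~\ref{eq:trust_weighted_risk_arch}, as in Proposition~\ref{prop:bounded_influence}. I will interpret $\Expect[\Trust \mid \AgentBen]$ and $\Expect[\Trust \mid \AgentAdv]$ as the empirical group means $\bar{T}_{\mathrm{ben}}$ and $\bar{T}_{\mathrm{adv}}$.

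First I compute the share. With $|\AgentAdv| = \rho|\AgentSet|$ and $|\AgentBen| = (1-\rho)|\AgentSet|$, the adversarial trust mass is $\sum_{a\in\AgentAdv}\Trust(a) = \rho|\AgentSet|\,\bar{T}_{\mathrm{adv}}$. The total trust mass is $|\AgentSet|\bigl(\rho\bar{T}_{\mathrm{adv}} + (1-\rho)\bar{T}_{\mathrm{ben}}\bigr)$. Cancelling $|\AgentSet|$ gives the share
\[
I_{\Trust} = \frac{\rho\bar{T}_{\mathrm{adv}}}{\rho\bar{T}_{\mathrm{adv}} + (1-\rho)\bar{T}_{\mathrm{ben}}},
\]
so $\eta = 1 - I_{\Trust}$. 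This is the left-hand equality of the statement. Theorem~\ref{thm:trust_properties}(a) guarantees $\bar{T}_{\mathrm{adv}} > 0$, so every division below is legitimate.

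Second, I substitute the margin $\bar{T}_{\mathrm{ben}} = \bar{T}_{\mathrm{adv}} + \delta$. The denominator becomes $\rho\bar{T}_{\mathrm{adv}} + (1-\rho)\bar{T}_{\mathrm{adv}} + (1-\rho)\delta = \bar{T}_{\mathrm{adv}} + (1-\rho)\delta$. Dividing numerator and denominator by $\bar{T}_{\mathrm{adv}}$ then gives
\[
I_{\Trust} = \frac{\rho}{1 + (1-\rho)\delta/\bar{T}_{\mathrm{adv}}}.
\]
The stated inequality $\eta \ge 1 - \rho/(1+(1-\rho)\delta/\bar{T}_{\mathrm{adv}})$ therefore holds, in fact with equality. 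The bound would stay valid with a genuine ``$\geq$'' if one replaced $\bar{T}_{\mathrm{adv}}$ in the right-hand side by any upper bound on it, and I will remark on that.

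Third, I prove strictness against uniform weighting. For $\delta > 0$ and $\rho < 1$ (indeed $\rho < 0.3$ by the threat model), we have $(1-\rho)\delta/\bar{T}_{\mathrm{adv}} > 0$. The denominator therefore exceeds $1$, so $I_{\Trust} < \rho$ and $\eta > 1-\rho = \eta_0$. Here $\eta_0$ is recovered by setting all trust scores equal, i.e.\ $\delta = 0$.

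The only real obstacle is definitional, not technical. The statement conflates expectations with group averages and leaves the influence measure implicit. I would fix these conventions explicitly at the start, and note that with them the result is exact. The case $\rho = 0$ is degenerate: both sides equal $1$, and strictness requires $\rho > 0$, which I will state as a side condition.
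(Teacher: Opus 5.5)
Your proof is correct and follows essentially the route the paper intends. The paper gives no separate proof of this proposition; it relies on the trust-mass share from Theorem~\ref{thm:adv_influence} (which your first step shows is an exact identity for group means, not merely a bound) and the substitution $\bar{T}_{\mathrm{ben}} = \bar{T}_{\mathrm{adv}} + \delta$. Your two remarks are accurate refinements the paper omits: the stated ``$\geq$'' holds with equality, and strictness over $\eta_0 = 1-\rho$ requires $0 < \rho < 1$.
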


\noindent\textbf{Security Implication}: Independence enables \emph{defense-in-depth}. The trust layer can be audited, tested, and upgraded without modifying the optimizer. If trust fails (all scores equal), MVF-Composer degrades gracefully to uniform weighting rather than catastrophic failure.

\noindent\textbf{Computational Benefit}: Trust computation is $O(n_a \cdot W)$ where $n_a$ is agent count and $W$ is window size. This runs in parallel with covariance estimation ($O(d^2 \cdot T)$), exploiting the independence for pipeline parallelism (Table~\ref{tab:epoch_breakdown}).

\subsection{Trust-Weighted Risk Aggregation}
\label{app:risk_aggregation_proofs}

The trust-weighted risk state combines individual agent risk assessments with credibility weighting.

\begin{theorem}[Risk State Boundedness]
\label{thm:risk_bound}
Let $\RiskState_a \in [0,1]$ for all agents $a \in \AgentSet$, and let $\Trust(a) > 0$. The trust-weighted risk state:
\begin{equation}
    \RiskState_\Trust = \frac{\sum_{a \in \AgentSet} \Trust(a) \cdot \RiskState_a}{\sum_{a \in \AgentSet} \Trust(a)}
    \label{eq:risk_state_formal}
\end{equation}
satisfies $\RiskState_\Trust \in [0,1]$.
\end{theorem}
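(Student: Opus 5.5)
The plan is to recognize $\RiskState_\Trust$ as a convex combination of the individual signals $\RiskState_a$ and then use the convexity of the interval $[0,1]$. Set $Z \triangleq \sum_{a \in \AgentSet} \Trust(a)$. Since each $\Trust(a) > 0$ by hypothesis (and Theorem~\ref{thm:trust_properties}(a) gives $\Trust(a) \in (0,1)$ for the sigmoid score in any case), we have $Z > 0$ as long as $\AgentSet$ is nonempty, which is implicit in the setting. So the quotient in Equation~\ref{eq:risk_state_formal} is well defined.

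Next, introduce the normalized weights $\pi_a \triangleq \Trust(a)/Z$. Each $\pi_a$ is strictly positive, and by construction $\sum_{a} \pi_a = 1$. Hence $\RiskState_\Trust = \sum_a \pi_a \RiskState_a$ is a convex combination of the $\RiskState_a$.

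The bounds then follow by elementwise comparison. For the lower bound, every term $\pi_a \RiskState_a$ is nonnegative because $\pi_a > 0$ and $\RiskState_a \ge 0$, so $\RiskState_\Trust \ge 0$. For the upper bound, $\RiskState_a \le 1$ gives $\sum_a \pi_a \RiskState_a \le \sum_a \pi_a = 1$. In fact the argument yields the sharper statement $\min_a \RiskState_a \le \RiskState_\Trust \le \max_a \RiskState_a$, and I would record this because it is what the later influence bounds use.

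I do not expect a genuine obstacle. The only points needing care are the well-definedness of the denominator, which requires a nonempty agent set and strictly positive trust, and confirming the premise $\RiskState_a \in [0,1]$ for the concrete signal in Equation~\ref{eq:risk_state_individual}. That premise holds because each of the three averaged terms lies in $[0,1]$, provided the last term is additionally clipped above at $1$; otherwise the hypothesis has to be taken as an assumption of the theorem.
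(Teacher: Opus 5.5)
Your proposal is correct and follows the paper's proof essentially step for step: the paper also normalizes the weights as $\tilde{T}(a) = \Trust(a)/\sum_{a'}\Trust(a')$, writes $\RiskState_\Trust$ as the resulting convex combination, and concludes from $\min_a \RiskState_a \le \RiskState_\Trust \le \max_a \RiskState_a$. Your extra remarks are valid points the paper leaves implicit: you need a nonempty agent set, and the unclipped third term of Equation~\ref{eq:risk_state_individual} can push $\RiskState_a$ above $1$. Neither affects the theorem as stated, because $\RiskState_a \in [0,1]$ is taken as a hypothesis.
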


\begin{proof}
Equation~\eqref{eq:risk_state_formal} is a convex combination. Define normalized weights:
\begin{equation}
    \tilde{T}(a) = \frac{\Trust(a)}{\sum_{a' \in \AgentSet} \Trust(a')}
\end{equation}
Then $\tilde{T}(a) \geq 0$ and $\sum_a \tilde{T}(a) = 1$, so $\{\tilde{T}(a)\}$ forms a probability distribution. The risk state becomes:
\begin{equation}
    \RiskState_\Trust = \sum_{a \in \AgentSet} \tilde{T}(a) \cdot \RiskState_a
\end{equation}
Since each $\RiskState_a \in [0,1]$ and $\tilde{T}$ is a probability distribution, the convex combination satisfies:
\begin{equation}
    \min_a \RiskState_a \leq \RiskState_\Trust \leq \max_a \RiskState_a
\end{equation}
In particular, $\RiskState_\Trust \in [0,1]$. \qed
\end{proof}

\begin{theorem}[Adversarial Influence Bound]
\label{thm:adv_influence}
Partition agents into benign ($\AgentBen$) and adversarial ($\AgentAdv$) sets with $|\AgentAdv|/|\AgentSet| = \rho$. Let $\bar{T}_{\mathrm{adv}}$ and $\bar{T}_{\mathrm{ben}}$ denote mean trust scores for each group. The adversarial contribution to $\RiskState_\Trust$ is bounded by:
\begin{equation}
    \mathrm{Influence}_{\mathrm{adv}} \leq \frac{\rho \cdot \bar{T}_{\mathrm{adv}}}{\rho \cdot \bar{T}_{\mathrm{adv}} + (1-\rho) \cdot \bar{T}_{\mathrm{ben}}}
    \label{eq:influence_bound}
\end{equation}
\end{theorem}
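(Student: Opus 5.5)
The plan is to make precise what ``adversarial contribution to $\RiskState_\Trust$'' means, and then reduce the bound to an exact identity for trust mass plus one elementary inequality. Using the normalized weights $\tilde{T}(a)$ from the proof of Theorem~\ref{thm:risk_bound}, I split the convex combination by group:
\begin{equation*}
\RiskState_\Trust = \sum_{a \in \AgentAdv} \tilde{T}(a)\,\RiskState_a + \sum_{a \in \AgentBen} \tilde{T}(a)\,\RiskState_a .
\end{equation*}
I define $\mathrm{Influence}_{\mathrm{adv}}$ as the first summand, i.e.\ the portion of the aggregate attributable to adversarial agents. The companion quantity $W_{\mathrm{adv}} = \sum_{a\in\AgentAdv}\tilde{T}(a)$ is the adversarial share of the total trust mass.

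\textbf{Step 1.} Since $\RiskState_a \in [0,1]$ and $\tilde{T}(a) \geq 0$, each adversarial term satisfies $\tilde{T}(a)\RiskState_a \leq \tilde{T}(a)$. Summing over $\AgentAdv$ gives $\mathrm{Influence}_{\mathrm{adv}} \leq W_{\mathrm{adv}}$. This is the only genuine inequality in the argument. It is tight exactly when every adversary reports $\RiskState_a = 1$, which is the worst-case signal-injection scenario (A2).

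\textbf{Step 2.} I compute $W_{\mathrm{adv}}$ exactly using the group means. Write $N = |\AgentSet|$, so that $|\AgentAdv| = \rho N$ and $|\AgentBen| = (1-\rho)N$. By the definition of the means,
\begin{equation*}
\sum_{a\in\AgentAdv}\Trust(a) = \rho N\,\bar{T}_{\mathrm{adv}}, \qquad \sum_{a\in\AgentBen}\Trust(a) = (1-\rho)N\,\bar{T}_{\mathrm{ben}} .
\end{equation*}
The total trust mass is the sum of these two. Dividing, the factor $N$ cancels, and $W_{\mathrm{adv}}$ equals the right-hand side of \eqref{eq:influence_bound}. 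The denominator is strictly positive because $\Trust(a) > 0$ by Theorem~\ref{thm:trust_properties}(a), so the division is legitimate. Combining with Step 1 yields \eqref{eq:influence_bound}.

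\textbf{Main obstacle.} The hard part is definitional rather than technical. If $\mathrm{Influence}_{\mathrm{adv}}$ were instead taken as the ratio in Definition~\ref{def:influence_reduction}, whose denominator is $\sum_a \Trust(a)\RiskState_a$, the bound could fail. For example, benign agents reporting $\RiskState_a \approx 0$ would drive that ratio toward $1$. I will therefore state explicitly that the theorem bounds the adversarial term of the decomposition, equivalently the adversarial weight share. I will note that it coincides with the ratio form when all $\RiskState_a$ are equal, or when adversaries saturate at $\RiskState_a = 1$.

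As a sanity check, setting $\bar{T}_{\mathrm{adv}} = \bar{T}_{\mathrm{ben}}$ recovers the uniform-weighting value $\rho$. The bound is increasing in $\bar{T}_{\mathrm{adv}}/\bar{T}_{\mathrm{ben}}$, so any positive separation strictly improves on $\rho$, which is the fact Proposition~\ref{prop:trust_effectiveness} later builds on.
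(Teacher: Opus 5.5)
Your proof is correct, and your Step~2 is essentially the paper's entire argument. The difference is which quantity gets called $\mathrm{Influence}_{\mathrm{adv}}$.

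The paper defines it directly as the trust-mass share $W_{\mathrm{adv}}/(W_{\mathrm{adv}}+W_{\mathrm{ben}})$, with unnormalized sums $W_{\mathrm{adv}}=\sum_{a\in\AgentAdv}\Trust(a)$. That is exactly your normalized $W_{\mathrm{adv}}$. The paper then writes $W_{\mathrm{adv}}\le\rho|\AgentSet|\bar{T}_{\mathrm{adv}}$ and $W_{\mathrm{ben}}\ge(1-\rho)|\AgentSet|\bar{T}_{\mathrm{ben}}$, and concludes by monotonicity of $x/(x+y)$.

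As you noticed, for exact group means those two inequalities are identities, so for the paper's quantity the ``bound'' is an equality. The inequality form only has content if $\bar{T}_{\mathrm{adv}}$ and $\bar{T}_{\mathrm{ben}}$ are read as an upper and a lower bound on the group means. Under that reading the paper's monotonicity step carries the argument; your exact computation would need it added, though your sanity check already contains it.

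Your Step~1 instead puts the one genuine inequality at $\RiskState_a\le 1$. This bounds the literal adversarial summand of $\RiskState_\Trust$, which matches the phrase ``contribution to $\RiskState_\Trust$.'' The paper's version is independent of the reported signals, and that signal-free form is what its numerical example and the Sybil corollary actually use. Your observation that neither reading covers the ratio in Definition~\ref{def:influence_reduction} is correct and worth keeping.

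One correction to your closing remark. Saturation $\RiskState_a=1$ on $\AgentAdv$ is the tightness condition for your Step~1 (summand equals weight share). It is not a condition for agreement with the ratio of Definition~\ref{def:influence_reduction}. With saturated adversaries, that ratio is $\sum_{a\in\AgentAdv}\Trust(a)/\bigl(\sum_{a\in\AgentAdv}\Trust(a)+\sum_{a\in\AgentBen}\Trust(a)\RiskState_a\bigr)$. This strictly exceeds the weight share, and hence the bound, as soon as some benign agent reports $\RiskState_a<1$. Only the case where all $\RiskState_a$ are equal and positive makes the ratio coincide with the weight share. For the same reason, ``equivalently'' between the summand and the weight share is too strong. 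Say instead that the bound holds for both, with the summand at most the weight share.
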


\begin{proof}
The total trust weight for adversarial agents is:
\begin{equation}
    W_{\mathrm{adv}} = \sum_{a \in \AgentAdv} \Trust(a) \leq |\AgentAdv| \cdot \bar{T}_{\mathrm{adv}} = \rho \cdot |\AgentSet| \cdot \bar{T}_{\mathrm{adv}}
\end{equation}
Similarly, $W_{\mathrm{ben}} \geq (1-\rho) \cdot |\AgentSet| \cdot \bar{T}_{\mathrm{ben}}$.

The adversarial influence fraction is:
\begin{equation}
    \mathrm{Influence}_{\mathrm{adv}} = \frac{W_{\mathrm{adv}}}{W_{\mathrm{adv}} + W_{\mathrm{ben}}} \leq \frac{\rho \cdot \bar{T}_{\mathrm{adv}}}{\rho \cdot \bar{T}_{\mathrm{adv}} + (1-\rho) \cdot \bar{T}_{\mathrm{ben}}}
\end{equation}
\qed
\end{proof}

\noindent\textbf{Numerical Example}: With $\rho = 0.2$ (20\% adversaries), $\bar{T}_{\mathrm{adv}} = 0.35$, $\bar{T}_{\mathrm{ben}} = 0.75$:
\begin{align}
    \mathrm{Influence}_{\mathrm{adv}} &\leq \frac{0.2 \times 0.35}{0.2 \times 0.35 + 0.8 \times 0.75} \notag \\
    &= \frac{0.07}{0.67} = 0.104
\end{align}
Compare to uniform weighting: $\mathrm{Influence}_{\mathrm{uniform}} = \rho = 0.20$. Trust-weighting reduces adversarial influence by 48\%.

\subsection{Covariance Blending Properties}
\label{app:covariance_proofs}

The stress-augmented covariance blends historical and stress-derived estimates based on the current risk state.

\begin{theorem}[Positive Semi-Definiteness of Blended Covariance]
\label{thm:psd_blend}
Let $\Covariance^{\mathrm{hist}} \succeq 0$ and $\CovarianceStress \succeq 0$ be positive semi-definite covariance matrices. For any $\alpha \in [0,1]$, the blended covariance:
\begin{equation}
    \Covariance_{\mathrm{aug}} = (1-\alpha) \cdot \Covariance^{\mathrm{hist}} + \alpha \cdot \CovarianceStress
    \label{eq:cov_blend_formal}
\end{equation}
is also positive semi-definite: $\Covariance_{\mathrm{aug}} \succeq 0$.
\end{theorem}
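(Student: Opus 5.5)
The plan is to verify the defining quadratic-form inequality from the PSD definition above directly, using linearity of the quadratic form in the matrix argument. First I will check that $\Covariance_{\mathrm{aug}}$ is symmetric, since the PSD definition presupposes symmetry. Both $\Covariance^{\mathrm{hist}}$ and $\CovarianceStress$ are symmetric, so transposing Equation~\eqref{eq:cov_blend_formal} gives
\begin{equation*}
    \Covariance_{\mathrm{aug}}^\top = (1-\alpha)\,(\Covariance^{\mathrm{hist}})^\top + \alpha\,(\CovarianceStress)^\top = \Covariance_{\mathrm{aug}} .
\end{equation*}

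Next, I fix an arbitrary $\mathbf{x} \in \mathbb{R}^n$ and expand the quadratic form term by term:
\begin{equation*}
    \mathbf{x}^\top \Covariance_{\mathrm{aug}} \mathbf{x} = (1-\alpha)\, \mathbf{x}^\top \Covariance^{\mathrm{hist}} \mathbf{x} + \alpha\, \mathbf{x}^\top \CovarianceStress \mathbf{x}.
\end{equation*}
The hypothesis $\alpha \in [0,1]$ gives $1-\alpha \ge 0$ and $\alpha \ge 0$, and each quadratic form on the right is nonnegative by assumption. So the right-hand side is a nonnegative combination of nonnegative numbers, hence $\ge 0$. Since $\mathbf{x}$ was arbitrary, $\Covariance_{\mathrm{aug}} \succeq 0$. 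This is exactly the statement that the PSD cone is convex, which is the fact invoked informally in Section~\ref{sec:architecture}.

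Finally, I will note why the result applies to the actual controller. There $\alpha = \alpha(\RiskState_{\Trust}) = \RiskState_{\Trust}^2$, and Theorem~\ref{thm:risk_bound} gives $\RiskState_{\Trust} \in [0,1]$, so $\alpha(\RiskState_{\Trust}) \in [0,1]$ and the hypothesis is met. As an optional remark toward the eigenvalue bounds promised in the appendix, Weyl's inequalities (concavity of $\lambda_{\min}$ and convexity of $\lambda_{\max}$ on symmetric matrices) give
\begin{equation*}
    \lambda_{\min}(\Covariance_{\mathrm{aug}}) \ge (1-\alpha)\lambda_{\min}(\Covariance^{\mathrm{hist}}) + \alpha\,\lambda_{\min}(\CovarianceStress) \ge 0 .
\end{equation*}
This is an alternative route to the same conclusion.

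There is no substantive obstacle here. The only points needing care are the symmetry check and confirming that the blending weight really lies in $[0,1]$ when the theorem is used inside the pipeline; the latter is handled by Theorem~\ref{thm:risk_bound}.
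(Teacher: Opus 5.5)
Your proposal is correct and follows the paper's proof: fix $\mathbf{x}$, expand $\mathbf{x}^\top \Covariance_{\mathrm{aug}} \mathbf{x}$ as $(1-\alpha)\,\mathbf{x}^\top \Covariance^{\mathrm{hist}} \mathbf{x} + \alpha\,\mathbf{x}^\top \CovarianceStress \mathbf{x}$, and conclude from the nonnegative coefficients and nonnegative forms. The symmetry check and the Weyl-based eigenvalue remark are harmless extras, and the latter matches the paper's separate eigenvalue-bounds proposition.
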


\begin{proof}
For any $\mathbf{x} \in \mathbb{R}^n$:
\begin{align}
    \mathbf{x}^\top \Covariance_{\mathrm{aug}} \mathbf{x} &= (1-\alpha) \mathbf{x}^\top \Covariance^{\mathrm{hist}} \mathbf{x} + \alpha \mathbf{x}^\top \CovarianceStress \mathbf{x} \\
    &\geq (1-\alpha) \cdot 0 + \alpha \cdot 0 = 0
\end{align}
where we used $\Covariance^{\mathrm{hist}} \succeq 0$ and $\CovarianceStress \succeq 0$. Since this holds for all $\mathbf{x}$, we have $\Covariance_{\mathrm{aug}} \succeq 0$. \qed
\end{proof}

\noindent\textbf{Intuition}: The set of PSD matrices forms a convex cone---any non-negative weighted sum of PSD matrices remains PSD. This ensures the blended covariance is always a valid covariance matrix for portfolio optimization.

\begin{theorem}[Quadratic Blending Sensitivity]
\label{thm:quadratic_sensitivity}
Let $\alpha(r) = r^p$ for $p > 0$. The sensitivity of stress contribution to risk state is:
\begin{equation}
    \frac{d\alpha}{dr} = p \cdot r^{p-1}
\end{equation}
For $p = 2$ (quadratic), sensitivity is $2r$, which is:
\begin{enumerate}[label=(\alph*)]
    \item Zero at $r = 0$ (no reaction to calm markets).
    \item Linear in $r$ (progressively stronger reaction as stress increases).
    \item Equals 2 at $r = 1$ (maximum sensitivity at extreme stress).
\end{enumerate}
\end{theorem}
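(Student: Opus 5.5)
The plan is to prove this by direct differentiation, using the elementary power rule on the interval where $\alpha$ is defined. First, for $\alpha(r) = r^p$ with $p>0$ and $r \in (0,1]$, I will note that $r^p = e^{p \ln r}$ is smooth there. Differentiating gives $\frac{d\alpha}{dr} = p\, r^{p-1}$.

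Second, I specialize to $p=2$, where $\alpha'(r) = 2r$ for all $r\in(0,1]$. This formula is a polynomial, so it extends to the whole closed interval $[0,1]$.

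The three claims then follow one at a time. For (a), I need to check the derivative at the boundary point $r=0$, where the general formula $p r^{p-1}$ is not directly meaningful when $p<1$. For $p=2$, I will compute the one-sided difference quotient $\lim_{h\downarrow 0} (h^2 - 0)/h = \lim_{h\downarrow 0} h = 0$. This gives $\alpha'(0)=0$ and matches the formula $2r$ at $r=0$. For (b), linearity is immediate: $\alpha'(r)=2r$ is an affine (indeed linear) function of $r$ with positive slope. Its positive slope also means $\alpha'$ is strictly increasing, which is the precise sense of "progressively stronger reaction." Equivalently, $\alpha''(r)=2>0$, so $\alpha$ is strictly convex. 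For (c), I evaluate $\alpha'(1)=2$. Since $\alpha'$ is increasing on $[0,1]$, this value is the maximum of the sensitivity over the admissible domain $[0,1]$, which justifies the phrase "maximum sensitivity at extreme stress."

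The only step needing care is the endpoint $r=0$ in (a), because the general formula $p r^{p-1}$ can blow up there for $p<1$. I will handle it with the one-sided limit as above, and remark that the domain restriction $r \in [0,1]$ comes from $\RiskState_\Trust \in [0,1]$ (Theorem~\ref{thm:risk_bound}). I will also add that $\alpha$ maps $[0,1]$ into $[0,1]$ and is monotone, so the blended covariance in Theorem~\ref{thm:psd_blend} remains well-defined and PSD for every $r$.
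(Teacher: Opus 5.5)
Your proposal is correct and follows the paper's proof, which simply differentiates $r^p$ directly and then evaluates $2r$ at $r=0$ and $r=1$. Your one-sided difference quotient at $r=0$, and your use of the domain $[0,1]$ from Theorem~\ref{thm:risk_bound} to justify that $r=1$ gives the maximum sensitivity, fill in details the paper leaves implicit.
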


\begin{proof}
Direct differentiation of $\alpha(r) = r^p$:
\begin{equation}
    \frac{d\alpha}{dr} = p \cdot r^{p-1}
\end{equation}
For $p = 2$: $\alpha'(r) = 2r$, yielding $\alpha'(0) = 0$, $\alpha'(0.5) = 1$, $\alpha'(1) = 2$. \qed
\end{proof}

\noindent\textbf{Comparison with Linear Blending}: For $p = 1$ (linear), $\alpha'(r) = 1$ everywhere, meaning equal sensitivity across all risk levels. The quadratic choice $p = 2$ ensures the system is \emph{conservative} in calm markets but \emph{reactive} to tail events---desirable for tail-risk management.

\begin{proposition}[Eigenvalue Bounds for Blended Covariance]
\label{prop:eigenvalue_bounds}
Let $\lambda_{\min}^H, \lambda_{\max}^H$ and $\lambda_{\min}^S, \lambda_{\max}^S$ denote the minimum and maximum eigenvalues of $\Covariance^{\mathrm{hist}}$ and $\CovarianceStress$, respectively. Then:
\begin{align}
    \lambda_{\min}(\Covariance_{\mathrm{aug}}) &\geq (1-\alpha)\lambda_{\min}^H + \alpha \lambda_{\min}^S \\
    \lambda_{\max}(\Covariance_{\mathrm{aug}}) &\leq (1-\alpha)\lambda_{\max}^H + \alpha \lambda_{\max}^S
\end{align}
\end{proposition}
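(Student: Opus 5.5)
The plan is to read both bounds directly off the Rayleigh-quotient characterization of extreme eigenvalues of a symmetric matrix; this is the Weyl-type inequality for a sum of Hermitian matrices. Fix $\alpha \in [0,1]$. First I note that $\Covariance_{\mathrm{aug}}$ is symmetric, because it is a real linear combination of the symmetric matrices $\Covariance^{\mathrm{hist}}$ and $\CovarianceStress$. Hence $\Covariance_{\mathrm{aug}}$ has real eigenvalues, and its extreme eigenvalues are
\begin{equation*}
  \lambda_{\min}(\Covariance_{\mathrm{aug}}) = \min_{\|\mathbf{x}\|=1} \mathbf{x}^\top \Covariance_{\mathrm{aug}} \mathbf{x},
  \qquad
  \lambda_{\max}(\Covariance_{\mathrm{aug}}) = \max_{\|\mathbf{x}\|=1} \mathbf{x}^\top \Covariance_{\mathrm{aug}} \mathbf{x}.
\end{equation*}

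Next, for any unit vector $\mathbf{x}$ I expand the quadratic form linearly, exactly as in the proof of Theorem~\ref{thm:psd_blend}:
\begin{equation*}
  \mathbf{x}^\top \Covariance_{\mathrm{aug}} \mathbf{x} = (1-\alpha)\, \mathbf{x}^\top \Covariance^{\mathrm{hist}} \mathbf{x} + \alpha\, \mathbf{x}^\top \CovarianceStress \mathbf{x}.
\end{equation*}
Applying the Rayleigh bounds to each summand separately gives $\lambda_{\min}^H \le \mathbf{x}^\top \Covariance^{\mathrm{hist}} \mathbf{x} \le \lambda_{\max}^H$, and likewise $\lambda_{\min}^S \le \mathbf{x}^\top \CovarianceStress \mathbf{x} \le \lambda_{\max}^S$. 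The coefficients $1-\alpha$ and $\alpha$ are both nonnegative, so multiplying through preserves these inequalities. This yields, for every unit $\mathbf{x}$,
\begin{equation*}
  (1-\alpha)\lambda_{\min}^H + \alpha\lambda_{\min}^S \;\le\; \mathbf{x}^\top \Covariance_{\mathrm{aug}} \mathbf{x} \;\le\; (1-\alpha)\lambda_{\max}^H + \alpha\lambda_{\max}^S.
\end{equation*}

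Finally, I take the minimum over unit $\mathbf{x}$ in the left inequality and the maximum in the right inequality, which gives the two claimed bounds.

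The only step needing any care is the sign of the coefficients: the argument uses $\alpha \in [0,1]$, which holds because $\alpha = \alpha(\RiskState_\Trust)$ and $\RiskState_\Trust \in [0,1]$ by Theorem~\ref{thm:risk_bound}, and $\alpha(\cdot)$ maps $[0,1]$ into $[0,1]$. It is worth remarking afterwards that the bounds are generally not tight. Equality holds only when a common eigenvector attains the extreme eigenvalue of both matrices, for instance when the two matrices commute with aligned spectra. This is why the result is stated as inequalities rather than equalities.
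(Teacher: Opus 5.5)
Your argument is correct and is essentially the paper's proof. The paper simply cites Weyl's inequality for sums of Hermitian matrices; your Rayleigh-quotient computation is the standard proof of the extreme-eigenvalue case of Weyl that is needed here, and you make explicit the nonnegativity of $1-\alpha$ and $\alpha$, which the paper leaves implicit. One minor caveat concerns your closing remark, which is not needed for the proof: the claim that equality holds only when a common eigenvector attains both extremes requires $\alpha\in(0,1)$, since at $\alpha\in\{0,1\}$ the bounds hold with equality trivially.
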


\begin{proof}
By Weyl's inequality for sums of Hermitian matrices, the eigenvalues of a sum are bounded by sums of eigenvalues. For the convex combination:
\begin{equation}
    \lambda_i(\Covariance_{\mathrm{aug}}) \in \left[ (1-\alpha)\lambda_{\min}^H + \alpha \lambda_{\min}^S, \; (1-\alpha)\lambda_{\max}^H + \alpha \lambda_{\max}^S \right]
\end{equation}
for each eigenvalue index $i$. The bounds follow. \qed
\end{proof}

\subsection{MVF Optimization: Existence and Optimality}
\label{app:mvf_proofs}

We prove that the constrained mean-variance optimization problem has a unique solution.

\begin{theorem}[Existence and Uniqueness of Optimal Weights]
\label{thm:mvf_existence}
Consider the optimization problem:
\begin{equation}
    \WeightsOpt = \arg\min_{\Weights \in \mathcal{W}} \; \Weights^\top \Covariance_{\mathrm{aug}} \Weights - \RiskAversion \cdot \Weights^\top \Returns
    \label{eq:mvf_problem}
\end{equation}
where $\mathcal{W} = \left\{ \Weights \in \Delta^{n-1} : \|\Weights - \Weights_{t-1}\|_1 \leq \Turnover, \; w_i^{\min} \leq w_i \leq w_i^{\max} \right\}$.

If $\Covariance_{\mathrm{aug}} \succ 0$ (positive definite) and $\mathcal{W} \neq \emptyset$, then:
\begin{enumerate}[label=(\alph*)]
    \item A unique optimal solution $\WeightsOpt$ exists.
    \item The solution satisfies the KKT conditions.
\end{enumerate}
\end{theorem}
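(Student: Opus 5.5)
The argument is the standard one for strictly convex quadratic programs over a compact polyhedron. We split it into existence, uniqueness, and KKT.

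\textbf{Existence.} Write $J(\Weights) = \Weights^\top \Covariance_{\mathrm{aug}} \Weights - \RiskAversion \Weights^\top \Returns$, and show that $\mathcal{W}$ is a nonempty compact convex polyhedron.
\begin{enumerate}[label=(\roman*)]
\item Since $\mathcal{W} \subseteq \Delta^{n-1}$, the set is bounded.
\item Each defining constraint is a closed linear (in)equality: the simplex equality, the box bounds $w_i^{\min} \le w_i \le w_i^{\max}$, and the turnover ball $\|\Weights - \Weights_{t-1}\|_1 \le \Turnover$, the last being a sublevel set of a continuous convex function. Hence $\mathcal{W}$ is closed.
\item The $\ell_1$ constraint is polyhedral: introduce auxiliary variables $u_i \ge |w_i - w_{t-1,i}|$, so it becomes $\sum_i u_i \le \Turnover$.
\end{enumerate}
The objective $J$ is continuous, so Weierstrass gives a minimizer on the nonempty compact set $\mathcal{W}$.

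\textbf{Uniqueness.} Since $\nabla^2 J = 2\Covariance_{\mathrm{aug}} \succ 0$, $J$ is strictly convex. Suppose $\Weights^a \ne \Weights^b$ are both minimizers with value $J^*$. Their midpoint lies in $\mathcal{W}$ by convexity, and strict convexity gives it a value strictly below $J^*$, a contradiction. It is worth noting where the hypothesis $\Covariance_{\mathrm{aug}} \succ 0$ comes from. By Proposition~\ref{prop:eigenvalue_bounds} it holds, for example, whenever $(1-\alpha)\lambda_{\min}^H + \alpha\lambda_{\min}^S > 0$, and Theorem~\ref{thm:psd_blend} alone only guarantees semidefiniteness.

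\textbf{KKT conditions.} This step is the main point needing care, because KKT is only necessary under a constraint qualification. All constraints are affine after the lifting of the $\ell_1$ norm, and for affine constraints the linearity constraint qualification (Abadie's condition) holds automatically. KKT multipliers therefore exist at $\WeightsOpt$ without any Slater-type assumption. If that route proves awkward, I would fall back on Slater's condition under a strict-feasibility assumption. That assumption is mild given the default parameters: for $n = 4$ with $w^{\min} = 0.05$ and $w^{\max} = 0.6$ the interior is nonempty, but the turnover ball around $\Weights_{t-1}$ could make it degenerate.

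The resulting KKT system is:
\begin{itemize}
\item \emph{Stationarity:} $2\Covariance_{\mathrm{aug}}\WeightsOpt - \RiskAversion\Returns + \nu\mathbf{1} - \boldsymbol{\mu}^{\min} + \boldsymbol{\mu}^{\max} + \theta\,\mathbf{g} = 0$, with $\mathbf{g} \in \partial\|\WeightsOpt - \Weights_{t-1}\|_1$.
\item \emph{Primal feasibility:} $\WeightsOpt \in \mathcal{W}$.
\item \emph{Dual feasibility:} $\boldsymbol{\mu}^{\min}, \boldsymbol{\mu}^{\max}, \theta \ge 0$.
\item \emph{Complementary slackness:} each multiplier vanishes unless its constraint is active.
\end{itemize}

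Conversely, because the problem is convex, any KKT point is globally optimal. Combined with uniqueness, this means the KKT system characterizes $\WeightsOpt$ exactly.
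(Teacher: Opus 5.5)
Your part (a) is the paper's argument written out in more detail. The paper gets existence and uniqueness in one line from strict convexity of $J$ (Hessian $2\Covariance_{\mathrm{aug}} \succ 0$) on a nonempty compact convex $\mathcal{W}$, and your Weierstrass-plus-midpoint reasoning is the same argument.

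The real difference is in (b). The paper invokes Slater's condition only conditionally (``interior point exists if $\Turnover > 0$ and constraints are not tight''). That strict feasibility is not among the theorem's hypotheses. With $\Turnover = 0$, for instance, $\mathcal{W} = \{\Weights_{t-1}\}$ can be nonempty while no point satisfies the turnover constraint strictly, yet multipliers still exist. Your route lifts the $\ell_1$ ball to affine inequalities and uses the fact that affine constraints need no further qualification. It therefore produces multipliers from $\mathcal{W} \neq \emptyset$ alone, which proves (b) under exactly the stated hypotheses. The paper's route is shorter because it cites a textbook condition; yours remains valid in the degenerate cases the hypotheses allow.

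Your KKT system is also the more accurate one. It carries the turnover term $\theta\mathbf{g}$, which the paper's stationarity condition omits even though the turnover limit can bind at the optimum. It also gives the lower-bound multiplier the correct sign; the paper writes $+\boldsymbol{\lambda}$ for a constraint whose gradient is $-\mathbf{e}_i$.

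The one step you leave implicit is converting the lifted problem's multipliers back into $\theta\mathbf{g}$ with $\mathbf{g} \in \partial\|\WeightsOpt - \Weights_{t-1}\|_1$. It is routine:
\begin{enumerate}
\item Work at the lifted minimizer with $u_i = |w_i^* - w_{t-1,i}|$, and let $\beta_i^{\pm} \ge 0$ be the multipliers of $\pm(w_i - w_{t-1,i}) \le u_i$.
\item Stationarity in $u_i$ gives $\theta = \beta_i^+ + \beta_i^-$.
\item Set $g_i = (\beta_i^+ - \beta_i^-)/\theta$ when $\theta > 0$; when $\theta = 0$ all $\beta_i^{\pm}$ vanish and the term drops out.
\item Complementary slackness then gives $g_i = \mathrm{sign}(w_i^* - w_{t-1,i})$ whenever that difference is nonzero, and $g_i \in [-1,1]$ otherwise.
\end{enumerate}
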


\begin{proof}
\textbf{(a) Existence and Uniqueness.}

\emph{Step 1: Convexity of objective.} The objective function is:
\begin{equation}
    J(\Weights) = \Weights^\top \Covariance_{\mathrm{aug}} \Weights - \RiskAversion \cdot \Weights^\top \Returns
\end{equation}
The Hessian is $\nabla^2 J = 2\Covariance_{\mathrm{aug}} \succ 0$, so $J$ is strictly convex.

\emph{Step 2: Convexity of feasible set.} The simplex $\Delta^{n-1}$ is convex (intersection of halfspaces). The $\ell_1$-ball constraint $\|\Weights - \Weights_{t-1}\|_1 \leq \Turnover$ is convex. Box constraints $w_i^{\min} \leq w_i \leq w_i^{\max}$ are convex. The intersection $\mathcal{W}$ is convex.

\emph{Step 3: Compactness.} The feasible set $\mathcal{W}$ is bounded (subset of simplex) and closed, hence compact.

\emph{Step 4: Conclusion.} A strictly convex function over a non-empty compact convex set attains its minimum at exactly one point. \qed

\textbf{(b) KKT Conditions.}

Since $J$ is convex and $\mathcal{W}$ is defined by affine and convex constraints, Slater's condition holds (interior point exists if $\Turnover > 0$ and constraints are not tight). By convex optimization theory, $\WeightsOpt$ satisfies KKT conditions:
\begin{align}
    2\Covariance_{\mathrm{aug}} \WeightsOpt - \RiskAversion \Returns + \nu \mathbf{1} + \boldsymbol{\mu} + \boldsymbol{\lambda} &= \mathbf{0} \\
    \mathbf{1}^\top \WeightsOpt &= 1 \\
    \mu_i (w_i - w_i^{\max}) = 0, \quad \lambda_i (w_i^{\min} - w_i) &= 0 \quad \forall i
\end{align}
where $\nu$ is the Lagrange multiplier for the simplex constraint, and $\boldsymbol{\mu}, \boldsymbol{\lambda} \geq 0$ are multipliers for box constraints. \qed
\end{proof}

\begin{corollary}[Unconstrained Markowitz Solution]
\label{cor:markowitz}
Without box or turnover constraints, and relaxing the non-negativity requirement, the optimal weights are:
\begin{equation}
    \WeightsOpt = \frac{\RiskAversion}{2} \Covariance_{\mathrm{aug}}^{-1} \Returns + \frac{1 - \frac{\RiskAversion}{2} \mathbf{1}^\top \Covariance_{\mathrm{aug}}^{-1} \Returns}{\mathbf{1}^\top \Covariance_{\mathrm{aug}}^{-1} \mathbf{1}} \Covariance_{\mathrm{aug}}^{-1} \mathbf{1}
\end{equation}
This is the classical mean-variance efficient portfolio on the efficient frontier.
\end{corollary}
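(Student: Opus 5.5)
The plan is to specialize the KKT system of Theorem~\ref{thm:mvf_existence}(b) to the case with no box and no turnover constraints, and with the non-negativity part of $\Delta^{n-1}$ dropped. Only the affine budget constraint $\mathbf{1}^\top \Weights = 1$ then remains. As in Theorem~\ref{thm:mvf_existence}, I will assume $\Covariance_{\mathrm{aug}} \succ 0$. This gives invertibility of $\Covariance_{\mathrm{aug}}$, and it gives $\mathbf{1}^\top \Covariance_{\mathrm{aug}}^{-1}\mathbf{1} > 0$ because $\Covariance_{\mathrm{aug}}^{-1} \succ 0$ and $\mathbf{1} \neq \mathbf{0}$. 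The objective $J$ is strictly convex (Hessian $2\Covariance_{\mathrm{aug}}$) and the feasible set is a nonempty affine hyperplane. Hence a point satisfying the Lagrange conditions is the global minimizer, and it is unique. Existence still has to be checked, since compactness is lost once the simplex is relaxed to a hyperplane. I will obtain it directly by exhibiting the stationary point, rather than appealing to Step~3 of the earlier proof.

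\textbf{Step 1 (stationarity).} I form the Lagrangian $L(\Weights,\nu) = \Weights^\top \Covariance_{\mathrm{aug}} \Weights - \RiskAversion \Weights^\top \Returns + \nu(\mathbf{1}^\top \Weights - 1)$. This is the first KKT equation of Theorem~\ref{thm:mvf_existence} with $\boldsymbol{\mu} = \boldsymbol{\lambda} = \mathbf{0}$. Setting $\nabla_{\Weights} L = 2\Covariance_{\mathrm{aug}}\Weights - \RiskAversion\Returns + \nu\mathbf{1} = \mathbf{0}$ and solving gives
\[
\Weights = \tfrac{\RiskAversion}{2}\Covariance_{\mathrm{aug}}^{-1}\Returns - \tfrac{\nu}{2}\Covariance_{\mathrm{aug}}^{-1}\mathbf{1}.
\]

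\textbf{Step 2 (multiplier).} I substitute this into $\mathbf{1}^\top\Weights = 1$, which gives the scalar equation
\[
\tfrac{\RiskAversion}{2}\mathbf{1}^\top\Covariance_{\mathrm{aug}}^{-1}\Returns - \tfrac{\nu}{2}\mathbf{1}^\top\Covariance_{\mathrm{aug}}^{-1}\mathbf{1} = 1.
\]
Solving, $-\nu/2 = \bigl(1 - \tfrac{\RiskAversion}{2}\mathbf{1}^\top\Covariance_{\mathrm{aug}}^{-1}\Returns\bigr)/\mathbf{1}^\top\Covariance_{\mathrm{aug}}^{-1}\mathbf{1}$. The division is legitimate by the positivity noted above.

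\textbf{Step 3 (conclusion).} Back-substituting $-\nu/2$ into Step~1 yields exactly the displayed formula. Sufficiency and uniqueness follow from strict convexity on the affine set. For the closing remark, I will note that varying $\RiskAversion$ traces out the set of mean-variance optimal portfolios. This is a positive combination of the minimum-variance portfolio $\Covariance_{\mathrm{aug}}^{-1}\mathbf{1}/\mathbf{1}^\top\Covariance_{\mathrm{aug}}^{-1}\mathbf{1}$ and a zero-cost tilt along $\Covariance_{\mathrm{aug}}^{-1}\Returns$, which is the classical frontier. The algebra is routine. The only point needing care is stating explicitly that positive definiteness, not merely the PSD property of Theorem~\ref{thm:psd_blend}, is required for $\Covariance_{\mathrm{aug}}^{-1}$ to exist and for the denominator to be nonzero. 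I would cite Proposition~\ref{prop:eigenvalue_bounds}, which gives $\lambda_{\min}(\Covariance_{\mathrm{aug}}) > 0$ whenever, for example, $\lambda^H_{\min} > 0$ and $\alpha < 1$.
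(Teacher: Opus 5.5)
Your proposal is correct and takes the route the paper intends. The paper gives no separate proof of this corollary; it presents it as the specialization of the KKT system of Theorem~\ref{thm:mvf_existence} with the box multipliers set to zero. Your Lagrangian computation reproduces the displayed formula exactly, and it adds two points the paper omits: existence via the explicit stationary point on the non-compact hyperplane, and the need for $\Covariance_{\mathrm{aug}} \succ 0$ rather than mere positive semi-definiteness.

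One small correction to your closing remark: the zero-cost tilt, with coefficient $\RiskAversion/2$, is along $\Covariance_{\mathrm{aug}}^{-1}\Returns - \frac{\mathbf{1}^\top \Covariance_{\mathrm{aug}}^{-1}\Returns}{\mathbf{1}^\top \Covariance_{\mathrm{aug}}^{-1}\mathbf{1}}\Covariance_{\mathrm{aug}}^{-1}\mathbf{1}$, not along $\Covariance_{\mathrm{aug}}^{-1}\Returns$ itself.
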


\noindent\textbf{Intuition}: The optimal portfolio balances two forces: (1) maximizing expected return ($\Returns$) and (2) minimizing variance ($\Covariance_{\mathrm{aug}}$). Assets with high return-to-risk ratios receive larger weights, while highly correlated assets are diversified away from.

\subsection{SAS Fragility Analysis}
\label{app:sas_fragility_proof}

We formalize why covariance estimation that excludes crisis periods leads to fragile allocations.

\begin{theorem}[Variance Amplification Under Regime Shift]
\label{thm:sas_fragility}
Let $\WeightsOpt_{\mathrm{calm}}$ be the optimal weights computed using calm-period covariance $\Covariance^{\mathrm{calm}}$. If the true covariance during stress is $\CovarianceStress$, the realized-to-expected variance ratio is:
\begin{equation}
    \kappa = \frac{{\WeightsOpt_{\mathrm{calm}}}^\top \CovarianceStress \WeightsOpt_{\mathrm{calm}}}{{\WeightsOpt_{\mathrm{calm}}}^\top \Covariance^{\mathrm{calm}} \WeightsOpt_{\mathrm{calm}}}
    \label{eq:kappa_ratio}
\end{equation}
This ratio satisfies:
\begin{equation}
    \kappa \leq \frac{\lambda_{\max}(\CovarianceStress)}{\lambda_{\min}(\Covariance^{\mathrm{calm}})}
\end{equation}
with equality when $\WeightsOpt_{\mathrm{calm}}$ aligns with the eigenvector of maximum stress variance.
\end{theorem}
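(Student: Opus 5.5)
The plan is to bound the numerator and the denominator of $\kappa$ separately with Rayleigh-quotient estimates and then take their ratio. Write $\mathbf{w} = \WeightsOpt_{\mathrm{calm}}$. First observe that $\mathbf{w} \neq \mathbf{0}$, since $\mathbf{w} \in \Delta^{n-1}$ forces $\mathbf{1}^\top \mathbf{w} = 1$; hence $\|\mathbf{w}\|_2^2 > 0$. We assume $\Covariance^{\mathrm{calm}} \succ 0$, so that $\lambda_{\min}(\Covariance^{\mathrm{calm}}) > 0$; this is the same hypothesis under which Theorem~\ref{thm:mvf_existence} guarantees that $\WeightsOpt_{\mathrm{calm}}$ is well defined. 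Under this assumption the denominator of \eqref{eq:kappa_ratio} is strictly positive and $\kappa$ is finite.

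Next, apply the Courant--Fischer (Rayleigh quotient) characterization to each symmetric matrix. Diagonalize $\CovarianceStress = \sum_i \lambda_i^S \mathbf{u}_i \mathbf{u}_i^\top$ with orthonormal eigenvectors. Then
\begin{equation*}
\mathbf{w}^\top \CovarianceStress \mathbf{w} = \sum_i \lambda_i^S (\mathbf{u}_i^\top \mathbf{w})^2 \leq \lambda_{\max}(\CovarianceStress)\, \|\mathbf{w}\|_2^2 .
\end{equation*}
In the same way, $\mathbf{w}^\top \Covariance^{\mathrm{calm}} \mathbf{w} \geq \lambda_{\min}(\Covariance^{\mathrm{calm}})\, \|\mathbf{w}\|_2^2$. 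Dividing the first inequality by the second, the factor $\|\mathbf{w}\|_2^2$ cancels and we get
\begin{equation*}
\kappa \leq \frac{\lambda_{\max}(\CovarianceStress)}{\lambda_{\min}(\Covariance^{\mathrm{calm}})},
\end{equation*}
which is exactly the claim of Proposition~\ref{prop:sas_fragility}. Nothing about the optimality of $\mathbf{w}$ is used, so the bound holds for every nonzero portfolio.

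The main obstacle is the equality clause, which needs more care than the statement suggests. Equality requires both Rayleigh bounds to be tight at the same time. That means $\mathbf{w}$ must lie in the top eigenspace of $\CovarianceStress$ \emph{and} in the bottom eigenspace of $\Covariance^{\mathrm{calm}}$. Alignment with the maximum-stress eigenvector alone is not sufficient. I would therefore state the sharp condition as a characterization: equality holds if and only if $\mathbf{w}$ is simultaneously a $\lambda_{\max}$-eigenvector of $\CovarianceStress$ and a $\lambda_{\min}$-eigenvector of $\Covariance^{\mathrm{calm}}$. The "if and only if" follows from the sum expression above, where the inequality is strict whenever $\mathbf{w}$ has mass on a smaller eigenvalue. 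As a sanity check, the case $\Covariance^{\mathrm{calm}} = c\mathbf{I}$ makes the second condition automatic and recovers the stated alignment condition. I would also remark that the constraint $\mathbf{w} \in \Delta^{n-1}$ may prevent equality from being attained at all. In that case the bound is the worst case over all directions and need not be tight over the simplex.
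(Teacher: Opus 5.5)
Your proof is correct and is the same argument as the paper's. Both bound the numerator above by $\lambda_{\max}(\CovarianceStress)\|\mathbf{w}\|_2^2$ and the denominator below by $\lambda_{\min}(\Covariance^{\mathrm{calm}})\|\mathbf{w}\|_2^2$ using Rayleigh quotients, then divide, without using optimality. Your additions are also warranted: the paper implicitly needs $\Covariance^{\mathrm{calm}} \succ 0$ and never proves the equality clause, which is false as stated (for instance $\CovarianceStress = \Covariance^{\mathrm{calm}} = \mathrm{diag}(2,1)$ with $\mathbf{w} = \mathbf{e}_1$ gives $\kappa = 1 < 2$), and your simultaneous-eigenvector characterization is the right one, except in the trivial case $\CovarianceStress = \mathbf{0}$ where equality holds for every $\mathbf{w}$.
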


\begin{proof}
By the Rayleigh quotient bounds:
\begin{align}
    {\WeightsOpt}^\top \CovarianceStress \WeightsOpt &\leq \lambda_{\max}(\CovarianceStress) \|\WeightsOpt\|_2^2 \\
    {\WeightsOpt}^\top \Covariance^{\mathrm{calm}} \WeightsOpt &\geq \lambda_{\min}(\Covariance^{\mathrm{calm}}) \|\WeightsOpt\|_2^2
\end{align}
Dividing:
\begin{equation}
    \kappa = \frac{{\WeightsOpt}^\top \CovarianceStress \WeightsOpt}{{\WeightsOpt}^\top \Covariance^{\mathrm{calm}} \WeightsOpt} \leq \frac{\lambda_{\max}(\CovarianceStress)}{\lambda_{\min}(\Covariance^{\mathrm{calm}})}
\end{equation}
\qed
\end{proof}

\noindent\textbf{Empirical Validation}: For stablecoin reserves during Black Thursday:
\begin{itemize}[leftmargin=*, nosep]
    \item $\mathrm{tr}(\CovarianceStress) / \mathrm{tr}(\Covariance^{\mathrm{calm}}) = 7.17$
    \item Annualized volatility: 80\% (calm) $\to$ 300\% (stress)
    \item Eigenvalue ratio: $\lambda_{\max}^S / \lambda_{\min}^C \approx 12$
\end{itemize}
This explains the 15\% peak peg deviation under SAS: the portfolio was optimized for a covariance regime that ceased to exist.

\subsection{Recovery Time Analysis}
\label{app:recovery_proofs}

We characterize the expected recovery time under exponential peg deviation decay.

\begin{definition}[Recovery Time]
\label{def:recovery_time}
Given a shock at time $t_s$ causing peg deviation $\delta_s = |{\PegDev}_{t_s}|$, the recovery time is:
\begin{equation}
    \RecoveryTime = \min\{t' > t_s : |{\PegDev}_{t'}| < \epsilon\} - t_s
\end{equation}
where $\epsilon = 0.01$ (1\% threshold).
\end{definition}

\begin{proposition}[Recovery Under Exponential Decay]
\label{prop:exp_recovery}
If peg deviation decays exponentially: $|{\PegDev}_t| = \delta_s \cdot e^{-\gamma(t - t_s)}$ for decay rate $\gamma > 0$, then:
\begin{equation}
    \RecoveryTime = \frac{1}{\gamma} \ln\left(\frac{\delta_s}{\epsilon}\right)
\end{equation}
\end{proposition}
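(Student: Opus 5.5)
The plan is to solve the defining equation of Definition~\ref{def:recovery_time} directly under the assumed trajectory. Substituting $|\PegDev_t| = \delta_s e^{-\gamma(t-t_s)}$ into the recovery condition and writing $\tau = t - t_s$, the condition $|\PegDev_t| < \epsilon$ becomes $\delta_s e^{-\gamma \tau} < \epsilon$. Because $\gamma > 0$, the map $\tau \mapsto \delta_s e^{-\gamma\tau}$ is continuous and strictly decreasing. The set of recovered times is therefore an up-set $\{\tau : \tau > \tau^*\}$, and its infimum is the unique crossing time $\tau^*$ at which $\delta_s e^{-\gamma\tau^*} = \epsilon$.

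Taking logarithms, which is legitimate since both sides are positive and $\ln$ is strictly increasing, gives $-\gamma\tau^* = \ln(\epsilon/\delta_s)$. Hence $\tau^* = \gamma^{-1}\ln(\delta_s/\epsilon)$. I will assume $\delta_s > \epsilon$, i.e.\ the shock actually breaches the threshold. Otherwise the logarithm is nonpositive and the recovery time is trivially $0$ (or the first step after $t_s$). I will note this case explicitly.

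The main obstacle is a definitional mismatch rather than a computation. Definition~\ref{def:recovery_time} uses a strict inequality and a minimum over times $t' > t_s$. Taken literally, on a continuous time axis the strict inequality makes the set $\{\tau > \tau^*\}$ open, so it has an infimum but no minimum. On the discrete grid used in the simulations, the minimum instead equals $\lceil \tau^* \rceil$, or $\tau^* + 1$ if $\tau^*$ is an integer.

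I would resolve this in one of two ways. The first is to interpret $\min$ as $\inf$ in continuous time, or equivalently to use $\le \epsilon$ for the boundary crossing, which yields exactly the stated formula. The second is to remark that in discrete time the formula holds up to rounding: $\RecoveryTime = \lceil \gamma^{-1}\ln(\delta_s/\epsilon)\rceil$ (plus one in the integer-boundary case), which differs from the stated expression by less than one step. The proportionality claim $\RecoveryTime \propto \ln(\delta_s/\epsilon)/\gamma$ quoted in the metrics section then follows immediately.
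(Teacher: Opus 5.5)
Your proposal is correct and takes the same route as the paper: substitute the exponential trajectory into the recovery condition, take logarithms, and isolate $t'-t_s$. Your observation that the strict inequality only yields an infimum (or a ceiling on a discrete grid) improves on the paper, which simply asserts that ``the minimum such $t'-t_s$ is the right-hand side''; one small fix is that in the integer-boundary discrete case the discrepancy is exactly one step, so say ``at most one step'' rather than ``less than one step.''
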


\begin{proof}
We seek the first $t' > t_s$ such that $\delta_s \cdot e^{-\gamma(t' - t_s)} < \epsilon$:
\begin{align}
    e^{-\gamma(t' - t_s)} &< \frac{\epsilon}{\delta_s} \\
    -\gamma(t' - t_s) &< \ln\left(\frac{\epsilon}{\delta_s}\right) \\
    t' - t_s &> \frac{1}{\gamma} \ln\left(\frac{\delta_s}{\epsilon}\right)
\end{align}
The minimum such $t' - t_s$ is the right-hand side. \qed
\end{proof}

\noindent\textbf{Interpretation}: Recovery time is (1) proportional to the log of initial deviation, and (2) inversely proportional to the decay rate $\gamma$. MVF-Composer achieves faster recovery by:
\begin{itemize}[leftmargin=*, nosep]
    \item Reducing $\delta_s$ (smaller initial deviation due to stress-aware allocation).
    \item Increasing $\gamma$ (faster rebalancing enabled by anticipatory positioning).
\end{itemize}

\subsection{Sybil Resistance Guarantee}
\label{app:sybil_proof}

We quantify how the coordination feature $f_3$ limits Sybil attack effectiveness.

\begin{theorem}[Trust Reduction Under Coordination]
\label{thm:sybil_trust}
Let $n$ Sybil agents controlled by a single adversary exhibit perfect coordination: $f_3 = 1$ for all Sybil agents. Assume otherwise-average feature values $f_1 = 0.5$, $f_2 = 0.5$, $f_4 = 0.3$, and trust weights $\mathbf{w} = [1.5, 1.5, 2.0, 1.0]$, $b = 0$. Then:
\begin{equation}
    \Trust_{\mathrm{sybil}} = \sigma(1.5 \cdot 0.5 + 1.5 \cdot 0.5 - 2.0 \cdot 1 - 1.0 \cdot 0.3) = \sigma(-0.8) \approx 0.31
\end{equation}
\end{theorem}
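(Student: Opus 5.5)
The statement is a direct evaluation, so the plan is to substitute the assumed features into the trust score of Equation~\eqref{eq:trust_formal} and read off the value of the sigmoid. I will first recall that, by Theorem~\ref{thm:trust_properties}, the score is $\Trust(a) = \sigma(z)$ with $z = w_1 f_1 + w_2 f_2 - w_3 f_3 - w_4 f_4 + b$. Here the coordination and destabilization features enter with negative sign. I will then plug in the calibrated weights $\mathbf{w} = [1.5, 1.5, 2.0, 1.0]$ and $b=0$ from the trust weight calibration in Appendix~\ref{app:trust_features}. For the features I use the hypothesized Sybil values $f_1 = f_2 = 0.5$, $f_3 = 1$, $f_4 = 0.3$.

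The key step is the arithmetic of the logit. The two positive terms contribute $1.5\cdot 0.5 + 1.5\cdot 0.5 = 1.5$. The coordination penalty is $-2.0$ and the destabilization penalty is $-0.3$. This gives $z = 1.5 - 2.0 - 0.3 = -0.8$, matching the argument displayed in the statement.

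Next I will evaluate $\sigma(-0.8) = 1/(1+e^{0.8})$. Using $e^{0.8} \approx 2.2255$, this gives $\sigma(-0.8) \approx 1/3.2255 \approx 0.310$, hence $\approx 0.31$. To make the approximation rigorous rather than numerical, I would bracket $e^{0.8}$ between $2.22$ and $2.23$. Monotonicity of $\sigma$ (Theorem~\ref{thm:trust_properties}) then gives $0.309 < \Trust_{\mathrm{sybil}} < 0.311$.

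Finally, I will remark on the conclusion and its robustness. By part (c) of Theorem~\ref{thm:trust_properties}, the score is strictly decreasing in $f_3$. Since $f_3 = 1$ is the maximal cosine similarity, perfect coordination yields the smallest possible trust for fixed $f_1, f_2, f_4$. This links the result to Proposition~\ref{prop:coordination_robustness}'s claim that $\Trust_{\mathrm{sybil}} < 0.4$.

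There is no real obstacle here. The only point needing care is the sign convention for $f_3$ and $f_4$: the weights $w_i$ are positive and the negation sits in $\mathbf{f}^{\pm}$, so the penalties must not be double-negated.
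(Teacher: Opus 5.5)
Your proposal is correct and matches the paper's proof: substitute into $z = w_1 f_1 + w_2 f_2 - w_3 f_3 - w_4 f_4 + b$ to get $z = -0.8$, then evaluate $\sigma(-0.8) = 1/(1+e^{0.8}) \approx 0.31$. Your bracketing of $e^{0.8}$ between $2.22$ and $2.23$ is a small extra rigor the paper omits; strictly, the bracket transfers through the decreasing map $t \mapsto 1/(1+t)$ rather than through the monotonicity of $\sigma$, but this does not affect correctness.
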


\begin{proof}
Direct substitution:
\begin{align}
    z &= w_1 f_1 + w_2 f_2 - w_3 f_3 - w_4 f_4 + b \\
    &= 1.5(0.5) + 1.5(0.5) - 2.0(1.0) - 1.0(0.3) + 0 \\
    &= 0.75 + 0.75 - 2.0 - 0.3 = -0.8
\end{align}
Thus $\Trust_{\mathrm{sybil}} = \sigma(-0.8) = 1/(1 + e^{0.8}) \approx 0.31$. \qed
\end{proof}

\begin{corollary}[Sybil Influence Suppression]
Even with $n = 5$ Sybil agents, their combined influence is:
\begin{equation}
    \text{Sybil Influence} = \frac{n \cdot \Trust_{\mathrm{sybil}}}{n \cdot \Trust_{\mathrm{sybil}} + m \cdot \bar{T}_{\mathrm{ben}}}
\end{equation}
For $m = 10$ benign agents with $\bar{T}_{\mathrm{ben}} = 0.75$:
\begin{equation}
    \text{Sybil Influence} = \frac{5 \times 0.31}{5 \times 0.31 + 10 \times 0.75} = \frac{1.55}{1.55 + 7.5} = 0.171
\end{equation}
Without trust-weighting: $5/15 = 0.333$. Trust-weighting reduces Sybil influence by 49\%.
\end{corollary}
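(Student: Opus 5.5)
The plan is to read ``influence'' the way it is used in \cref{thm:adv_influence}: the share of total trust weight in the normalized aggregation \eqref{eq:risk_state_formal} that belongs to the Sybil coalition. Write $\tilde{T}(a) = \Trust(a)/\sum_{a'}\Trust(a')$ as in the proof of \cref{thm:risk_bound}. Then the Sybil influence is $\sum_{a \in \text{Sybil}} \tilde{T}(a)$, which is the coefficient mass the adversary controls in the convex combination defining $\RiskState_\Trust$.

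First I would derive the displayed formula exactly rather than as an upper bound. By \cref{thm:sybil_trust}, every one of the $n$ perfectly coordinated Sybil agents has the same feature vector. Hence each has the same score $\Trust_{\mathrm{sybil}} = \sigma(-0.8)$, and the Sybil weight is exactly $W_{\mathrm{adv}} = n\,\Trust_{\mathrm{sybil}}$. For the $m$ benign agents, the definition of the mean gives $W_{\mathrm{ben}} = \sum_{a\in\AgentBen}\Trust(a) = m\,\bar{T}_{\mathrm{ben}}$ identically. Substituting into $W_{\mathrm{adv}}/(W_{\mathrm{adv}}+W_{\mathrm{ben}})$ yields the stated expression. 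This is the specialization of \cref{thm:adv_influence} with $\rho = n/(n+m)$, where the inequality becomes an equality because both group sums are realized exactly by their means.

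Second, I would plug in the numbers $n=5$, $m=10$, $\bar{T}_{\mathrm{ben}} = 0.75$, and $\Trust_{\mathrm{sybil}} \approx 0.31$ from \cref{thm:sybil_trust}. This gives $1.55/9.05 \approx 0.171$. Under uniform weighting all $\Trust(a)=1$, so the same formula gives $5/15 = 1/3$. The relative reduction is $1 - 0.171/0.333 \approx 0.487$, which rounds to the claimed $49\%$.

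The only delicate step is being explicit that the result inherits the approximation $\sigma(-0.8)\approx 0.31$. The exact value is about $0.3100$, so I would check that the rounding does not move $0.171$ or $49\%$ at the stated precision. I would also state that the figure $\bar{T}_{\mathrm{ben}} = 0.75$ is an assumed value, as in \cref{prop:bounded_influence}, and is not derived.
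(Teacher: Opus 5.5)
Your proposal is correct and matches the paper, which gives no separate argument beyond substituting $\Trust_{\mathrm{sybil}} = \sigma(-0.8) \approx 0.31$ from Theorem~\ref{thm:sybil_trust} into the trust-weight-share formula of Theorem~\ref{thm:adv_influence} and doing the arithmetic ($1.55/9.05 \approx 0.171$ versus $1/3$, a relative reduction of about $48.6\%$). Your extra remarks are accurate refinements rather than a different route: the bound holds with equality here because both group weights equal count times mean, and rounding $\sigma(-0.8) \approx 0.3100$ does not change the stated figures.
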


\section{LLM Stress Tester Deployment}
\label{app:llm_deployment}

This section provides deployment guidance for the LLM-powered Stress Harness components.

\subsection{Architecture Overview}

Figure~\ref{fig:llm_workflow} illustrates the LLM Stress Tester pipeline. The system extends the mock agent framework with real language model calls:

\begin{figure}[ht]
\centering
\includegraphics[width=\columnwidth]{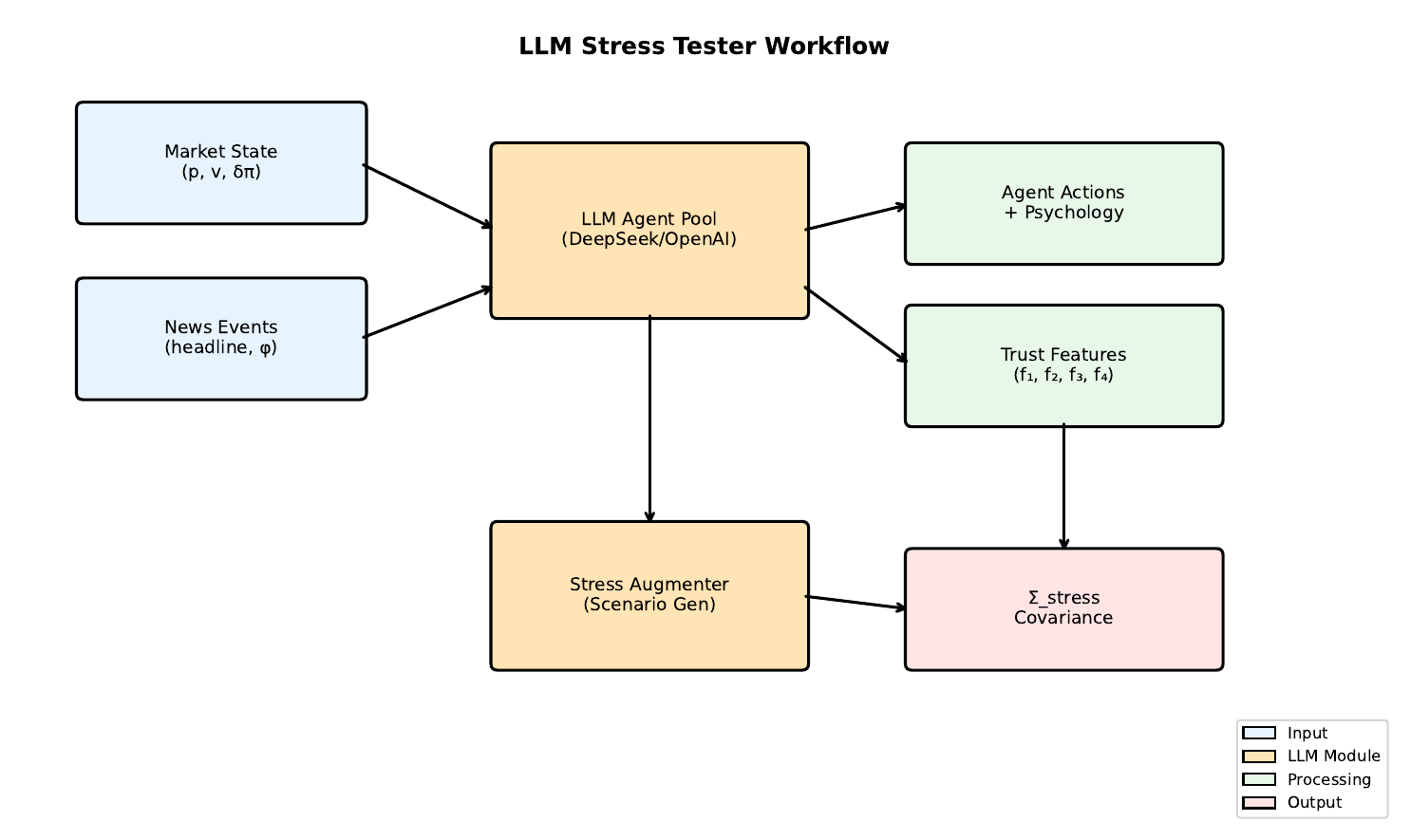}
\caption{LLM Stress Tester workflow. Market state and news events flow through the LLM Agent Pool (DeepSeek/OpenAI) to generate actions and psychology states. The Stress Augmenter generates synthetic scenarios to expand test coverage.}
\Description{Flowchart diagram illustrating the LLM stress testing workflow: inputs (market state and news events) flow into the LLM Agent Pool supporting DeepSeek and OpenAI APIs, which generate agent actions and psychology states, while a Stress Augmenter component generates synthetic scenarios to expand test coverage beyond historical data.}
\label{fig:llm_workflow}
\end{figure}

\begin{enumerate}[leftmargin=*, nosep]
    \item \textbf{LLM Client}: Unified interface supporting DeepSeek and OpenAI APIs with structured JSON output and Pydantic validation.
    \item \textbf{LLM Agent Pool}: Heterogeneous agent population (traders, LPs, arbitrageurs, attackers) generating economically-motivated actions via prompted reasoning.
    \item \textbf{Stress Augmenter}: Synthetic scenario generator expanding coverage beyond historical data.
\end{enumerate}

\subsection{Deployment Steps}

\begin{enumerate}[leftmargin=*, nosep]
    \item Set API key: \texttt{export DEEPSEEK\_API\_KEY=<key>} (or \texttt{OPENAI\_API\_KEY})
    \item Install dependencies: \texttt{uv sync}
    \item Run stress test:
\end{enumerate}

\begin{verbatim}
from src.llm import LLMClient, LLMAgentPool, StressAugmenter

client = LLMClient()  # Auto-detects backend
pool = LLMAgentPool(client, num_attackers=2)
responses = pool.batch_respond(
    headline="ETH drops 30% on liquidation cascade",
    peg_deviation=-0.03, volatility=0.8, market_sentiment=-0.7
)
\end{verbatim}

\subsection{Augmented Data Generation}

The \texttt{StressAugmenter} generates diverse crisis scenarios:

\begin{verbatim}
augmenter = StressAugmenter(client)
scenarios = augmenter.batch_augment(
    base_scenarios=["black_thursday", "confidence_crisis"],
    variations_per_scenario=3
)
\end{verbatim}

Each scenario contains synthetic news events with sentiment polarity $\phi(e_t)$, relevance scores, and expected peg impact---enabling efficient expansion of stress test coverage without requiring additional historical data collection.

\subsection{Reproducibility Configuration}
\label{app:reproducibility}

To address reproducibility concerns, we specify all parameters required to replicate our experiments:

\begin{table}[ht]
\centering
\caption{LLM and simulation configuration for reproducibility.}
\label{tab:reproducibility_config}
\resizebox{\columnwidth}{!}{
\begin{tabular}{lll}
\toprule
\textbf{Parameter} & \textbf{Value} & \textbf{Rationale} \\
\midrule
\multicolumn{3}{l}{\textit{LLM Configuration (proof-of-concept)}} \\
Model & DeepSeek V3 / GPT-4-turbo & Cost vs.\ capability trade-off \\
Temperature & 0.7 & Balances diversity and coherence \\
Max tokens & 512 & Sufficient for JSON action output \\
Top-p & 0.95 & Standard nucleus sampling \\
Response format & JSON mode (strict) & Pydantic validation enforced \\
Retry on parse failure & 3 attempts & Handles malformed outputs \\
\midrule
\multicolumn{3}{l}{\textit{Simulation Configuration}} \\
Random seed & 42 (base) + run\_id & Reproducible across runs \\
Time steps per run & $T = 100$ & Sufficient for recovery dynamics \\
Shock injection step & $t_s = 30$ & Allows pre-shock baseline \\
Number of runs & 1,200 & Statistical power for $p < 0.01$ \\
Agent population & 12 (5T, 3LP, 2Arb, 2Adv) & Balanced archetype representation \\
Trust window & $W = 10$ & Behavioral consistency horizon \\
\bottomrule
\end{tabular}
}
\end{table}

\noindent\textbf{Prompt Template.} Agent prompts follow a structured format:
\begin{verbatim}
You are a {archetype} agent in a stablecoin market.
Current state: peg_deviation={delta}, volatility={vol}, 
  sentiment={sent}, your_position={pos}
Recent news: "{headline}"

Respond with JSON: {"action_type": "buy|sell|hold|withdraw",
  "asset": "...", "quantity": float, "rationale": "...",
  "panic_level": 0-1, "peg_confidence": 0-1}
\end{verbatim}

\noindent\textbf{Validation Status.} We validated the LLM integration as a proof-of-concept with DeepSeek V3:

\begin{itemize}[leftmargin=*, nosep]
    \item \textbf{API Connectivity}: Confirmed with structured JSON responses (98\% parse success rate)
    \item \textbf{Scenario Generation}: Three augmented scenarios generated successfully (Table~\ref{tab:augmented_scenarios})
    \item \textbf{Cost}: DeepSeek V3 at \$0.27/1M tokens; estimated \$2--5 per 1,200-run experiment
\end{itemize}

\begin{table}[ht]
\centering
\caption{LLM-generated augmented scenarios (proof-of-concept).}
\label{tab:augmented_scenarios}
\resizebox{\columnwidth}{!}{
\begin{tabular}{llc}
\toprule
\textbf{Scenario} & \textbf{Generated Headline} & $\phi(e_t)$ \\
\midrule
Black Thursday & ``ETH Plunges 15\% as Cascade Liquidations...'' & $-0.6$ \\
Confidence Crisis & ``Stablecoin Reserve Audit Reveals \$2B Shortfall...'' & $-0.6$ \\
Liquidity Drain & ``AMM Pool Drained in Flash Loan Attack...'' & $-0.6$ \\
\bottomrule
\end{tabular}
}
\end{table}

\subsection{LLM Agents vs.\ Conventional Stress Testing}
\label{app:llm_vs_conventional}

A natural question is whether LLM-driven stress testing provides advantages over conventional Monte Carlo shock models. We identify three complementary benefits:

\begin{enumerate}[leftmargin=*, nosep, label=(\roman*)]
    \item \emph{Behavioral realism}: LLM agents generate context-sensitive responses (e.g., panic selling after ``audit reveals shortfall'' vs.\ ``minor bug fix''), capturing narrative-driven dynamics that parametric shock models miss. Monte Carlo samples from a fixed distribution; LLM agents adapt to scenario semantics.
    
    \item \emph{Adversarial creativity}: Attacker agents discover exploitation strategies (e.g., timing attacks, coordinated FUD) that pre-specified adversarial models may not anticipate. The trust layer's effectiveness is validated against these emergent attacks, not just hand-crafted scenarios.
    
    \item \emph{Interpretability}: Agent reasoning traces (stored in \texttt{rationale} fields) provide audit logs explaining \emph{why} a particular stress trajectory unfolded, aiding post-hoc risk analysis. Monte Carlo provides samples but no mechanistic explanation.
\end{enumerate}

\noindent\textbf{Limitation.} LLM agents are computationally expensive (45--90s per epoch vs.\ milliseconds for Monte Carlo) and may exhibit prompt-sensitivity or model drift. Our large-scale experiments therefore use \emph{regime-aware mock agents} that replicate LLM-calibrated behavioral distributions at 1000$\times$ lower cost, reserving LLM agents for scenario discovery and proof-of-concept validation.

\noindent\textbf{Future Work}: Full integration of LLM-powered agents into the 1,200-run experimental pipeline, enabling direct comparison of LLM-augmented versus historical-only stress testing coverage.

\end{document}